\documentclass[10pt,journal]{IEEEtran}\linespread{1}

\usepackage{srcltx}
\usepackage[psamsfonts]{amsfonts}
\usepackage{amsmath}
\usepackage{amssymb}
\usepackage{graphics}
\usepackage{psfrag}
\usepackage{epsfig}
\usepackage{subfigure}
\usepackage{array}
\usepackage{makeidx}  
\usepackage{amssymb}
\usepackage{bbm}
\usepackage{comment}
\usepackage{amsfonts}
\usepackage{hhline}
\usepackage{mathrsfs}
\usepackage{srcltx}
\usepackage{eufrak}
\usepackage{yfonts}
\usepackage{pifont}
\usepackage{footnote}
\usepackage{fancybox}
\usepackage{hyperref}
\usepackage{algorithm}
\usepackage{algorithmic}
\usepackage{cite}
\usepackage{url}
\usepackage{comment}
\interdisplaylinepenalty=2500 \algsetup{indent=2em}

\newcommand{\ora}{\overrightarrow}

\newcommand{\E}{\ora{E}}

\newcommand{\ve}{\varepsilon}

\newcommand{\vp}{\varphi}
\newcommand{\f}{\mathbbm{F}}
\newcommand{\tee}{\zeta}

\newcommand{\mce}{\mathcal{E}}
\newcommand{\mcs}{\mathcal{S}}
\newcommand{\mcm}{\mathcal{M}}
\newcommand{\mcr}{\mathbf{r}}
\newcommand{\mcW}{\mathcal{W}}
\newcommand{\mcp}{\mathbf{p_b}}
\newcommand{\bfn}{\mathbf{n}}
\newcommand{\bfq}{\mathbf{q}}
\newcommand{\bfnt}{\mathbf{\tilde{n}}}
\newcommand{\nt}{{\tilde{n}}}
\newcommand{\bfs}{\mathbf{s}}
\newcommand{\mcc}{\mathcal{C}}
\newcommand{\lsp}{\textrm{span}}
\newcommand{\lss}{\mathscr{S}}
\newcommand{\rk}{\textrm{rank}}
\newcommand{\ra}{\rightarrow}
\newcommand{\fq}{\mathbbm{F}_q}
\newcommand{\ol}{\overline}
\newcommand{\ul}{\underline}
\newcommand{\mbb}{\mathbbm}
\newcommand{\mc}{\mathcal}
\newcommand{\mg}{\mathbbm{G}}

\interdisplaylinepenalty=2500

\newtheorem{thm}{Theorem}
\newtheorem{cor}{Corollary}

\newtheorem{lem}{Lemma}

\pagenumbering{roman}

\begin{document}
\title{\huge{Throughput and Latency in Finite-Buffer Line Networks}}
\author{Badri N. Vellambi, Nima Torabkhani and Faramarz Fekri}
\pagestyle{plain}
\maketitle
\thispagestyle{plain}
\begin{abstract}
This work investigates the effect of finite buffer sizes on the throughput capacity and packet delay of line networks with packet erasure links that have perfect feedback. These performance measures are shown to be linked to the stationary distribution of an underlying irreducible Markov chain that models the system exactly. Using simple strategies, bounds on the throughput capacity are derived. The work then presents two iterative schemes to approximate the steady-state distribution of node occupancies by decoupling the chain to smaller queueing blocks. These approximate solutions are used to understand the effect of buffer sizes on throughput capacity and the distribution of packet delay. Using the exact modeling for line networks, it is shown that the throughput capacity is unaltered in the absence of hop-by-hop feedback provided packet-level network coding is allowed. Finally, using simulations, it is confirmed that the proposed framework yields accurate estimates of the throughput capacity and delay distribution and captures the vital trends and tradeoffs in these networks.
\end{abstract}
\begin{keywords}
Finite buffer, line network, Markov chain, network coding, packet delay, throughput capacity.
\end{keywords}

\section{Introduction}
In networks, packets have to be routed between nodes through a series of intermediate relay nodes. Each intermediate node in the network may receive packets via multiple data streams that are routed simultaneously from their source nodes to their respective destinations. In such conditions, packets may have to be stored at intermediate nodes for transmission at a later time. If buffers are unlimited, intermediate nodes need not have to reject or drop arriving packets. However, in practice, buffers are limited in size. Although a large buffer size is preferred to minimize packet drops, large buffers have an adverse effect on the \emph{latency}, i.e., the delay experienced by packets stored in the network. Further, using larger buffer sizes at intermediate nodes would also result in secondary practical issues such as increased memory-access latency. Though our work is motivated by such concerns, our work is far from modeling realistic conditions. This work modestly aims at providing a theoretical framework to understand the fundamental limits of single information flow in finite-buffer line networks and investigates the tradeoffs between throughput, packet delay and buffer size.

The problem of computing capacity\footnote{In this work, we use capacity to refer to the throughput capacity, {i.e.}, the supremum of all rates of information flow achievable by any coding scheme.} and designing efficient coding schemes for lossy wired and wireless networks has been widely studied~\cite{AmirDana:capacity, AmirDanaAllerton, PakzadF05, NCAlgApproch, YeungNCJrnl}. However, the study of capacity of networks with finite buffer sizes has been limited. This can be attributed solely to the fact that analysis of finite buffer systems are generally more challenging. With the advent of network coding as an elegant and effective tool for attaining optimum network performance, the interest in finite-buffer networks has increased~\cite{YeungNCJrnl, LunR05, LunM05, LUN04}.

The problem of studying lossy networks with finite buffers has been investigated in the area of queueing theory under a different but similar framework. The queueing theory framework attempts to model packets in a network as customers, the delay due to packet loss over links as service times in the nodes, and the buffer size at intermediate nodes as the queue size. Further, the phenomenon of packet overflow in communications network is modeled by blocking (commonly known as \emph{type II} or \emph{blocking after service}) in queueing networks~\cite{HJPBook}. However, this packet-customer equivalence fails in general network topologies due to the following reason. When the communications network contains multiple disjoint paths from the source to the destination, the source node can choose to duplicate packets on multiple paths to minimize delay. This replicating strategy cannot be captured directly in the customer-server based queueing model. Therefore, the queueing framework cannot be directly applied to study packet traffic in general communications networks. However, queueing theory offers solid foundation for studying buffer occupancies and packet flow traffic in line networks. There has been extensive study in queueing theory literature on the behavior of open tandem queues, which are analogous to line networks~\cite{Tayfur_QT1, Tayfur_QT2, Brandwajn_QT3, SO_QT4, Yves_QT5, Serfozo_QT0}. However, approaches from queueing theory literature predominantly consider a continuous-time model for arrival and departure of customers/packets. In this work, we consider a discrete-time model for packet arrival and departure processes by lumping time into epochs. This model is similar to those in~\cite{VellambiITW, NetCod:Lun}.

The broad contributions of this paper can be summarized as follows. The bulk of this work operates under the assumption of perfect hop-by-hop feedback. We present a Markov-chain based model for exact analysis of line networks. The capacity of a line network is shown to be related to the steady-state distribution of an underlying chain, whose state space grows exponentially in the number of hops in the network. Simple assumptions of renewalness of intermediate packet processes are employed to estimate the capacity of such networks. The estimates are exact for two-hop networks. However, the estimates extend the results of~\cite{NetCod:Lun} to networks of any number of hops and buffer sizes of intermediate nodes. Using the estimates, the profile of packet delay is derived and studied. Using the exact Markov chain model in conjunction with network coding, it is shown that the throughput capacity is not affected by the absence of feedback in line networks. This result is similar to the information-theoretic result that feedback does not increase capacity of point-to-point channels~\cite{Shannon_Feedback}. Finally, simulations reveal that our estimates closely predict the trends and tradeoffs between hop-length, buffer size, latency, and throughput in these networks.

This paper is organized as follows. First, we present the formal definition of the problem and the network model in Section~\ref{FB-sec1}. Next, we present our framework for analyzing capacity of finite-buffer line networks in Section~\ref{FB-sec2}. The proposed Markovian framework is then employed to investigate packet delay in Section~\ref{FB-Delay}. We compare our analytical results with simulations in Section~\ref{FB-sec4} and conclude it with a brief discussion on the inter-dependence of buffer usage, capacity and delay. Finally, Section~\ref{Conc} concludes the paper.

\section{Network Model and Problem Statement}\label{FB-sec1}

This work focuses on the class of line networks. As illustrated in Fig.~\ref{Fig1}, $h$ denotes the number of hops in the network, and $V=\{v_0,v_1,\ldots,v_h\}$ and $\E=\{(v_i,v_{i+1}): i=0,\ldots,h-1\}$ to denote the set of nodes and the set of links in the network, respectively. Such a network has $h-1$ intermediate nodes, which are shown by black squares in the figure. Each intermediate node $v_i$ is assumed to have a buffer of $m_i$ packets. Note that buffer sizes of different nodes can be different. Without loss of generality, we assume $h\geq 2$ and $m_i>0$, for $i=1,\ldots,h-1$. Further, it is assumed that the destination node has no buffer constraints and that the source node possesses an infinitude of innovative packets at all times. The system is analyzed using a discrete-time model, where each node can transmit at most one packet over a link in any epoch. Intermediate buffers are assumed to be empty at epoch $l=0$ and the dynamics for $l\geq 0$ are steered by the loss processes on the edges of the network. The loss process on each link is assumed to be memoryless and statistically independent of the loss processes on other links. We let $\ve_{i+1}\in(0,1)$ to denote the erasure probability on the link $(v_i,v_{i+1})$ for $i=0,\ldots,h-1$. In this model, a node receives a packet on an incoming link when the neighboring upstream node transmits a packet and when the packet is not erased over the link. The reader is directed to Appendix~\ref{App.0-DiscvsCts} for a discussion on how the assumed discrete-time model relates to continuous-time exponential model that is commonly employed in queueing theory.

For the bulk of this work, we assume that the network has a perfect hop-by-hop feedback mechanism indicating the transmitting node of the receipt and storage of the transmitted packet by the receiving node. However, a subsequent section of this paper drops this assumption to study the capacity of line networks without feedback. It is also assumed in this work that nodes operate in a \emph{transmit-first} mode, i.e., each node first generates a packet (if it has a non-empty buffer) and transmits it on the outgoing edge. The node then processes the buffer after receiving the acknowledgement from the next-hop node before accepting/storing the packet on its incoming edge\footnote{Note that the need for such an ordering arises due to the discrete nature of time assumed in this work.}.

 For notational convenience, the random process on the link $(v_{i-1},v_i)$ is denoted by $\{X_i(l)\}_{\mbb{Z}_{\geq 0}}$. $X_i(l)=1$ if and only if the packet transmitted at epoch $l$ is deleted by the channel $(v_{i-1},v_{i})$, and $X_i(l)=0$ otherwise. For the sake of succinctness, we let $\mce\triangleq(\ve_1,\ldots,\ve_h)$ and buffer sizes $\mcm\triangleq(m_1,\ldots,m_{h-1})$.

\begin{figure}[h]
\psfrag{e0}{$\ve_1$} \psfrag{e1}{$\ve_2$} \psfrag{e2}{$\ve_h$} \psfrag{m1}{\hspace{-3mm}$m_1$} \psfrag{m2}{\hspace{-3mm}$m_2$} \psfrag{m3}{\hspace{-6mm}$m_{h-1}$}
\psfrag{v0}{$v_0$} \psfrag{v1}{$v_1$} \psfrag{v2}{$v_2$} \psfrag{v3}{$v_{h-1}$} \psfrag{v4}{$v_h$}
\centering
{\includegraphics[width=3.4in, height=1.0in,angle=0]{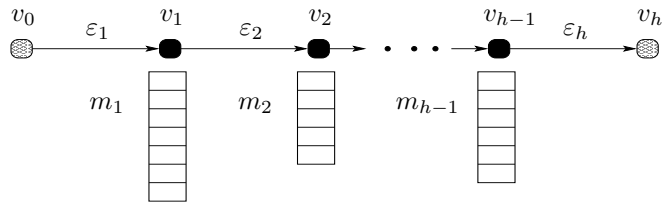}
\caption{An illustration of the line network.}\label{Fig1}}
\end{figure}

The focus of this paper is two-fold. The foremost aim is to identify the \emph{supremum} of all rates that are achievable by the use of any coding strategy between the ends of a line network with erasure probabilities $\mce$ and buffer sizes $\mcm$. In the line network illustrated in Fig.~\ref{Fig1}, we first aim to identify the maximum rate of information that the node $v_0$ can transmit to node $v_h$, which is denoted by $\mcc^{PF}(\mce,\mcm)$. The next issue on which we focus is the delay experienced by packets in intermediate node buffers when the network operates near the throughput capacity.

In our analysis, we employ the following notations. Vectors will be denoted by boldface letters, eg., $\mathbf{r},\mathbf{s}$. The indicator function for the set $\mathbbm{R}_{>0}=(0,\infty)$ is represented by $\sigma[\cdot]$. For any $x\in[0,1]$, $\ol{x}\triangleq 1-x$. The convolution operator is denoted by $\otimes$ and $\otimes^l f$ is used as a shorthand for the $l$-fold convolution of $f$ with itself. For $0<\lambda<1$, $\mg(\lambda)$ denotes the probability mass function of a positive random variable that is geometric with mean $\frac{1}{1-\lambda}$. For a discrete random variable $Z$ with probability mass function $f_Z$, $\langle Z \rangle$ and $\langle f_Z \rangle$ are both used to denote the mean of the random variable $Z$. Lastly, for appropriate $q\in\mathbbm{N}$, $\f_{q}$ denotes the Galois field of size $q$.

\section{Capacity of Line Networks}\label{FB-sec2}
In this section, we investigate the effect of finite buffers on the capacity of line networks. First, we present a framework for exact computation of the capacity of line networks that have perfect hop-by-hop feedback. We then present bounds on the capacity using techniques from queueing theory. Subsequently, we present our approaches to approximate the capacity of a line network. In the concluding subsection, we illustrate that the throughput capacity remains unaltered when feedback is absent provided packet-level network coding is allowed.
\subsection{Exact Computation of Capacity}\label{FB-Sec2.1}
The problem of identifying capacity is related to the problem of identifying schemes that are \emph{rate-optimal}. In the presence of lossless hop-by-hop feedback, the scheme performing the following steps in the given order is rate-optimal.
\begin{itemize}
\item[1.] If the buffer of a node is not empty at an epoch, then it must transmit one of the stored packets at that time.
\item[2.] A node deletes the packet transmitted at an epoch if it receives an acknowledgement of packet storage from the next-hop node at that epoch.
\item[3.] After performing $1$ and $2$, a node accepts an arriving packet if it has space in its buffer and sends an acknowledgment of packet storage to the previous node.
\end{itemize}
Notice that in the above scheme, at each epoch, the buffer of the last intermediate node is updated first, and the buffer of the first intermediate node is updated last. To determine the throughput capacity of the network, we need to track the number of packets that each node possesses at every instant of time by using the rules of buffer update under the above optimal scheme. Let $\mathbf{n}(l)=(n_1(l),\ldots,n_{h-1}(l))$ be the vector whose $i^{\textrm{th}}$ component denotes the number of packets $v_i$ possesses at time $l$. The variation of state at the $l^{\textrm{th}}$ epoch can be tracked using auxiliary random variables $Y_i(l)$ defined by
\begin{align}
Y_i(l)=\hspace{-1.5mm}\left\{\hspace{-1.5mm}\begin{array}{ll}\sigma[n_{i-1}(l)]X_i(l) & i=h\\
X_i(l)\sigma[n_{i-1}(l)(m_{i}-n_{i}(l)+Y_{i+1}(l))] & 1<i<h\\ X_i(l)\sigma[m_{i}-n_{i}(l)+Y_{i+1}(l)] & i=1
 \end{array}\right.\hspace{-2mm}.\label{FB-eqn2}
\end{align}

From the definition of the auxiliary binary random variables in (\ref{FB-eqn2}), we see that $Y_i(l)=1$ only if all the following three conditions are met:
\begin{itemize}
\item[1.] Node $v_{i-1}$ has a packet to transmit to $v_{i}$.
\item[2.] The link $(v_{i-1},v_{i})$ does not erase the packet at the $l^{\textrm{th}}$ epoch, {i.e.}, $X_{i}(l)=1$, and
\item[3.] Node $v_{i}$ is not full after its buffer update due to its transmission over $(v_{i},v_{i+1})$ at the $l^{\textrm{th}}$ epoch.
\end{itemize}
The changes in the buffer states can then by seen to be given by the following.
\begin{equation}
n_i(l+1)=n_i(l)+Y_i(l)-Y_{i+1}(l),\quad\quad 1\leq i < h.\label{FB-eqn3}
\end{equation}
Note that since $\mathbf{Y}(l)=(Y_1(l),\ldots,Y_h(l))$ is a function of $\bfn(l)$ and $\mathbf{X}(l)=(X_1(l),\ldots,X_h(l))$, $\bfn(l+1)$ depends only on its previous state $\bfn(l)$ and the channel conditions $\mathbf{X}(l)$ at the $l^{\textrm{th}}$ epoch. Hence, $\{\bfn(l)\}_{l\in\mathbbm{Z}_{\geq 0}}$ forms a Markov chain. The number of states corresponds to the number of possible assignments to $\mathbf{n}(l)$, which amounts to $\prod_{i=1}^{h-1} (m_i+1)$ possibilities. However, since at each time instant the number of packets that can be transmitted over any link is bounded by unity, we see that for every $i=1,\ldots,h-1\textrm{ and }  l\in\mathbbm{Z}_{\geq 0}$,
\begin{equation}
Y_i(l)\in\{0, 1\}\textrm{\quad and \quad} |n_i(l+1)-n_i(l)|\leq 1.\label{FB-eqn4}
\end{equation}
Therefore, the number of non-zero entries in each row of the probability transition matrix\footnote{The $ij^{\textrm{th}}$ term of the matrix $P(\mce,\mcm)$ represents the probability that the next state is $j$ given that state is presently $i$.} $P(\mce,\mcm)$ representing the transitions in the occupancy is bounded above by $\min(3^{h-1},\,\prod_{i=1}^{h-1}(m_i+1))$.

A detailed categorization of the states $\mathcal{S}$ that enables further understanding can be performed thus. We can order the states of the chain in such a way that the state $(s_1,\ldots,s_{h-1})\in\mathcal{S}$ corresponds to the row index $1+s_1+\sum_{i=2}^{h-1} s_i\prod_{j=1}^{i-1} (m_j+1)$ in the matrix $P(\mce,\mcm)$. Denote $T_\iota$ to be the set of states that have $s_{h-1}=\iota$ for $\iota=0,\ldots,m_{h-1}$. Let $\Gamma_\iota^-,\Omega_\iota,\Gamma_\iota^+$ represent the transition matrices for transitions from states in $T_\iota$ to those in $T_{\iota-1}$, $T_\iota$, $T_{\iota+1}$, respectively. Then, it can be shown that $\Gamma_i^+=\Gamma^+$, $\Omega_i=\Omega$, and $\Gamma_i^-=\Gamma^-$ for $\iota=1,\ldots,m_{h-1}-1$ (see Lemma~\ref{FB-lem1}). Therefore, the transition matrix of the chain can be structurally represented as follows.

\begin{equation}
\small{
\hspace{-0.5mm}{P}(\mce,\mcm)\hspace{-0.5mm}=\hspace{-1mm}\left(\hspace{-0.5mm}{}{ \begin{array}{lcccl}
{}{\Omega_0} \hspace{2.5mm} {}{\Gamma_0^+} \hspace{-4.5mm} &  {\mathbf{0}} \hspace{-4.5mm} & \cdots \hspace{-4.5mm} & \hspace{-4.5mm} & {\mathbf{0}} \\
{}\Gamma^-_1  \hspace{2.5mm} {}\Omega_1 \hspace{-4.5mm} & {}\Gamma^+_1 \hspace{-4.5mm} &  \cdots \hspace{-4.5mm} & \hspace{-4.5mm} &  {\mathbf{0}} \\ 
 {\mathbf{0}} \hspace{4.5mm} {}\Gamma^-_2 \hspace{-4.5mm} & {}\Omega_2 \hspace{-4.5mm} & \cdots \hspace{-4.5mm} & \hspace{-4.5mm} & {\mathbf{0}}  \\
 \hspace{-4.5mm} \hspace{2.5mm} & \vdots \hspace{-4.5mm} & \hspace{-4.5mm} & \hspace{-4.5mm} & \\
  {\mathbf{0}} \hspace{2.5mm} \hspace{-4.5mm} & \cdots \hspace{-4.5mm} & {}\Gamma^-_{m_{h-1}-1} \hspace{-4.5mm} & {}\Omega_{m_{h-1}-1} \hspace{-4.5mm} & {}\Gamma^+_{m_{h-1}-1} \\
 {\mathbf{0}} \hspace{2.5mm} \hspace{-4.5mm} & \cdots \hspace{-4.5mm} & {\mathbf{0}}\hspace{-4.5mm} & {}\Gamma_{m_{h-1}}^- \hspace{-4.5mm} & {}{\Omega_{m_{h-1}}}
 \end{array}}\hspace{-3.5mm}
  \right)\hspace{-1mm}. \nonumber}
\end{equation}

The dynamics given by the above equation can be depicted pictorially by the chain in Fig.~\ref{Fig2}. Note that due to the finite buffer condition and the non-negativity of occupancy, the transitions from the first block and from the last block differ from the transitions from the blocks between them. Further, the states within each $T_i$, $i=0,\ldots,m_{h-1}$, can be organized into $m_{h-2}+1$ sets in a similar fashion. In addition to this structural property, the transition sub-matrices satisfy the following algebraic properties.
\begin{figure}[ht!]
\psfrag{var1}{${\Omega_0}$}
\psfrag{var2}{$\vspace{6mm}{\Omega_1}$}
\psfrag{var3}{${\Omega_{m_{h-1}}}$}
\psfrag{M1}{${\Gamma_0^+}$}
\psfrag{M2}{$\vspace{4mm}{\Gamma^-_1}$}
\psfrag{M3}{${\Gamma^+_1}$}
\psfrag{M4}{$\vspace{4mm}{\Gamma^-_1}$}
\psfrag{M5}{$\vspace{2mm}\hspace{-2mm}{\Gamma_{m_{h-1}}^-}$}
\psfrag{S1}{$T_0$}
\psfrag{S2}{$T_1$}
\psfrag{S3}{$\hspace{-1mm}T_2$}
\psfrag{S4}{\hspace{-3.25mm}{$T_{m_{h-1}-1}$}}
\psfrag{S5}{\hspace{-3mm}$T_{m_{h-1}}$}
\centering
\includegraphics[width=3.4in,angle=0]{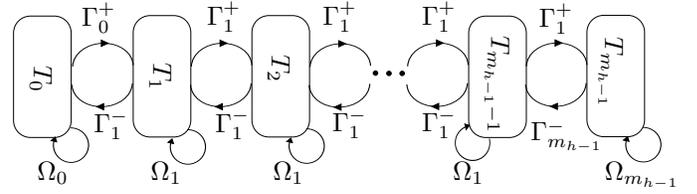}
\caption{The Markov chain for the dynamics of occupancy in a line network.} \label{Fig2}
\end{figure}

\begin{lem}\label{FB-lem1}
In a generic line network, the following hold.
\begin{itemize}
\item[a.] $\Gamma_i^+=\Gamma^+_1$, $\Omega_i=\Omega_1$, and $\Gamma_i^-=\Gamma^-_1$ for $i=1,\ldots,m_{h-1}-1$.
\item[b.] For $h\geq 2$, $\Gamma_i^-$ is non-singular and upper triangular for $i=1,\ldots, m_{h-1}$.
\item[c.] For $h>2$, $\Gamma_i^+$ is singular and lower triangular for $i=0,\ldots, m_{h-1}-1$ .
\item[d.] $I-\Omega_i$ is non-singular $i=0,\ldots, m_{h-1}$.
\end{itemize}
\end{lem}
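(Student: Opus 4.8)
The plan is to read off the structure of the transition submatrices directly from the buffer-update equations~(\ref{FB-eqn2})--(\ref{FB-eqn3}), and then verify the four algebraic claims by elementary arguments about when certain transitions have zero probability.

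For part~(a), the key observation is that the auxiliary variable $Y_h(l)$ defined in~(\ref{FB-eqn2}) depends on $n_{h-1}(l)$ only through $\sigma[n_{h-1}(l)]$, i.e.\ only through whether $v_{h-1}$ is empty or not, and that $Y_{h-1}(l)$ depends on $n_{h-1}(l)$ only through $\sigma[m_{h-1}-n_{h-1}(l)+Y_h(l)]$, i.e.\ only through whether $v_{h-1}$ is full after its own transmission. For any $\iota$ with $1\leq\iota\leq m_{h-1}-1$, the starting value $n_{h-1}(l)=\iota$ is neither $0$ nor $m_{h-1}$, so both indicator expressions take a value independent of the precise $\iota$ (the first is $1$; the second is $1$ iff $Y_h(l)=1$, i.e.\ iff $v_{h-1}$ actually sent a packet downstream). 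Hence the joint law of $(Y_1(l),\ldots,Y_h(l))$ conditioned on $n_{h-1}(l)=\iota$ and on the states $(n_1(l),\ldots,n_{h-2}(l))$ of the upstream nodes is the same for all such $\iota$; since the increments $n_i(l+1)-n_i(l)$ for $i<h-1$ are determined by this joint law and $n_{h-1}(l+1)-n_{h-1}(l)=Y_{h-1}(l)-Y_h(l)\in\{-1,0,1\}$ likewise, the blocks $\Gamma_\iota^-,\Omega_\iota,\Gamma_\iota^+$ do not depend on $\iota$ in that range. This is essentially a translation-invariance argument and I expect it to be the conceptual heart of the lemma, though not technically hard once the indicator-dependence is isolated.

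For parts (b) and (c), the plan is to exploit the chosen state ordering: within a fixed block $T_\iota$, states are indexed by $(s_1,\ldots,s_{h-2})$ via the lexicographic-type weight $1+s_1+\sum_{i\geq 2}s_i\prod_{j<i}(m_j+1)$, so ``triangular'' will be read with respect to the value of $s_{h-2}$ (the occupancy of $v_{h-2}$) as the dominant coordinate. A transition lies in $\Gamma_\iota^-$ exactly when $n_{h-1}$ decreases, which by~(\ref{FB-eqn3}) forces $Y_{h-1}(l)=0$ and $Y_h(l)=1$; $Y_h(l)=1$ requires $v_{h-1}$ nonempty and the last link to not erase, and then the new occupancy of $v_{h-2}$ is $n_{h-2}(l)+Y_{h-2}(l)-Y_{h-1}(l)=n_{h-2}(l)+Y_{h-2}(l)\geq n_{h-2}(l)$, so in a $\Gamma_\iota^-$ transition the $v_{h-2}$-coordinate never decreases, giving upper-triangularity; the diagonal entries correspond to $Y_{h-2}(l)=0$ and carry strictly positive probability (the relevant links erase), so $\Gamma_\iota^-$ is nonsingular. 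Symmetrically, a transition lies in $\Gamma_\iota^+$ exactly when $n_{h-1}$ increases, forcing $Y_{h-1}(l)=1$, whence the $v_{h-2}$-coordinate becomes $n_{h-2}(l)+Y_{h-2}(l)-1\leq n_{h-2}(l)$, i.e.\ lower-triangular; singularity for $h>2$ follows because the row corresponding to $n_{h-2}(l)=0$ is identically zero in $\Gamma_\iota^+$ (when $v_{h-2}$ is empty it cannot feed $v_{h-1}$, so $Y_{h-1}(l)=1$ is impossible unless $h=2$). I would phrase this as an induction on the ordering coordinates, or simply note that the ``dominant coordinate'' argument suffices for triangularity while singularity/nonsingularity needs only the one distinguished row.

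For part (d), I would argue probabilistically rather than by direct computation: $\Omega_\iota$ is a strict substochastic matrix because from any state in $T_\iota$ there is positive probability of leaving $T_\iota$ (e.g.\ when $\iota<m_{h-1}$, a fresh packet can arrive at $v_{h-1}$ with the last link erasing, moving to $T_{\iota+1}$; when $\iota=m_{h-1}>0$, $v_{h-1}$ can successfully deliver downstream, moving to $T_{\iota-1}$; the degenerate case $m_{h-1}$ with only one block, i.e.\ $m_{h-1}=0$, is excluded since $m_i>0$). A finite substochastic matrix with at least one deficient row in every irreducible-communicating component has spectral radius strictly less than~$1$, so $I-\Omega_\iota$ is nonsingular. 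Alternatively one can cite that $P(\mce,\mcm)$ is irreducible (argued elsewhere in the paper) so no diagonal block can itself be stochastic. The main obstacle I anticipate is bookkeeping: making the state-ordering precise enough that ``triangular'' is unambiguous and checking that the distinguished zero/positive rows are the ones claimed; the underlying ideas are all short.
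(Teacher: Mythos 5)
Your plan for parts (a)--(c) is essentially the paper's own proof: the translation-invariance argument in (a) comes from isolating how the two indicators in~(\ref{FB-eqn2}) depend on $n_{h-1}(l)$, and the (lower/upper)-triangularity in (b) and (c) comes from tracking the sign of $Y_{h-1}(l)-Y_h(l)$ and its effect on the ordering index. One correction to the wording in (a): when $1\leq\iota\leq m_{h-1}-1$, the indicator $\sigma[m_{h-1}-n_{h-1}(l)+Y_h(l)]$ is identically~$1$ (since $m_{h-1}-\iota\geq 1>0$ regardless of $Y_h$); it is \emph{not} ``$1$ iff $Y_h(l)=1$''. This actually makes the argument cleaner --- the paper writes $Y_{h-1}=\sigma[n_{h-2}(l)]X_{h-1}$ in that range --- and your stated conclusion (independence from $\iota$) is unaffected, but as written the claim is false. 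For (b)--(c), you should carry out the chain you gesture at: observing that the dominant coordinate $s_{h-2}$ does not decrease is not by itself enough when it stays fixed; you then need $Y_{h-2}(l)=0$, which propagates the same monotonicity to $s_{h-3}$, and so on inductively --- this is the content of the paper's displayed inequality on the ordering index. Your observation that an entire block of rows of $\Gamma^+_\iota$ (those with $n_{h-2}(l)=0$) vanishes when $h>2$ is correct and slightly stronger than the paper's use of only the $(1,1)$ entry.

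Part (d) is where you genuinely diverge from the paper, which invokes diagonal dominance of $I-\Omega_\iota$ together with strictness in at least one row. Your substochastic/spectral-radius route is a legitimate alternative, but it contains a concrete error: the claim that ``from any state in $T_\iota$ there is positive probability of leaving $T_\iota$'' fails for $h>2$. Take the all-zero state $(0,\ldots,0)\in T_0$: then $Y_h(l)=\sigma[0]X_h(l)=0$ and $Y_{h-1}(l)=X_{h-1}(l)\sigma[0\cdot(\cdot)]=0$, so $n_{h-1}(l+1)=0$ with probability one and that row of $\Omega_0$ sums to exactly~$1$. So not every row of $\Omega_\iota$ is deficient. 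The fix is exactly your hedged statement: one needs only that every state communicate (within $T_\iota$) with a deficient row, or equivalently that $\Omega_\iota$ is a proper principal submatrix of the irreducible stochastic matrix $P(\mce,\mcm)$, which by Perron--Frobenius has spectral radius strictly below~$1$. That version is correct and arguably cleaner than the paper's diagonal-dominance argument (which, as stated, also quietly leans on irreducibility), but you should drop the false ``one-step escape'' claim and rely on the reachability/proper-submatrix statement instead.
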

\begin{proof}
See Appendix~\ref{App.1-1}.
\end{proof}

\psfrag{aa}{\footnotesize{$\varepsilon_1$}}
\psfrag{bb}{\footnotesize{$\overline{\varepsilon}_1$}}
\psfrag{cc}{\footnotesize{$\varepsilon_2$}}
\psfrag{dd}{\footnotesize{$\overline{\varepsilon}_2$}}
\psfrag{ee}{\footnotesize{$\varepsilon_3$}}
\psfrag{ff}{\footnotesize{$\overline{\varepsilon}_3$}}
\psfrag{bbcc}{\footnotesize{$\overline{\varepsilon}_1\varepsilon_2$}}
\psfrag{bbdd}{\footnotesize{$\overline{\varepsilon}_1\overline{\varepsilon}_2$}}
\psfrag{bbff}{\footnotesize{$\overline{\varepsilon}_1\overline{\varepsilon}_3$}}
\psfrag{bbee}{\footnotesize{$\overline{\varepsilon}_1\varepsilon_3$}}
\psfrag{aadd}{\footnotesize{$\varepsilon_1\overline{\varepsilon}_2$}}
\psfrag{aaee}{\footnotesize{$\varepsilon_1{\varepsilon}_3$}}
\psfrag{aaff}{\footnotesize{$\varepsilon_1\overline{\varepsilon}_3$}}
\psfrag{aacc}{\footnotesize{$\varepsilon_1\varepsilon_2$}}
\psfrag{ccff}{\footnotesize{$\varepsilon_2\overline{\varepsilon}_3$}}
\psfrag{aaddee}{\footnotesize{$\varepsilon_1\overline{\varepsilon}_2\varepsilon_3$}}
\psfrag{aaddff}{\footnotesize{$\varepsilon_1\overline{\varepsilon}_2\overline{\varepsilon}_3$}}
\psfrag{aaccff}{\footnotesize{$\varepsilon_1{\varepsilon}_2\overline{\varepsilon}_3$}}
\psfrag{bbccff}{\footnotesize{$\overline{\varepsilon}_1{\varepsilon}_2\overline{\varepsilon}_3$}}
\psfrag{bbccee}{\footnotesize{$\overline{\varepsilon}_1{\varepsilon}_2{\varepsilon}_3$}}
\psfrag{bbddee}{\footnotesize{$\overline{\varepsilon}_1\overline{\varepsilon}_2{\varepsilon}_3$}}
\psfrag{bbddff+aaee}{\footnotesize{$\overline{\varepsilon}_1\overline{\varepsilon}_2\overline{\varepsilon_3}+{\varepsilon}_1{\varepsilon}_3$}}
\psfrag{bbddff+aaccee}{\footnotesize{$\overline{\varepsilon}_1\overline{\varepsilon}_2\overline{\varepsilon_3}+{\varepsilon}_1{\varepsilon}_2{\varepsilon}_3$}}
\psfrag{ccee+bbdd}{\footnotesize{$\overline{\varepsilon}_1\overline{\varepsilon}_2+{\varepsilon}_1{\varepsilon}_3$}}
\psfrag{ccee+bbddff}{\footnotesize{$\overline{\varepsilon}_1\overline{\varepsilon}_2\overline{\varepsilon}_3+{\varepsilon}_2{\varepsilon}_3$}}
\psfrag{ee+bbddff}{\footnotesize{$\overline{\varepsilon}_1\overline{\varepsilon}_2\overline{\varepsilon}_3+{\varepsilon}_3$}}
\psfrag{bbddff+aaccee}{\footnotesize{$\overline{\varepsilon}_1\overline{\varepsilon}_2\overline{\varepsilon}_3+{\varepsilon}_1{\varepsilon}_2{\varepsilon}_3$}}
\psfrag{gg}{\small{$00$}}
\psfrag{hh}{\small{$01$}}
\psfrag{ii}{\small{$02$}}
\psfrag{jj}{\small{$12$}}
\psfrag{kk}{\small{$11$}}
\psfrag{ll}{\small{$10$}}
\psfrag{mm}{\small{$20$}}
\psfrag{nn}{\small{$21$}}
\psfrag{oo}{\small{$22$}}
\psfrag{T0}{\small{$T_0$}}
\psfrag{T1}{\small{$T_1$}}
\psfrag{T2}{\small{$T_2$}}
\begin{figure*}[ht!]
\centering
\includegraphics[height=3.5in,width=4.5in,angle=0]{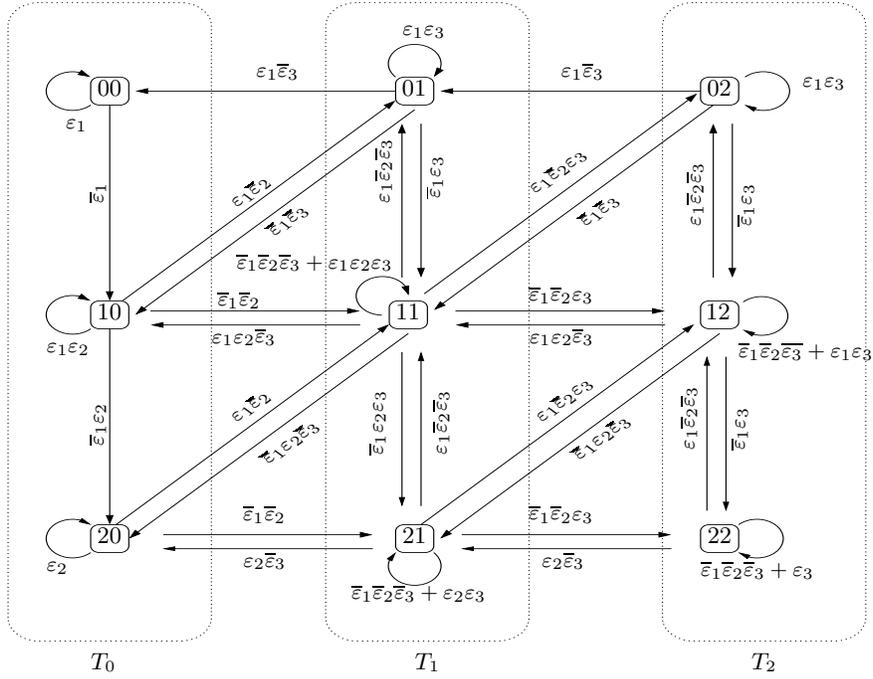}
\caption{Markov chain for a line network of three hops with erasure probabilities $\ve_1, \ve_2, \ve_3$ and intermediate nodes having a buffer size of two packets each.} \label{FB-MC}
\end{figure*}
To illustrate the implications of the above lemma, consider the Markov chain for a three-hop line network with erasure probabilities $\mce=(\ve_1,\ve_2,\ve_3)$, and with buffer sizes $\mcm=(2,2)$ presented in Fig.~\ref{FB-MC}. For this network, the algebraic properties of Lemma~\ref{FB-lem1} can be understood as follows.
 \begin{itemize}
 \item[1.] Any transition involving a decrease in the second component involves a non-negative change in the magnitude of the first component.
 \item[2.] Any horizontal transition involving a decrease in the second component is always feasible (provided the second component of the starting state is positive).
 \item[3.] Any transition involving an increase in the second component involves a non-positive change in the magnitude of the first component.
 \item[4.] Not all horizontal transitions involving an increase in the second component are feasible. For example, the  transitions from the state $(0,0)$ to $(0,1)$ and from the state $(0,1)$ to $(0,2)$ are infeasible, and hence  $(\Gamma_0^+)_{11}=(\Gamma_1^+)_{11}=0$.
 \end{itemize}

While the first two facts relate to the upper triangular structure and non-singularity of $\Gamma^-$, the latter two relate to the lower triangular and singularity properties of $\Gamma^+$.  This Markov chain for the dynamics of the state of a line network with perfect feedback is \emph{irreducible, aperiodic, positive-recurrent}, and \emph{ergodic}~\cite{MCBook, Feller1957}. By ergodicity, we can obtain temporal averages by statistical averages. Therefore, the throughput capacity $\mcc^{PF}(\mce,\mcm)$ can be identified by appropriately scaling the likelihood of the event that the system is in a state wherein the last node buffer is non-empty. This quantity is given by
 \begin{align}
 \mcc^{PF}(\mce,\mcm)&=\ol\ve_h\Pr[\{\bfs\in\mcs: s_{h-1}>0\}] \label{FB-eqn5.1}
 \end{align}
Notice that packets are not erased from the buffers without a receipt of acknowledgement of storage from the next-hop node. Therefore, the packet-flow rate is conserved. Therefore, for $0<i<h-1$, the throughput capacity can also be identified from
\begin{align}
 \mcc^{PF}(\mce,\mcm)&= \ol\ve_{i+1}\Pr\bigg[\Big\{\bfs\in\mcs: \begin{array}{l}s_{i}>0\\ s_{i+1}<m_{i+1}\end{array}\Big\}\bigg]. \label{FB-eqn5.1b}
 \end{align}
 Thus, the problem of identifying the capacity of line networks is reduced to the problem of computing the steady-state probabilities of the aforementioned Markov chain. However, due to the size of the Markov chain and its transition matrix, and the presence of multiple reflections due to the limited buffers at intermediate nodes, the problem of computing the steady-state distribution and capacity is computationally tedious even for networks of reasonable hop-lengths and buffer sizes.

 As the first step towards estimation, we can define a finite sequence of matrices by
\begin{equation}
H_i=\left\{\begin{array}{ll}I & i=0 \\ {\Gamma_1^-}^{-1}(I-\Omega_0) & i=1 \\
{\Gamma^-_{i}}^{-1}\big((I-\Omega_{i-1})H_{i-1} -\Gamma^+_{i-2}H_{i-2}\big) & 1<i \leq m_{h-1}
\end{array}\right..\nonumber
\end{equation}
Note that these matrices relate the steady-state distribution $\pi_{T_i}$ of the states in $T_i$, $i=0,\ldots,m_{h-1}$ by $ \pi_{T_i}=H_i\pi_{T_0}$. Using these relations, we can, in theory, estimate the capacity by
\begin{equation}
\mcc^{PF}(\mce,\mcm)\leq \ol\ve_h\Big(1-\frac{1}{||\sum_{j=0}^{m_{h-1}} H_j ||_1}\Big)
\end{equation}
 However, this matrix-norm approach does not provide insight into occupancy statistics of various nodes. Therefore, we focus on an approximations-based approach to capacity estimation in the remainder of this work.

\subsection{Bounds on the Capacity of Line Networks}\label{FB-Sec2.2}
In queueing theory, problems of identifying the steady-state probability of stochastic networks have often been dealt with approximations. Most approaches to problems in this area have been to approximate the dynamics of the network by focussing on local dynamics of the network around each node and the edges incident with it. The key idea in this section is to modify the exact Markov chain to derive bounds on throughput capacity. To do so, notice that the main reason for intractability of the exact system is the strong dependence of $Y_i(l)$ on not only $Y_{i-1}(l)$, but also $Y_{i+1}(l)$. This dependence translates to a strong dependence of $n_i(l)$ on both $n_{i-1}(l)$ and $n_{i+1}(l)$. Relaxation of this strong dependence will be a step towards possible decoupling of the system, and a deeper understanding of the tradeoffs in such networks.

Consider a network operation mode where each intermediate note transmits an acknowledgement whenever it \emph{receives} a packet (as opposed to the rate-optimal setting where it sends an acknowledgement whenever it \emph{receives and stores} a packet successfully). Under this new mode of operation, we notice that the dependence of the state of $i^{\textrm{th}}$ node on that of nodes further downstream is eliminated. This mode of operation is equivalent to assuming that a packet that arrives at a node whose buffer is full gets lost/dropped unlike the optimal mode of operation where it gets re-serviced. In this mode, the state updates are given by a simplified Markov chain that is generated by the following rule for all $l\in\mathbbm{Z}_{\geq 0}$ and $1\leq i<h$.
\begin{eqnarray}
\nt_i(l+1)= \nt_i(l)\hspace{-0.45mm}+\hspace{-0.45mm} \tilde{Y}_i(l)\sigma[m_i\hspace{-0.45mm}-\hspace{-0.45mm}\nt_i(l)\hspace{-0.45mm}+\hspace{-0.45mm}\tilde{Y}_{i+1}(l)]\hspace{-0.45mm}-\hspace{-0.45mm}\tilde{Y}_{i+1}(l), \hspace{-1.5mm}\label{AMCeqn}
\end{eqnarray}
where
\begin{eqnarray}
\tilde{Y}_i(l)=\left\{\begin{array}{ll}
\sigma[\nt_{i-1}(l)]X_i(l) & 1<i\leq h\\
X_i(l) & i=1\end{array}\right.. \label{FB-eqn6}
\end{eqnarray}

To avoid confusion, we appellate the chain that is obtained by the dynamics defined by (\ref{FB-eqn2}) and (\ref{FB-eqn3}) as the Exact Markov Chain (EMC) and the one defined by (\ref{AMCeqn}) and (\ref{FB-eqn6}) as the Approximate Markov Chain (AMC). Also, we allow $\bfn(l)$ and $\bfnt(l)$ to always denote the state of an instance of the process generated by the EMC and the AMC, respectively. Then, the following property holds.

\begin{thm}\emph{(Temporal Boundedness Property of the AMC)}\label{FB-thm1}
Consider a line network with $h$ hops and an instance of channel realizations $\{X_i(l):i=1,\ldots, h\}_{l\in\mathbbm{Z}_{\geq 0}}$. Suppose we track the variation of the states of the EMC and the AMC using this instance of channel realizations with the same initial state $\bfn(0)=\bfnt(0)$. Then, for any $l\in\mathbbm{Z}_{\geq 0}$ and $1\leq i\leq h-1$, the following holds.
\begin{equation}
n_i(l)\geq \tilde{n}_i(l).
\end{equation}
\end{thm}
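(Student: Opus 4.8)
The natural approach is induction on the epoch $l$, carried along simultaneously for all nodes $i=1,\dots,h-1$. The base case $l=0$ is immediate since $\bfn(0)=\bfnt(0)$ by hypothesis. For the inductive step, I would assume $n_i(l)\geq \nt_i(l)$ for every $i$ and try to deduce the same inequality at epoch $l+1$ from the update rules \eqref{FB-eqn3} and \eqref{AMCeqn}, which share the same channel realization $\mathbf{X}(l)$. The key is to compare the increments $n_i(l+1)-n_i(l)$ and $\nt_i(l+1)-\nt_i(l)$. Since both quantities lie in $\{-1,0,1\}$ by \eqref{FB-eqn4}, a clean way to organize the argument is a telescoping/monotonicity lemma: show that the departure variable of the AMC dominates that of the EMC at node $i+1$ while the arrival (stored-packet) variable of the EMC dominates that of the AMC at node $i$, whenever the occupancies at $l$ already satisfy the claimed ordering. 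Concretely, comparing $\tilde Y_i(l)\sigma[\,m_i-\nt_i(l)+\tilde Y_{i+1}(l)\,]$ (the net stored arrival in the AMC) with $Y_i(l)$ (the net stored arrival in the EMC) and comparing $\tilde Y_{i+1}(l)$ with $Y_{i+1}(l)$ (the departures) should give the step.

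The heart of the matter is a careful case analysis distinguishing, at node $i$, whether the inductive hypothesis holds with equality ($n_i(l)=\nt_i(l)$) or with slack ($n_i(l)\geq \nt_i(l)+1$). In the slack case, since increments are bounded by $1$ in absolute value, $n_i(l+1)\geq n_i(l)-1\geq \nt_i(l)\geq \nt_i(l+1)-1$... — more precisely $n_i(l+1)\geq \nt_i(l)+1-1 = \nt_i(l)\ge \nt_i(l+1)$ needs the AMC increment to be at most the EMC increment plus the gap, so one still has to rule out the bad configuration where the EMC loses a packet and the AMC gains one while the gap is exactly one; here one uses that a departure in the EMC at node $i$, i.e. $Y_{i+1}(l)=1$, forces $X_{i+1}(l)=1$ and $n_i(l)>0$, which by the inductive hypothesis and the definition \eqref{FB-eqn6} forces $\tilde Y_{i+1}(l)=1$ as well provided $\nt_i(l)>0$; the subcase $\nt_i(l)=0$ must be handled separately using that the AMC increment is then nonnegative only if... — this is exactly the place where the EMC's re-servicing versus the AMC's dropping must be exploited. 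In the equality case $n_i(l)=\nt_i(l)$, the argument is more delicate: I would need $Y_i(l)-Y_{i+1}(l)\geq \tilde Y_i(l)\sigma[m_i-\nt_i(l)+\tilde Y_{i+1}(l)]-\tilde Y_{i+1}(l)$, and here the blocking/overflow terms matter. When node $i$ is not full the two arrival contributions coincide (both equal $\sigma[n_{i-1}(l)]X_i(l)=\sigma[\nt_{i-1}(l)]X_i(l)$, using the hypothesis at $i-1$ to equate the indicators — note $n_{i-1}(l)\geq\nt_{i-1}(l)$ means $\sigma[\nt_{i-1}]=1\Rightarrow\sigma[n_{i-1}]=1$, giving the needed direction), and the departure comparison at $i+1$ closes it; when node $i$ is full, the EMC's arrival indicator in \eqref{FB-eqn2} carries the extra factor $\sigma[m_i-n_i(l)+Y_{i+1}(l)]$, and one must check the two chains still agree or that any discrepancy points the right way.

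I expect the main obstacle to be the bookkeeping of the doubly-nested dependence: $Y_i(l)$ depends on $Y_{i+1}(l)$ in the EMC, so the comparison at node $i$ cannot be done in isolation — it couples to the comparison at node $i+1$, which in turn couples to $i+2$, etc. The cleanest remedy is to prove, as an auxiliary claim by downward induction on $i$ (from $i=h$ to $i=1$, at the fixed epoch $l$), the pair of inequalities ``$Y_i(l)\le \tilde Y_i(l)$ whenever a packet actually departs'' together with the occupancy domination, rather than trying to push the single inequality $n_i(l+1)\ge\nt_i(l+1)$ through directly. In other words: first fix $l$, assume $n_j(l)\ge\nt_j(l)$ for all $j$, then establish by backward recursion in $i$ that $\tilde Y_{i+1}(l)\ge Y_{i+1}(l)$ or at least that the combination $Y_i(l)-Y_{i+1}(l)$ is dominated appropriately, and only then conclude $n_i(l+1)\ge\nt_i(l+1)$ for all $i$, completing the forward induction on $l$. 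The boundary terms $i=1$ (no upstream constraint beyond $X_1(l)$) and $i=h$ (no buffer, $Y_h$ only needs $\sigma[n_{h-1}(l)]$) should be routine once the recursion is set up. I would also remark that the same instance of $\{X_i(l)\}$ drives both chains is essential and used at every step.
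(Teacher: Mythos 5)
Your overall plan — forward induction on the epoch $l$, with a case split at each node $i$ between the equality case $n_i(l)=\nt_i(l)$ and the slack case $n_i(l)\geq\nt_i(l)+1$, exploiting the $\{-1,0,1\}$ bound on increments — is exactly the structure of the paper's proof (Appendix~\ref{App.1-2}). You have also correctly identified the two delicate points: that the slack case with gap exactly one still requires ruling out an EMC-loses/AMC-gains configuration, and that in the equality case the arrival indicators compare via $\sigma[\nt_{i-1}]=1\Rightarrow\sigma[n_{i-1}]=1$, i.e.\ $Y_i(l)\geq\tilde Y_i(l)$ in the needed direction.

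However, your proposed remedy for the $Y_i\!\to\!Y_{i+1}\!\to\!Y_{i+2}\!\to\cdots$ coupling — a backward recursion in $i$ at fixed $l$, anchored on an auxiliary claim of the flavor ``$Y_i(l)\leq\tilde Y_i(l)$ whenever a packet departs'' — is both unnecessary and, as stated, not correct. Domination of $Y_{i+1}(l)$ by $\tilde Y_{i+1}(l)$ is \emph{not} a consequence of the occupancy domination $n_i(l)\geq\nt_i(l)$ alone: if $n_i(l)>0=\nt_i(l)$ then $\tilde Y_{i+1}(l)=\sigma[\nt_i(l)]X_{i+1}(l)=0$ while $Y_{i+1}(l)$ can equal $1$. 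The paper sidesteps the recursion entirely by comparing $\tilde Y_{i+1}(k)$ and $Y_{i+1}(k)$ \emph{case by case}: in the equality case, $\tilde Y_{i+1}(k)=\sigma[\nt_i(k)]X_{i+1}(k)=\sigma[n_i(k)]X_{i+1}(k)\geq Y_{i+1}(k)$ because $Y_{i+1}(k)$ carries extra nonnegative factors inside its $\sigma[\cdot]$; in the slack case with $\nt_i(k)>0$, the crude bound $\tilde Y_{i+1}(k)=X_{i+1}(k)\geq Y_{i+1}(k)$ suffices; and in the slack case with $\nt_i(k)=0$, the only violating configuration forces $X_i(k)=X_{i+1}(k)=1$, $n_i(k)=1$, and $\nt_{i-1}(k)>0$, which via the induction hypothesis at $i-1$ makes $Y_i(k)=1$ and so $n_i(k+1)\geq n_i(k)\geq 1$ after all. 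In short, the argument never needs a global or downward-recursive monotonicity of $Y$ versus $\tilde Y$; it needs only the local comparison at node $i+1$ conditioned on the case at node $i$, plus the induction hypothesis at $i-1$ for the arrival indicator. You had all the pieces in hand; the ``doubly-nested dependence'' you flagged dissolves once you realize that $Y_{i+1}(k)\leq X_{i+1}(k)$ always, and that $\tilde Y_{i+1}(k)$ equals $\sigma[\nt_i(k)]X_{i+1}(k)$, so the comparison at node $i$ only looks one step downstream.
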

\begin{proof}
The proof is detailed in Appendix~\ref{App.1-2}.
\end{proof}

 The Temporal Boundedness Property guarantees that statistically, the probability that a node has an empty buffer is overestimated by the AMC. In fact, if we can identify the steady-state distribution of the states of AMC,  we can provide a lower bound for the steady-state probability of any subset of states $\mathcal{A}\subseteq \mcs$ that have the form
\begin{equation}
\mathcal{A}=\Big\{\bfs\in\mcs: (s_j\geq a_j),\, j=1,\ldots h-1\Big\}, \label{FB-LBsets}
\end{equation}
where $0\leq a_j\leq m_j$ for $j=1,\ldots,h-1$. Using the Temporal Boundedness property in conjunction with (\ref{FB-eqn5.1}), we can provide a lower bound $\ul\mcc^{PF}(\mce,\mcm)$ for the capacity of the line network by underestimating the probability in (\ref{FB-eqn5.1}) by using the steady-state distribution of the AMC instead of that of the EMC. Equivalently, the capacity of the line network is at least that of the throughput achievable by the AMC. This above idea of lower bound extends easily to an upper bound using the following result. The fundamental idea behind the following bound is to manipulate the buffer sizes at each node so that the packet drop in the modified network is provably infrequent than in the actual network.
\begin{thm}\label{FB-Ubthm}
   Let the operator $\mathtt{\Sigma}$ be defined by $(b_1,\ldots,b_k)\stackrel{\mathtt{\Sigma}}{\mapsto}(b_1,b_1+b_2,\ldots,b_1+\ldots+b_k)$. For a given network with distinct erasure probabilities $\mce$ and buffer sizes $\mcm$, denote $\ol\mcc^{PF}(\mce,\mcm)$ to be the throughput computed from the steady-state distribution of the AMC defined by (\ref{AMCeqn}) and (\ref{FB-eqn6}) with erasure probabilities $\mce$ and buffer sizes $\mathtt{\Sigma}(\mcm)$, i.e., $\ol\mcc^{PF}(\mce,\mcm)\triangleq \ul\mcc^{PF}(\mce,\mathtt{\Sigma}(\mcm))$. Then,
\begin{equation}
\mcc^{PF}(\mce,\mcm) \leq \ol\mcc^{PF}(\mce,\mcm)
\end{equation}
\end{thm}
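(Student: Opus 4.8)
The plan is to reduce the claimed inequality to a path‑by‑path comparison of two coupled Markov chains and then pass to their stationary laws, in the same spirit as the proof of Theorem~\ref{FB-thm1}. Unwinding the definitions, the right‑hand side is $\ol\mcc^{PF}(\mce,\mcm)=\ul\mcc^{PF}(\mce,\mathtt{\Sigma}(\mcm))=\ol\ve_h\,\nu[\{\bfs\in\mcs:s_{h-1}>0\}]$, where $\nu$ is the stationary law of the AMC generated by~(\ref{AMCeqn})--(\ref{FB-eqn6}) with erasures $\mce$ and buffer profile $\mathtt{\Sigma}(\mcm)$ (recall $\ul\mcc^{PF}$ is obtained by evaluating~(\ref{FB-eqn5.1}) against the AMC's stationary distribution), and the left‑hand side is $\mcc^{PF}(\mce,\mcm)=\ol\ve_h\,\pi[\{\bfs\in\mcs:s_{h-1}>0\}]$ by~(\ref{FB-eqn5.1}), with $\pi$ the stationary law of the EMC with profile $\mcm$. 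Hence it suffices to show that the last buffer of this enlarged AMC is non‑empty at least as often in steady state as the last buffer of the EMC, i.e.\ $\nu[s_{h-1}>0]\geq\pi[s_{h-1}>0]$.

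To obtain this, I would couple the two chains: drive the EMC (buffers $\mcm$) and the AMC (buffers $\mathtt{\Sigma}(\mcm)$) with a common erasure realization $\{X_i(l)\}_{l\geq0}$, started from the common empty state $\bfn(0)=\bfnt(0)=\mathbf 0$, and compare the trajectories. What is needed, $\nt_{h-1}(l)\geq n_{h-1}(l)$ for every $l$, cannot be had node‑by‑node or from prefix sums: one can already exhibit a three‑hop instance (take $h=3$, $\mcm=(1,1)$, so $\mathtt{\Sigma}(\mcm)=(1,2)$) in which $\nt_1(l)<n_1(l)$ at some epoch. The appropriate invariant is phrased in terms of \emph{suffix} sums (possibly together with auxiliary relations between the EMC flow variables $Y_i$ and the AMC's effective arrivals), and I would prove by induction on $l$ that
\[ \sum_{j=i}^{h-1}\nt_j(l)\ \geq\ \sum_{j=i}^{h-1}n_j(l)\qquad\text{for every } i\in\{1,\dots,h-1\}\text{ and every } l\geq 0, \]
whose $i=h-1$ case is exactly $\nt_{h-1}(l)\geq n_{h-1}(l)$. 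The base case is immediate since both chains start empty. For the inductive step, telescoping~(\ref{FB-eqn3}) gives $b_i(l+1)=b_i(l)+Y_i(l)-Y_h(l)$ for $b_i(l)\triangleq\sum_{j\geq i}n_j(l)$, while summing the AMC update~(\ref{AMCeqn}) (instantiated with profile $\mathtt{\Sigma}(\mcm)$) over $j\geq i$ and telescoping the departure terms shows that $B_i(l)\triangleq\sum_{j\geq i}\nt_j(l)$ increases by the number of packets newly stored at $v_i$, decreases by the packet (if any) delivered to $v_h$, and decreases further by the number of packets lost to overflow strictly within $\{v_{i+1},\dots,v_{h-1}\}$ at that epoch. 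Maintaining the invariant then comes down to a case analysis over the erasure pattern; the delicate configurations are those with $B_i(l)=b_i(l)$, where one must verify that $v_i$ in the enlarged AMC still has room to accept whatever packet the EMC admits into its $i$‑suffix and that no overflow drop can occur within $\{v_{i+1},\dots,v_{h-1}\}$. This is precisely where the cumulative allocation $\mathtt{\Sigma}(\mcm)$ is used: giving $v_i$ the buffer $\sum_{k\leq i}m_k$ is the smallest choice guaranteeing that the packets the EMC would keep ``bounced back'' over $v_1,\dots,v_i$ are absorbed downstream in the AMC rather than dropped, so that a near‑tight suffix can never be pushed into deficit.

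Granting the invariant, the conclusion is routine: for each fixed $l$, the event $\{n_{h-1}(l)>0\}$ forces $\{\nt_{h-1}(l)>0\}$ on every coupled trajectory, hence $\Pr[\nt_{h-1}(l)>0]\geq\Pr[n_{h-1}(l)>0]$; letting $l\to\infty$ and using ergodicity of both chains (the EMC as recalled in Section~\ref{FB-Sec2.1}, and the enlarged AMC being likewise a finite, irreducible, aperiodic chain) yields $\nu[s_{h-1}>0]\geq\pi[s_{h-1}>0]$, and multiplying through by $\ol\ve_h$ completes the proof. I expect the inductive step to be the main obstacle, and specifically the fact that the AMC with the inflated buffers genuinely drops packets: one has to rule out that a drop, or an ``extra'' delivery to $v_h$, occurs in the $i$‑suffix exactly when the two suffix occupancies are level, and to check that the cumulative‑sum buffer sizes — and nothing smaller — make every such tight configuration benign. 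The remaining ingredients (the reduction through~(\ref{FB-eqn5.1}), the coupling construction, and the passage to stationarity) are standard and parallel the treatment of Theorem~\ref{FB-thm1}.
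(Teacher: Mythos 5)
Your reduction to a coupling argument and your instinct that a suffix‑sum invariant is the right object are both on track — this is indeed the shape of the paper's proof. But the specific invariant you propose is false, and the conclusion you draw from it does not hold on coupled trajectories. You claim that for the coupled pair driven by a common erasure realization from the empty start, $\sum_{j=i}^{h-1}\tilde n_j(l)\geq\sum_{j=i}^{h-1}n_j(l)$ for all $i$ and $l$, and in particular that $\{n_{h-1}(l)>0\}$ forces $\{\tilde n_{h-1}(l)>0\}$ pathwise. Here is a counterexample: take $h=3$, $\mcm=(1,1)$, so the enlarged AMC has buffers $\mathtt{\Sigma}(\mcm)=(1,2)$; write $X_i=1$ for a successful delivery on link $i$ (the convention actually used in~(\ref{FB-eqn2})); drive both chains from $(0,0)$ with the realization $X(0)=(1,1,0)$, $X(1)=(1,1,0)$, $X(2)=(0,1,0)$, $X(3)=(0,0,1)$, $X(4)=(0,1,1)$. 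After epoch $2$, the EMC sits at $(1,1)$ (the full $v_2$ blocks $v_1$) while the enlarged AMC sits at $(0,2)$ (the extra slot at $v_2$ absorbs the packet). After epoch $3$ the states are $(1,0)$ and $(0,1)$ with both having delivered one packet. After epoch $4$, the EMC is at $(0,1)$ with one cumulative delivery, while the enlarged AMC is at $(0,0)$ with two cumulative deliveries: $v_2$ in the AMC ships its packet to the destination exactly while $v_2$ in the EMC is being refilled by $v_1$. So $\tilde n_2(5)=0<1=n_2(5)$, and your invariant fails at $i=2$ (and at $i=1$ too). Consequently the claimed pathwise implication $\{n_{h-1}(l)>0\}\Rightarrow\{\tilde n_{h-1}(l)>0\}$ is wrong, and the stochastic‑domination step $\Pr[\tilde n_{h-1}(l)>0]\geq\Pr[n_{h-1}(l)>0]$ cannot be obtained this way.

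The missing idea is that the invariant must be stated on the \emph{extended} state that appends the cumulative count of packets already delivered to $v_h$. The paper defines $\bfn^e(l)=(n_1(l),\ldots,n_{h-1}(l),n_h(l))$ and $\bfq^e(l)$ analogously for the enlarged AMC, with $n_h(l),q_h(l)$ the numbers delivered to the destination by epoch $l$, and proves by induction that $\sum_{k=i}^h q_k^e(l)\geq\sum_{k=i}^h n_k^e(l)$ for every $i\in\{1,\ldots,h\}$ (this is the partial order $\succeq$ in Appendix~\ref{App.1-4}). In the counterexample above this holds with equality at $i=1,2$ after epoch $4$: the ``deficit'' of one packet at $v_2$ in the AMC is exactly compensated by the extra packet already delivered. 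Including the destination coordinate is what makes the tight cases benign, and it is also what changes the final step: rather than comparing $\Pr[s_{h-1}>0]$ epoch by epoch, one takes $i=h$ to get $q_h^e(l)\geq n_h^e(l)$ pathwise and concludes $\ol\mcc^{PF}(\mce,\mcm)=\lim_l q_h^e(l)/l\geq\lim_l n_h^e(l)/l=\mcc^{PF}(\mce,\mcm)$. Your framing via~(\ref{FB-eqn5.1}) and $\nu[s_{h-1}>0]\geq\pi[s_{h-1}>0]$ is a correct restatement of the goal, but it cannot be reached by the pathwise occupancy domination you propose; you must track cumulative deliveries.

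Two smaller points: the paper's case split in the induction is on whether the set of prefix‑saturated nodes $\{i<h:\sum_{k\leq i}\tilde n_k(l)=\sum_{k\leq i}m_k\}$ is empty, which is where the cumulative allocation $\mathtt{\Sigma}(\mcm)$ enters (exactly as you guessed it would); and the base case $h=2$ is handled separately by observing that the EMC and AMC coincide. Your intuition about \emph{why} $\mathtt{\Sigma}(\mcm)$ is the right size is sound, but the supporting invariant needs the destination coordinate to be true.
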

\begin{proof}
A detailed proof is presented in Appendix~\ref{App.1-4}.
\end{proof}

Thus, the problem of bounding capacity is reduced to identifying the steady-state probability of the AMC. Notice that the above bounds are not in a computable form, since they still involve identifying the steady-state distribution of the AMC. Even though the AMC is significantly simpler than the EMC, the output process from each intermediate node is not renewal~\cite{Serfozo_QT0}. Therefore, the distribution of inter-departure times from each intermediate node is insufficient to completely describe the arrival process at intermediate nodes $v_i$ for $1<i<h$. Therefore, a straightforward hop-by-hop analysis (without further assumptions) seems insufficient to identify the capacity of such networks.

\subsection{Iterative Estimation of the Capacity of Line Networks}\label{FB-Sec2.3}

In this section, we present two iterative estimates for the capacity of line networks that is based on certain simplifying assumptions regarding the EMC. We notice that the difficulty of exactly identifying the steady-state probabilities of the EMC stems from the finite buffer condition that is assumed. The finite buffer condition introduces a strong dependency of state update at a node on the state of the node that is downstream. This effect is caused by blocking when the state of a node is forced to remain unchanged because the packet that it transmitted is successfully delivered to the next-hop node, but the latter is unable to store the packet due to lack of space in its buffer. Additionally, the non-tractability of the EMC is compounded by a non-renewal packet departure process from each intermediate node. In this section, we ignore some of these issues to develop iterative methods for estimation. Figure~\ref{FB-IE} encapsulates the assumptions made in both estimation approaches. While both approaches ignore the non-renewal nature of packet arrival process at each node, the first approach makes an additional memoryless assumption on the arrival process. Additionally, both approaches model the effect of blocking by the introduction of a single parameter $p_b$ that represents the probability that an arriving innovative packet will be blocked.

\begin{figure*}[ht!]
\psfrag{Sta_i}{\footnotesize{$n_{i-1}{(l)}$}}
\psfrag{Sta_k}{\hspace{1mm}\footnotesize{{$n_i(l)$}}}
\psfrag{Blocking}{\hspace{2mm}\scriptsize{Memoryless Blocking}}
\psfrag{Pb}{\footnotesize{${P_b}_i$}}
\psfrag{Xi(l)}{\hspace{0mm}\scriptsize{Renewal process}}
\psfrag{Yi(l)}{\footnotesize{$\{Y_i(l)\}$}}
\centering
\includegraphics[height=1.75in, width=4.75in,angle=0]{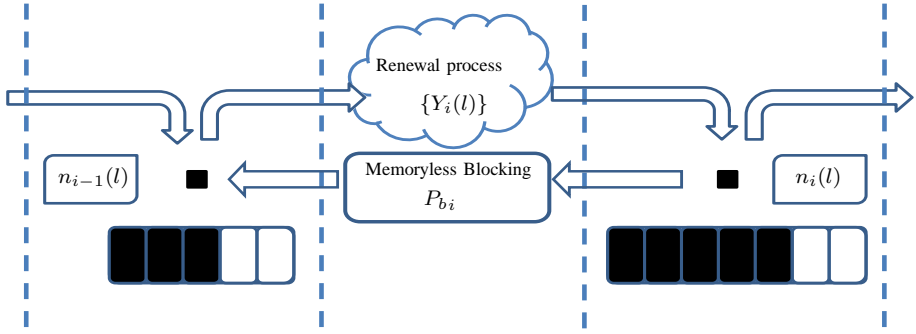}
\caption{Illustration of the assumptions in iterative estimation.} \label{FB-IE}
\end{figure*}

\subsubsection{Rate-based Iterative Estimate}\label{FB-Sec2.3.1}
This estimate makes the following assumptions to decouple the dynamics of the system and enable capacity estimation.
\begin{itemize}
\item[A1.] The packet departure process at each intermediate node is memoryless. In other words, each node $v_i$ sees a packet arrival process that is memoryless with (average) rate $r_i$ packets/epoch. This assumption allows us to track information rates over links while simplifying the higher order statistics.
\item[A2.] Any packet that is transmitted unerased by the channel $(v_i,v_{i+1})$ is blocked independently with a probability ${p_b}_{i+1}$. That is, for any $0< k\leq m_i$,
\begin{align}
\Pr[Y_{i+1}(\cdot)=0,X_{i+1}(\cdot)=1|n_i(\cdot)=k] = \ol\ve_{i+1}{p_b}_{i+1}.\nonumber
\end{align}

Here, ${p_b}_{i+1}$ denotes the blocking probability due to full buffer state at $v_{i+1}$. This assumption allows us to track the blocking probability ignoring higher order statistics of the blocking process.
\item[A3.] For each node $v_i$ and epoch $l$, the event of packet arrival and the event of blocking from $v_{i+1}$ are independent of each other.
\end{itemize}
\begin{figure}[ht!]
\psfrag{aa}{$\footnotesize{\ol\alpha_0}$}
\psfrag{bb}{$\footnotesize{\hspace{-2mm}\ol{\alpha+\beta}}$}
\psfrag{cc}{$\footnotesize{\hspace{1mm}\ol\beta}$}
\psfrag{dd}{$\footnotesize{\alpha_0}$}
\psfrag{ee}{$\footnotesize{\alpha}$}
\psfrag{ff}{$\footnotesize{\beta}$}
\psfrag{gg}{$\footnotesize{\beta}$}
\centering
\includegraphics[width=3.4in,angle=0]{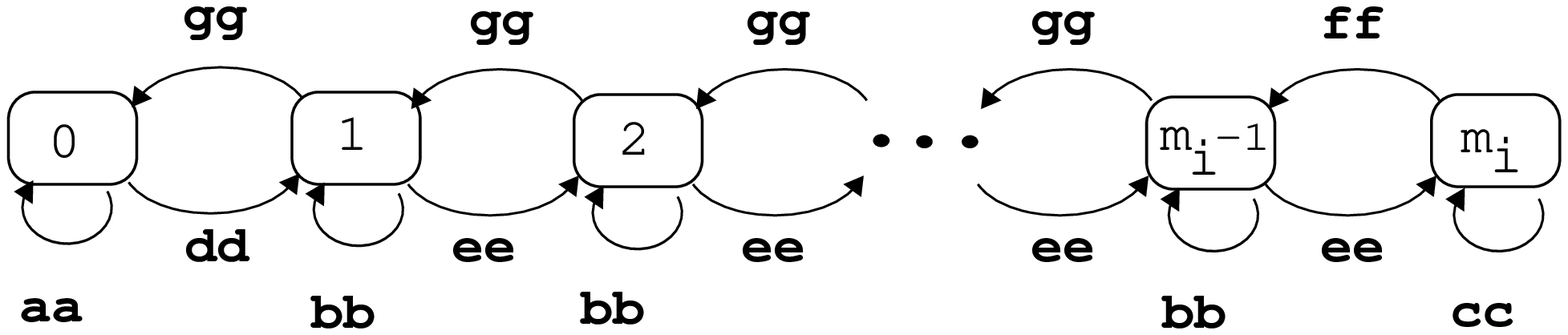}
\caption{The Markov chain for $n_i$ under the simplifying assumptions.}\label{FB-1HMC}
\end{figure}
The above assumptions are valid in the limiting case of large buffers provided the system corresponds to a stable queueing configuration. By assuming that they hold in general, the effect of blocking is spread equally over all non-zero states of occupancy at each node. Similarly, the assumptions also spread the arrival rate equally among all occupancy states. Given that the arrival rate of packets at the node $v_i$ is $r_i$ packets/epoch, and the blocking probability of the next node is ${p_b}_{i+1}$, the local dynamics of the state change for the node $v_i$ under assumptions A1-A3 is given by the Markov chain of Fig.~\ref{FB-1HMC} with the parameters set to the following.
\begin{align}
\begin{array}{ccl}
\alpha&=&r_i(\ve_{i+1}+\ol\ve_{i+1}{p_b}_{i+1})\\
\beta&=&(1-r_i)\ol{p_b}_{i+1}\ol\ve_{i+1}\\
\alpha_0&=&r_i
\end{array}.\label{Eqn-24}
\end{align}
Using these parameters, the steady-state distribution\footnote{If ${p_b}_{i+1}=1$, then we set $\vp(m_i| r_i,\ve_{i+1},{p_b}_{i+1})=1$.} $\{\phi_{v_i}(k)\triangleq\vp(k| r_i,\ve_{i+1},{p_b}_{i+1})\}_{k=0}^{m_i}$ of the chain of Fig.~\ref{FB-1HMC} can be computed to be
\begin{align}
\hspace{-3mm}\vp(k| r_i,\ve_{i+1},{p_b}_{i+1})\triangleq\hspace{-1mm}\left\{\begin{array}{l}\hspace{-2mm}\frac{1}{1+\frac{\alpha_0}{\beta}\big(\sum_{l=0}^{m_i-1}\frac{\alpha^l}{\beta^l}\big)} \hspace{3mm}k=0\\
\hspace{-2mm}\frac{\frac{\alpha_0\alpha^{k-1}}{\beta^k}}{1+\frac{\alpha_0}{\beta}\big(\sum_{l=0}^{m_i-1}\frac{\alpha^l}{\beta^l}\big)} \hspace{3mm} 0<k\leq m_i\end{array}\hspace{-3mm}\right.. \label{FB-SS}
\end{align}
 Assuming that $v_i$ observes a packet arrival rate of $r_i$ from $v_{i-1}$ and a blocking probability of ${p_b}_{i+1}$ from $v_{i+1}$, the blocking probability ${p_b}_i$ that the node $v_{i-1}$ perceives from the node $v_i$ and the arrival rate $r_{i+1}$ that $v_{i+1}$ observes can be computed via (\ref{FB-SS}) using the following equations.
\begin{align}
{p_b}_i&=(\ve_{i+1}+\ol\ve_{i+1}{p_b}_{i+1})\vp(m_i| r_i,\ve_{i+1},{p_b}_{i+1})\label{FB-PB}\\
r_{i+1}&=\ol\ve_{i+1} (1-\vp(0| r_i,\ve_{i+1},{p_b}_{i+1})) \label{FB-R}
\end{align}
Note that the blocking probability ${p_b}_i$ is computed using the full occupancy probability of the node $v_i$. While in reality, a packet is blocked by $v_i$ only if at the arriving instant, the node has full occupancy, A2 models any arriving packet to be blocked with the above probability irrespective of the occupancy of $v_i$. Also, in (\ref{FB-PB}) and (\ref{FB-R}) the arrival rate from the node $v_1$ is $r_1=\ol\ve_1$ and the blocking probability ${p_b}_h=0$.

Given two vectors $\mcr=(r_1,\ldots,r_h)\in[0,1]^{h}$ and  $\mcp=({p_b}_1,\ldots,{p_b}_h)\in[0,1]^{h}$, we term $(\mcr,\mcp)$ as a rate-approximate solution to EMC, if they satisfy the equations (\ref{FB-PB}) and (\ref{FB-R}) in addition to
having $r_1=\ol\ve_1$ and ${p_b}_h=0$. Since these relations were obtained from making assumptions on the EMC, it is \emph{a priori} unclear if there exist rate-approximate solutions for a given system $(\mce, \mcm)$. Fortunately, the following result guarantees both the uniqueness and an algorithm for identifying the rate-approximate solution to the EMC.
\begin{thm}\label{FB-uniqthm}
Given a line network with link erasures $\mce=(\ve_1,\ldots,\ve_h)$ and intermediate node buffer sizes $\mcm=(m_1,\ldots, m_{h-1})$, there is exactly one rate-approximate solution $(\mcr^*(\mce,\mcm),\mcp^*(\mce,\mcm))$ to the EMC. Further, the rate-approximate solution satisfies flow conservation. That is,
\begin{equation*}
r^*_i(1-{{p_b}}^*_i)=r^*_j(1-{{p_b}}^*_j), \quad \quad 1\leq i,j\leq h.
\end{equation*}
\end{thm}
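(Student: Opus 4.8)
My plan is to prove the two assertions almost separately. Flow conservation turns out to be essentially free: it follows from a balance identity satisfied by \emph{any} tuple obeying (\ref{FB-PB})--(\ref{FB-R}). Fix $i$ and abbreviate $\mu\triangleq\ol\ve_{i+1}\ol{{p_b}_{i+1}}$, so the parameters of (\ref{Eqn-24}) are $\alpha_0=r_i$, $\alpha=r_i(1-\mu)$, $\beta=(1-r_i)\mu$, and write $\phi_0=\vp(0\,|\,r_i,\ve_{i+1},{p_b}_{i+1})$, $\phi_m=\vp(m_i\,|\,r_i,\ve_{i+1},{p_b}_{i+1})$. The stationary distribution (\ref{FB-SS}) of the birth--death chain of Fig.~\ref{FB-1HMC} satisfies ``total up-rate $=$ total down-rate,'' i.e.\ $\alpha_0\phi_0+\alpha(1-\phi_0-\phi_m)=\beta(1-\phi_0)$; substituting the three parameters and cancelling, this collapses to $r_i\big(1-(1-\mu)\phi_m\big)=\mu(1-\phi_0)$. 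By (\ref{FB-PB}) the left-hand side equals $r_i(1-{p_b}_i)$ and by (\ref{FB-R}) the right-hand side equals $r_{i+1}(1-{p_b}_{i+1})$, so $r_i(1-{p_b}_i)=r_{i+1}(1-{p_b}_{i+1})$ for every $i$. This uses nothing but the definitions, hence it will hold for whatever solution we exhibit.

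For existence and uniqueness the structural remark is that (\ref{FB-R}) gives $r_{i+1}$ explicitly from $(r_i,{p_b}_{i+1})$ and (\ref{FB-PB}) gives ${p_b}_i$ explicitly from $(r_i,{p_b}_{i+1})$: rates propagate downstream, blocking probabilities upstream, and the only coupling is through the boundary data $r_1=\ol\ve_1$ and ${p_b}_h=0$. I would exploit this by inducting on $k=1,\dots,h-1$, where the \emph{$k$-tail} is the sub-line $v_{h-k},\dots,v_{h-1}$ plus the sink, fed by an arrival rate $s\in(0,1)$ at $v_{h-k}$. The inductive claim is: for each such $s$ the equations (\ref{FB-R})--(\ref{FB-PB}) for $i=h-k,\dots,h-1$ (with ${p_b}_h=0$) have a unique solution, and the induced upstream blocking probability $B_k(s)$ is continuous, strictly increasing on $(0,1)$, with $B_k(0^+)=0$. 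The base case $k=1$ is explicit, $B_1(s)=\ve_h\vp(m_{h-1}\,|\,s,\ve_h,0)$ and $r_h=\ol\ve_h\big(1-\vp(0\,|\,s,\ve_h,0)\big)$, and the monotonicity in $s$ is read off (\ref{FB-SS}) because increasing $s$ increases $\alpha_0/\beta$ and $\alpha/\beta$, hence $\vp(m_{h-1}\,|\,\cdot)$. For the step $k-1\to k$, feeding $v_{h-k}$ at rate $s$ forces the next arrival rate $t\triangleq r_{h-k+1}$ to solve the scalar equation $t=\Psi(t)$ with $\Psi(t)\triangleq\ol\ve_{h-k+1}\big(1-\vp(0\,|\,s,\ve_{h-k+1},B_{k-1}(t))\big)$; once $t$ is pinned down, the inductive hypothesis returns the rest of the tail and (\ref{FB-PB}) returns ${p_b}_{h-k}=B_k(s)$ explicitly. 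Since $\vp(0\,|\,\cdot)$ is decreasing in its last argument and $B_{k-1}$ is increasing, $\Psi$ is continuous and increasing, maps $(0,\ol\ve_{h-k+1})$ into itself, and has $\Psi(0^+)>0$ and $\sup\Psi<\ol\ve_{h-k+1}$; hence $\Psi-\mathrm{id}$ changes sign and a fixed point exists by the intermediate value theorem. Taking $k=h-1$ and $s=\ol\ve_1$ then produces a rate-approximate solution, and reading any rate-approximate solution inward from the sink shows it must agree with this one hop by hop.

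The step I expect to be the main obstacle is the \emph{uniqueness} of the fixed point of $\Psi$ at each stage, since an increasing self-map of an interval can cross the diagonal several times. I would settle it by differentiating the closed form (\ref{FB-SS}): with $\theta=\alpha/\beta$ and $\eta=\alpha_0/\beta$ one has $\vp(0\,|\,\cdot)=\big(1+\eta\sum_{l=0}^{m-1}\theta^{l}\big)^{-1}$, so that $\Psi'(t)=-\ol\ve_{h-k+1}\,\partial_p\vp(0\,|\,s,\ve_{h-k+1},p)\big|_{p=B_{k-1}(t)}\,B_{k-1}'(t)$, and I would prove $0<\Psi'(t)<1$ (equivalently that $\Psi$ is a strict contraction, which also yields continuity and strict monotonicity of $B_k$ and closes the induction), or alternatively that $\Psi$ is strictly concave, so that $\Psi-\mathrm{id}$ is concave and positive at the left endpoint and therefore has a single zero. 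The factor $\ol\ve_{h-k+1}<1$ and the fact that $\vp(0\,|\,\cdot)\in(0,1)$ set the right tone, but a clean bound will need the inductive control on $B_{k-1}'$ inherited from the previous level, and it uses $\ve_j>0$ and the finiteness of the buffers essentially (the estimate degenerates in the saturated infinite-buffer limit, where such fixed points need not be unique). Carrying this contraction estimate inductively alongside the construction of the $B_k$'s is the core of the proof; once it is in place, existence, uniqueness, and --- via the first paragraph --- flow conservation all follow.
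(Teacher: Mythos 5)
Your flow-conservation argument is correct and is a genuine improvement on the paper's. Writing $\mu=\ol\ve_{i+1}\ol{{p_b}_{i+1}}$, the identity $r_i(1-{p_b}_i)=r_{i+1}(1-{p_b}_{i+1})$ does indeed drop out of the global up/down balance $\alpha_0\phi_0+\alpha(1-\phi_0-\phi_m)=\beta(1-\phi_0)$ of the birth--death chain of Fig.~\ref{FB-1HMC}: the algebra collapses to $r_i\big(1-(1-\mu)\phi_m\big)=\mu(1-\phi_0)$, whose two sides are exactly $r_i(1-{p_b}_i)$ via (\ref{FB-PB}) and $r_{i+1}(1-{p_b}_{i+1})$ via (\ref{FB-R}). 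The paper's Appendix obtains the same identity by a long-run packet count ((\ref{FB-Cons1})--(\ref{FB-Cons2})); your derivation from detailed balance is more self-contained and holds for any tuple satisfying the two defining equations, which is the logically correct order (it does not presuppose the solution exists).

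For existence and uniqueness you take a genuinely different route from the paper, and the gap you flag yourself is real and unresolved. Your tail induction defines $B_k(s)$ by assuming the $k$-tail fed at rate $s$ has a \emph{unique} solution, but the construction only produces a scalar fixed-point equation $t=\Psi(t)$ with $\Psi$ continuous and increasing on an interval mapped into itself. The intermediate value theorem gives existence of a fixed point, but an increasing self-map can cross the diagonal several times, so without $\Psi'<1$ (or concavity) the map $B_k$ is not well-defined and the entire induction halts. You sketch two ways to close this (contraction or concavity), but neither is carried out, and the needed inductive control on $B_{k-1}'$ is exactly the non-trivial content. The paper avoids the fixed-point problem entirely by a different decomposition: it parametrizes the candidate solution by the \emph{downstream} rate $r_h$ and propagates backward one scalar unknown at a time. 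Because $\vp(0\,|\,r,\ve,p)$ is strictly monotone in $r$ for fixed $p$, each step solves a single-variable equation with a unique root, so the whole tuple is a function of $r_h$ with no self-consistency equation; uniqueness reduces to showing $r_1$ is strictly monotone in $r_h$, and the boundary condition $r_1=\ol\ve_1$ then has at most one solution. Crucially, the paper also does not use an IVT argument for existence; instead it shows the iterates of Algorithm~\ref{alg:RbIE} (initialized at $\mathbf{p_b}[1]=\mathbf{0}$) are componentwise nondecreasing and bounded by $1$, hence convergent, and that the limit is a fixed point of the continuous update map $\Xi$. If you want to keep your tail-induction framework you would have to supply the contraction estimate; otherwise, adopting the paper's parametrization-by-$r_h$ plus the monotone-iterates existence argument removes the need for that estimate and closes the proof cleanly.
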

\begin{proof}
The proof is detailed in Appendix~\ref{App.1-3}
\end{proof}
Finally, the estimate of the capacity can be obtained from the rate-approximate solution by computing the average rate of packet storage at each node using
\begin{equation}
\mcc^*(\mce,\mcm)=r^*_i(1-{p_b}^*_i), \quad\quad i=1,\ldots,h. \label{FB-LinCapEst}
\end{equation}
Note that by the conservation of flow, any $i\in\{1,\ldots,h\}$ can be used in the above equation to identify capacity.

As an illustration, consider a simple four-hop network with erasures $\mce=(0.5, 0.4999, 0.4998, 0.4)$ and buffer sizes $\mcm=(5,5,5)$.  From the above estimation method, we arrive at
 \begin{eqnarray}
\mcr^*&=&(0.5, 0.46797, 0.43958, 0.43484),\\
\mcp^*&=&(0.13031, 0.07078, 0.01076, 0),\\
\mcc^*(\mce,\mcm)&=&0.43484 \textrm{ packets/epoch}.
  \end{eqnarray}
 From simulations, the throughput capacity was found to be $0.43501$ packets/epoch for the same network.

\subsubsection{Distribution-based Iterative Estimate}\label{FB-Sec2.3.2}
In this section, we assume that the given line network $(\mce,\mcm)$ satisfies $\ve_i\neq\ve_j$ for $i\neq j$. Since the capacity of a line network is a continuous function of the system parameters, this assumption is not restrictive. A system with non-distinct erasure parameters can be approximated to any degree of precision by a system with distinct erasure probabilities.

Before we introduce the second approach for estimation, we present the following technical result\footnote{For this theorem, note that we do not require that all $p_i$s or all $p_i'$s be positive. We only need that their sum be unity and that they generate a valid probability distribution, respectively.} wherein we denote $\mathbbm{I}$ to be the identity distribution for the convolution operator.
\begin{thm}\label{FB-LBthm}
Consider a tandem queueing system of two nodes where the first node possessing $m$ buffer slots is fed by a renewal process whose inter-arrival time distribution is $g^{\textrm{in}}=\sum_{i=1}^{N-1} p_i\mathbbm{G}(\tee_i)$ with $p_1,\ldots,p_{N-1}\in\mathbbm{R}$ and $\tee_{i}\neq \tee_j$ for $1\leq i<j\leq N-1$. Suppose that the distribution of service time is $\mathbbm{G}(\tee_{N})$, where $\tee_{N}\neq \tee_i$ for $1\leq i\leq N-1$. Further, suppose that the second node blocks an arriving packet memorylessly with probability $q\in(0,1)$, and that any blocked packet gets re-serviced. Then, the distribution of inter-arrival times as seen by the second node is given by
\begin{equation}
 g^{\textrm{out}}=\Upsilon(g^{\textrm{in}},m,\tee_N,q)\otimes G(\tee_N),
 \end{equation}
where $\Upsilon(g^{\textrm{in}},m,\tee_N,q) = \Big(\ol\alpha\mathbbm{I}+\alpha\sum_{l=1}^{N-1}p'_l\mathbbm{G}(\tee_l)\Big)$ for some $0<\alpha<1$ and $p'_l\in\mathbbm{R}$, $l=1,\ldots,N-1$, with $\sum_l p'_l=1$.
\end{thm}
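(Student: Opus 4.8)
The plan is to set up the dynamics of the first node's occupancy as a finite-state Markov chain driven by the memoryless (geometric) arrival, service, and blocking mechanisms, and then to read off the inter-departure distribution as a phase-type-like mixture that, crucially, collapses onto the \emph{same} set of geometric bases $\{\mathbbm{G}(\tee_1),\ldots,\mathbbm{G}(\tee_{N-1})\}$ appearing in the input, plus one convolution with $\mathbbm{G}(\tee_N)$ coming from the final service. First I would observe that since $g^{\textrm{in}}=\sum_{i=1}^{N-1}p_i\mathbbm{G}(\tee_i)$ is a (possibly signed) combination of geometrics, the renewal arrival process can be realized by a finite ``arrival-phase'' chain: there is a small auxiliary Markov chain on $N-1$ phases whose absorption/renewal time has exactly this distribution (this is the standard fact that a sum of distinct geometrics has a rational probability generating function $A(z)=\sum_i p_i\frac{(1-\tee_i)z}{1-\tee_i z}$, and any such rational p.g.f. with denominator $\prod_i(1-\tee_i z)$ is realized by a finite linear recursion). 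Likewise the service time $\mathbbm{G}(\tee_N)$ and the geometric-$q$ blocking (re-service) loop are single-phase geometric mechanisms. Combining arrival phase, buffer occupancy $k\in\{0,\ldots,m\}$, and service/blocking status gives a finite ergodic Markov chain that exactly models the first node.

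Next I would compute the stationary distribution of this combined chain and, from it, the distribution of the time between successive successful departures (i.e., successful hand-offs to the second node). A departure epoch consists of: a service completion (contributing the factor $\otimes\,\mathbbm{G}(\tee_N)$ that appears explicitly in the statement) possibly preceded by a geometric number of blocked re-service attempts, and — if the buffer emptied — a wait for the next arrival, whose residual is again a mixture of the $\mathbbm{G}(\tee_i)$. The key algebraic point is that \emph{every} ingredient has p.g.f. of the form (polynomial)$/\prod_{i\in S}(1-\tee_i z)$ with $S\subseteq\{1,\ldots,N\}$, so the p.g.f. of $g^{\textrm{out}}$ is rational with denominator dividing $\prod_{i=1}^{N}(1-\tee_i z)$. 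Factoring out the guaranteed $(1-\tee_N z)$ from the final service, what remains is a rational function with denominator dividing $\prod_{i=1}^{N-1}(1-\tee_i z)$; a partial-fraction expansion then writes it as $\ol\alpha\cdot\mathbbm{I} + \alpha\sum_{l=1}^{N-1}p'_l\mathbbm{G}(\tee_l)$. The constant ($\mathbbm{I}$) term must be present precisely because with positive probability the buffer does not empty at a departure, so the next departure follows after only a geometric-$\tee_N$ service with no extra arrival-residual delay; this also pins down $\alpha\in(0,1)$ as (essentially) the stationary probability that the buffer is non-empty right after a departure, which is strictly between $0$ and $1$ since $q\in(0,1)$ and all erasure/arrival parameters are in $(0,1)$. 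That $\sum_l p'_l=1$ follows by evaluating the mixture p.g.f. at $z=1$ (a proper distribution integrates to one, and $\mathbbm{I},\mathbbm{G}(\tee_l)$ each have unit mass).

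The main obstacle I anticipate is \emph{bookkeeping the signed mixture carefully}: since the $p_i$ (and hence the $p'_l$) need not be nonnegative, I cannot argue via a genuine probabilistic phase construction throughout, and must instead work at the level of generating functions / transfer matrices, verifying that all the pole locations stay within $\{\tee_1,\ldots,\tee_N\}$ and that no spurious cancellation or new pole is introduced by the buffer reflections at $0$ and $m$. Concretely, I would encode the whole first-node dynamics as an $(m{+}1)$-block transfer matrix in the variable $z$ whose entries are rational with poles only at the $\tee_i$, express the departure-time p.g.f. as a ratio of determinants of such matrices (so its poles are among the $\tee_i$), and then peel off $(1-\tee_N z)$ and do the partial-fraction step. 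Establishing $0<\alpha<1$ rigorously — i.e.\ that the $\mathbbm{I}$-coefficient is a genuine probability strictly inside the open interval, using irreducibility/aperiodicity of the combined chain and $q,\ve\in(0,1)$ — is the one place where I would need the structural non-degeneracy hypotheses (distinct $\tee_i$'s, $q\in(0,1)$) in an essential way, and I would lean on Lemma~\ref{FB-lem1}-style non-singularity arguments for the relevant sub-blocks to guarantee the decomposition is well defined.
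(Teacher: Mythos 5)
Your proposal reaches the right conclusion but travels a genuinely different route than the paper. The paper's proof is organized around an explicit probabilistic quantity: the \emph{starvation period} $X$, i.e., the extra delay appended to a service time during an inter-departure interval in which the first node's buffer empties before the next arrival. It then computes the distribution of $X$ directly by (i) constructing the imbedded Markov chain for the occupancy just after each arrival (with transition kernel built from the distribution $D_j$ of the number of potential services within an inter-arrival period, evaluated at the \emph{effective} service parameter $\tilde\tee_N$ that folds in the blocking probability $q$); (ii) conditioning on the post-arrival occupancy $M=k$ and computing $\Pr[X=i\mid M=k]$ in closed form, which — because $T_A$ is a mixture of geometrics and geometrics are memoryless — separates cleanly into a $j$-sum times a mixture of $\ol\tee_l\tee_l^{\,i-1}$ terms, directly exhibiting $f^X$ as a (signed) mixture of the \emph{same} geometrics $\mathbbm{G}(\tee_1),\ldots,\mathbbm{G}(\tee_{N-1})$; and (iii) identifying $\alpha$ by a \emph{rate-matching} (flow-conservation) argument rather than by a direct stationary-probability interpretation. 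The convolution with $\mathbbm{G}(\tee_N)$ (not $\mathbbm{G}(\tilde\tee_N)$) is then immediate because the theorem tracks arrivals at the second node including blocked arrivals. Your route instead lifts everything to PGFs and transfer matrices, and argues by pole-location and partial fractions.

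Two substantive cautions about your version. First, the ``no spurious pole'' step is not a formality: ratios of determinants of rational matrices can introduce or cancel poles, and your argument as sketched does not control this. The paper sidesteps the issue entirely because it never needs an implicit rationality argument — it manifestly produces $f^X$ as a combination of $\mathbbm{G}(\tee_l)$. If you keep the transfer-matrix route you will have to do essentially the paper's explicit conditioning anyway to guarantee the pole set is exactly $\{\tee_1,\ldots,\tee_{N}\}$, so the abstraction buys little. Second, your characterization of $\alpha$ as ``the stationary probability that the buffer is non-empty right after a departure'' is not how $\alpha$ is pinned down in the paper (and is not quite the right event: the relevant event is that the buffer \emph{empties and remains empty for at least one epoch} during the inter-departure interval, measured with respect to the effective service parameter $\tilde\tee_N$). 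The paper determines $\alpha$ by equating the long-run packet arrival rate at node $1$ (discounted by the blocking probability $\mathcal{P}(g^{\textrm{in}},m,\tee_N,q)=\pi_m D_0$) with the long-run output rate implied by the mean of $g^{\textrm{out}}$, which is both cleaner and avoids the need to re-derive the post-departure occupancy distribution. You should also be careful to track the two distinct geometric parameters in play ($\tee_N$ for node-2 arrivals, $\tilde\tee_N$ for node-1 buffer dynamics) — your sketch blurs them in the phrase ``a geometric number of blocked re-service attempts''.
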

\begin{proof}
A detailed analysis including the means of identifying $\alpha,\{p'_i:i=1,\ldots, N\}$ is given in Appendix~\ref{App.2}.
\end{proof}

Just as in the Rate-based Iterative Estimate, this estimate also makes three assumptions to simplify the EMC. While the Distribution-based Iterative Estimate makes assumptions A2 and A3, it relaxes assumption A1 to the following:
\begin{itemize}
\item[A1$^*$.] The packet departure process at each intermediate node is renewal.
\end{itemize}
Note that Assumption A1 allows for tracking only the average rate of information flow on edges whereas A1$^*$ allows tracking of the distribution of packet inter-arrival times. However, A1$^*$ ignores the fact that the distribution of an inter-arrival time changes with the knowledge of past inter-arrival times. To track the inter-arrival distribution and blocking probabilities at each node, the Distribution-based Iterative Estimate uses Theorem~\ref{FB-LBthm} in a hop-by-hop fashion. Assuming that the packet arrival process at $v_i$ is renewal with an inter-arrival distribution $f_{i}$, and that the memoryless blocking from $v_{i+1}$ occurs with probability ${p_b}_{i+1}$, we see that the packet inter-arrival distribution seen by $v_{i+1}$ is  given by
\begin{eqnarray}
f_{i+1}=\Upsilon(f_i,m_i,\ve_{i+1},{p_b}_{i+1})\otimes\mathbbm{G}(\ve_{i+1}). \label{FB-R-DIE}
\end{eqnarray}
Notice that just like in (\ref{Eqn-24}), $\Upsilon$ uses the effective erasure probability to incorporate the effect of blocking by $v_{i+1}$. However, this corrective term does not appear in $\mathbb{G}(\cdot)$ term, because $f_{i+1}$ represents the distribution of packet inter-arrival times at $v_{i+1}$, and not the distribution of the time between two adjacent successful packet storages at $v_{i+1}$. Further, the blocking probability of $v_i$ as perceived by $v_{i-1}$ is given by
\begin{align}
{p_b}_{i}&=\Pr[\textrm{A packet arriving at $v_i$ sees full buffer}]\\
&\stackrel{(\ref{FB-BlockingEqn})}{=} \mathcal{P}(f_i,m_i,\ve_{i+1},{p_b}_{i+1}). \label{FB-Pb-DIE}
\end{align}
Just as in the Rate-based Iterative Estimate, we call a solution to (\ref{FB-R-DIE}) and (\ref{FB-Pb-DIE}) with boundary conditions ${p_b}_h=0$ and $f_1=\mathbbm{G}(\ve_1)$ as a distribution-approximate solution. Though the existence and uniqueness of the distribution-approximate solution for a given system $(\mce,\mcm)$ has eluded us, simulations reveal that for each system, the solution is unique and can be found by iteratively using the following algorithm.

\begin{algorithm}[h!]
\small{\caption{\emph{Distribution-based Iterative Estimate}}\label{alg:DbIE}
\begin{algorithmic}[1]
\STATE $\texttt{Count}=1$ and ${p_b}_i[\texttt{Count}]=0$, $i=1,\ldots,h-1$.
\WHILE {\texttt{Count}$\leq$\texttt{Max\_Iter}}
\STATE $f_1[\texttt{Count}]=\mathbbm{G}(\ve_1)$, ${p_b}_h[\texttt{Count}]=0$, and $j=1$.
\WHILE {$j<h$}
\STATE Compute $f_{j+1}[\texttt{Count}]$, ${p_b}_j[\texttt{Count}+1]$ employing (\ref{FB-R-DIE}) and (\ref{FB-Pb-DIE}) (that use $f_{j}[\texttt{Count}]$, ${p_b}_{j+1}[\texttt{Count}]$)
 \STATE $j\leftarrow j+1$.
\ENDWHILE
\STATE $\texttt{Count}\leftarrow\texttt{Count}+1$.
\ENDWHILE
\end{algorithmic}}
\end{algorithm}

Note that during any round of $\texttt{Count}$ in the above algorithm, (\ref{FB-R-DIE}) can be iteratively used to identify $f_i[\texttt{Count}]$ in Step 5 only if the output distribution of inter-departure times from each node is a weighted sum of geometric distributions. This is however guaranteed if the erasure probabilities of no two links are equal. Alternately, Step 2 can be replaced by a convergence-type criterion instead of the \texttt{Max\_Iter} criterion. After sufficiently large number of iterations, the distributions and blocking probabilities usually converge (to $f_i^\star$ and ${p_b}_{i}^\star$), and  upon convergence the capacity can be estimated via
\begin{equation}
\mcc^\star(\mce,\mcm)=\frac{1}{\langle f_h^\star\rangle}.
\end{equation}
Using the above approach for the four-hop example network at the end of Sec.~\ref{FB-Sec2.3.1}, we have
\begin{align}
f_4^\star&=138240.92\mathbbm{G}(0.5)-275765.59\mathbbm{G}(0.4999)\nonumber\\
&\hspace{5mm}+137525.64\mathbbm{G}(0.4998)+0.03\mathbbm{G}(0.4),\\
\mathbf{p_b}^\star&=( 0.12983, 0.070006, 0.010406),\\
\mathcal{C}^\star(\mce,\mcm)&= 0.435089 \textrm{ packets/epoch}.
\end{align}

\subsection{Capacity of Line Networks without Feedback}\label{FB-Sec2.4}
Feedback from next-hop node provides a natural means of buffer update and packet deletion. In the absence of feedback, due to the finiteness of buffers, each intermediate node must have a local rule for buffer update to accept packets that arrive. A rule for packet deletion or update must be maintained at each node so that the buffers are used efficiently. A network coded-scheme based on random linear combinations over a large finite field $\fq$ of size $q$ as is described in~\cite{TracyHoJrnl} presents an effective means of buffer update and packet delivery. Consider the following scheme based on network coding.
\begin{itemize}
\item[1.] At each epoch, a node having a buffer size of $m$ packets picks a vector $\mathbf{a}\in\fq^m$ uniformly at random to generate a random linear combination in the following manner. For each buffer slot $i\in\{1,\ldots,m\}$, the packet $P_i$ stored in that slot is represented as a vector over $\fq$ and the output packet is generated by computing $\sum_{i=1}^m a_i P_i$. This generated packet is then transmitted during the epoch.
\item[2.] If a packet $P$ is received by a node at an epoch, it first generates the output packet at that instant and then updates its buffer in the following manner. The node selects a vector $\mathbf{b}\in\fq^m$ uniformly at random and for each $k\in\{1,\ldots,m\}$, adds the packet $b_k P$ to the packet stored in the $k^\textrm{th}$ buffer slot.
\end{itemize}

Note that in the network coding scheme described above, after sufficient time after the commencement of packet transfer from the source, all buffer slots of every intermediate node almost always have non-trivial contents unlike the scheme with perfect feedback. However, it is not true that all of these packets are \emph{innovative},  {i.e.}, packets may contain common \emph{information}\footnote{Here, we use information to represent the number of linearly independent packets w.r.t. the chosen base field. A set $S=\{P_1,\ldots,P_N\}$ is said to contain $n$ packets of information if $\dim(\lsp(S))=n$.}. Such a condition may occur when the packets are linearly dependent in the algebraic sense. With this notion of information, the rate of information received by the destination node can be seen to be the asymptotic rate of arrival of innovative packets. The following result characterizes these rates achieved by the network coding scheme over the field $\fq$ and relates it to throughput capacity in the presence of lossless feedback $\mcc^{PF}(\mce,\mcm)$.

\begin{thm}\label{Thm-NFOpt}
Let $\mcc^{NF}_{\f_{q}}(\mce,\mcm)$ denote the rate of arrival of innovative packets at the destination node of a line network without feedback assuming that the aforementioned network coding scheme over the field $\f_{q}$ is employed. Then, for each sequence of finite fields $\{\f_{q_l}\}_{l\in\mathbbm{N}}$ such that ${q_l}\ra\infty$, we have
\begin{equation}
\lim_{l\ra\infty}\mcc^{NF}_{\f_{q_l}}(\mce,\mcm) = \mcc^{PF}(\mce,\mcm). \label{FB-eqn1}
\end{equation}
\end{thm}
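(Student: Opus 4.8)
The plan is to show the two inequalities $\limsup_l \mcc^{NF}_{\f_{q_l}}(\mce,\mcm) \leq \mcc^{PF}(\mce,\mcm)$ and $\liminf_l \mcc^{NF}_{\f_{q_l}}(\mce,\mcm) \geq \mcc^{PF}(\mce,\mcm)$ separately. The upper bound is the easy direction and is essentially information-theoretic: no scheme — coded or not, with or without feedback — can push information across the line faster than the cut around the last link allows, and more importantly the rate is bounded by the rate of \emph{successful packet storages} in the perfect-feedback optimal scheme, since the perfect-feedback scheme is rate-optimal among \emph{all} schemes (it was asserted to be rate-optimal in Section~\ref{FB-Sec2.1}). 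So the real content is the lower bound: the network-coding scheme, as $q\to\infty$, loses a vanishing fraction of throughput compared to the perfect-feedback genie.

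For the lower bound I would argue by a coupling between the coded no-feedback system and the EMC. Run both systems on the \emph{same} channel realizations $\{X_i(l)\}$. In the coded system every intermediate buffer is physically full after an initial transient, so the relevant state is not the number of stored packets but the \emph{rank} (dimension of the span) of the packets held at each node — more precisely, for node $v_i$, the ``innovation deficit,'' i.e. how much of the information seen so far by $v_i$ is not yet representable from $v_i$'s buffer and the buffers downstream. The key step is: when node $v_{i-1}$ transmits a random combination to $v_i$, the event that this packet is \emph{innovative} at $v_i$ (increases the relevant rank) fails with probability at most $O(1/q)$ whenever $v_{i-1}$ actually holds information that $v_i$ lacks (standard random-linear-combination argument over $\fq$, cf.~\cite{TracyHoJrnl}); and when $v_i$ receives an innovative packet but its buffer already has full rank, the random buffer-update rule preserves the span with probability at least $1-O(1/q)$ and drops exactly one dimension otherwise. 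Thus the rank-dynamics of the coded system stochastically dominate, up to per-epoch error probability $O(1/q)$, the occupancy dynamics of a system that behaves \emph{exactly} like the EMC — the ``full buffer'' in the coded system plays the role of ``$m_i$ useful packets'' and an innovative arrival that cannot be absorbed is the analogue of a blocked packet in the EMC. I would make this precise by defining, for the coded system, surrogate variables $\hat Y_i(l)$ recording whether an innovative packet successfully crosses link $i$, show they satisfy the EMC recursions~(\ref{FB-eqn2})--(\ref{FB-eqn3}) except on a set of epochs of density $O(1/q)$, and conclude that the time-averaged innovative-arrival rate at $v_h$ is at least $\mcc^{PF}(\mce,\mcm) - O(1/q)$. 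Letting $q_l\to\infty$ gives the liminf bound, and combined with the upper bound this yields~(\ref{FB-eqn1}).

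The main obstacle is handling the \emph{accumulation} of the per-epoch $O(1/q)$ failure events over an unbounded time horizon: a naive union bound over $l$ epochs gives $O(l/q)$, which is useless for a long-run rate. The fix is to observe that the EMC is positive-recurrent and ergodic (stated in the excerpt), so the coupling only needs to be tight over regeneration cycles of bounded expected length; a failure event resets at most a bounded (in expectation) amount of accumulated innovation, and by ergodicity these resets occur at a rate that is $O(1/q)$ times a finite constant depending only on $(\mce,\mcm)$. Equivalently, one can bound the \emph{expected innovation deficit} between the coded system and the genie by a quantity that stays $O(1/q)$ uniformly in time, using that each ``bad'' event adds $O(1)$ to the deficit and the EMC's drift toward its stationary regime drains the deficit at a positive rate. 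Making this deficit-Lyapunov argument rigorous — choosing the right potential function and showing its expected increments are $O(1/q)$ — is the technical heart of the proof; everything else is either the rate-optimality of the perfect-feedback scheme (already granted) or routine linear-algebra-over-$\fq$ estimates.
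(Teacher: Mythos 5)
Your overall intuition is right --- in the large-field limit, the rank dynamics of the coded no-feedback system mimic the occupancy dynamics of the EMC --- and your state variable (the ``innovation deficit'' at $v_i$, what you can represent from $v_i$ and downstream minus what you can represent from downstream alone) is exactly the $\eta_i(l,j)=\dim(\mcW_i(l,j))-\dim(\mcW_{i+1}(l,j))$ used in the paper. However, the paper's route is materially different and avoids the very obstacle you flag. Rather than a path-wise coupling of the coded system to the EMC (which, as you correctly note, runs into the problem of $O(1/q)$ per-epoch discrepancies accumulating over an unbounded horizon and so requires some drift/Lyapunov/regeneration machinery to control), the paper works directly at the level of Markov transition matrices: it establishes, by a case analysis on sub-epochs and repeated use of the ``random combination is innovative w.h.p.'' lemma and its corollary, that (a) the occupancy vector $\{\eta_i(l,h)\}$ evolves as a Markov chain whose transitions depend (asymptotically) only on the occupancies and not on the underlying spaces, and (b) each entry of that chain's transition matrix converges to the corresponding entry of the EMC transition matrix as $q\to\infty$. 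The conclusion then follows in one stroke from continuity of the stationary distribution as a function of the transition matrix (for irreducible finite chains), combined with (\ref{FB-eqn5.1}). This sidesteps accumulation entirely: one never couples sample paths over time, one only compares stationary regimes. Your plan can probably be made to work, but the deficit-Lyapunov step you describe as ``the technical heart'' is genuinely nontrivial and is precisely the work the paper's argument was engineered to avoid. Also, the paper does not split into $\limsup$/$\liminf$ inequalities at all; the two-sided convergence drops out of the matrix-continuity argument automatically, so your upper-bound step (while harmless) is not needed in their formulation.
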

\begin{proof}
The proof is presented in Appendix~\ref{App.3}.
\end{proof}
From (\ref{FB-eqn1}), we observe that there is no loss in achievable rates when feedback is absent and that the aforementioned network coding scheme is rate-optimal for line networks without feedback, provided a large field size is employed.

\section{Packet Delay Distribution}
\label{FB-Delay}
In this section, we use the iterative estimates of Section~\ref{FB-Sec2.3} to obtain estimates on the probability distribution of the delay experienced by information packets in line networks with perfect feedback under the optimal strategy of Section~\ref{FB-Sec2.1}. We abstain from defining latency of data packets in networks without feedback, since optimal schemes for such networks involve packet-level coding.

When perfect feedback is available, we define the delay of a packet as the time taken from the instant when the packet is stored in the buffer of the first intermediate node to the instant when the destination receives it. Since the delay statistics depend on how the packets are handled in intermediate nodes, in addition to the optimal scheme of Section~\ref{FB-Sec2.1}, we assume a \emph{first-come first-serve} treatment of packets at intermediate node buffers. Note that this assumption is made only for the ease of presentation. The framework permits the analysis of randomized schemes where each node after a successful transmission selects a packet in its buffer randomly and memorylessly, and transmits it repeatedly until it is stored at the next-hop node.

In order to compute the distribution of delay that a packet experiences in the network, one can proceed in a hop-by-hop fashion using two parameters: (1) $\rho_j$, an estimate of blocking probability at node $v_i$, $i=1,\ldots,h$, and (2) $\psi_i(k)$, an estimate of the distribution of occupancy at node $v_i$ (for each $i=1,\ldots,h-1$ and $k=0,\ldots,m_i$) just before packet arrival conditioned on the event that the arriving packet is successfully stored.

In the last relay node $v_{h-1}$, the additional delay perceived by a packet arriving at $l^\textrm{th}$ epoch depends on the occupancy $n_{h-1}(l)$ of the node $v_{h-1}$ and $\ve_h$. Suppose at epoch $l$, node $v_{h-1}$ has $k\leq m_{h-1}-1$ packets excluding the arriving packet. Then, the packet has to wait for the $k$ already-stored packets to leave before it can be serviced. Since the services are memoryless, the distribution of delay is given by a sum of $k+1$ independent geometric distributions each with a mean inter-arrival time $\frac{1}{1-\ve_h}$, i.e., ${\otimes^{k+1}\mathbbm{G}({\ve_h})}$. Hence, the distribution of additional delay induced by waiting in the buffer of $v_{h-1}$ is
\begin{equation}
{\mc{D}_{h-1}=\sum_{i=0}^{m_{h-1}-1} {\psi_{h-1} (i) \big[{\otimes^{i+1}\mathbbm{G}({\ve_h})}}\big].     \label{last_relay_delay}}
\end{equation}
However, the situation is different for other intermediate delays because of the effect of blocking. The additional delay incurred while being stored in the node $v_j,\, 0<j<h-1$, is given by
\begin{equation}
\mc{D}_{j}=\sum_{i=0}^{m_{j}-1} {\psi_{j} (i) \big[{\otimes^{i+1}\mathbbm{G}({\ve_{j+1}+{\rho}_{j+1}\ol\ve_{j+1}})}\big]}, \label{int_relay_delay}
\end{equation}
since a packet is deleted from the buffer of $v_j$ only if the channel successfully transmits it and $v_{j+1}$ does not block the arriving packet, which by assumption A2 occurs memorylessly with a probability $\ve'_{j+1}\triangleq\ve_{j+1}+{\rho}_{j+1}\ol\ve_{j+1}$. Assuming that the delays incurred by waiting in the buffer of each node is independent of each other, we obtain the total delay considering all hops to be
\begin{equation}
{\mc{D}=\mc{D}_{1}\otimes \cdots \otimes \mc{D}_{h-1}.}
\label{Delay_dist_formula}
\end{equation}
Note that in addition to the above delay, the source node attempts to transmit the packet multiple times before the packet is successfully accepted at the first intermediate node. The distribution of this time spent in this is approximately given by a random variable whose distribution is $\mathbbm{G}(\ve_{1}')$. Thus, the iterative estimation technique provides us with a framework to approximately but analytically compute the delay profile using an estimate of distribution of packets seen by an arriving packet that is successfully stored, and an estimate of the blocking probabilities.

Finally, the pair of estimates ($\psi_j (\cdot)$, $\rho_j$) can be obtained from the rate-approximate solution by using
\begin{align}
\rho_j&={p_b}_i^*,\\
\hspace{-2mm}\psi_j (i)&=\left\{\begin{array}{ll}
\frac{\phi_{v_j}^*(i)+\phi_{v_j}^*(i+1)(1-\ve_{j+1}')}{1-\phi_{v_j}^*(m_j)\ve_{j+1}'} & i=0\\ \frac{\phi_{v_j}^*(i)\ve_{j+1}'+\phi_{v_j}^*(i+1)(1-\ve_{j+1}')}{1-\phi_{v_j}^*(m_j)\ve_{j+1}'} & 1\leq i <m_j
\end{array}\hspace{-2mm}\right..
\end{align}

Similarly, another pair of estimates can be obtained using the Distribution-based Iterative Estimate by
\begin{align}
\rho_j&={p_b}_i^\star,\\
\psi_j (i)&=\left\{\begin{array}{ll}
\frac{\pi_j^\star(i+1)}{1-{p_b}_j} & i=0,\ldots,m_j-2\\
\frac{\pi_j^\star(m)-{p_b}_j^\star}{1-{p_b}_j^\star} & i=m_j-1
\end{array}\right.,
\end{align}
where $\pi_j^\star(\cdot)$ is the eigenvector of the (\ref{App2-eqn1}) upon convergence. Combining the above equations, two estimates for the delay profile for line networks with feedback can be obtained.

\section{Results of Simulation}\label{FB-sec4}

In this section we present the results of simulation comparing our analytic results to simulations of line networks with perfect feedback. First, the simulations for the capacity are presented, and then the simulations for delay profiles are presented. This section ends with a discussion on the efficient usage of buffers and the interplay of buffer size, capacity and delay.

In our model, a line network is completely defined by the number of hops, the erasure probability for each link and the buffer size at each intermediate node. To study the accuracy of our bounds and estimates, we vary one of these three parameters while keeping the remaining two fixed. In each of the figures, the actual capacity and bounds obtained via simulations are presented in addition to our estimates. Further, for the sake of brevity, we abbreviate Distribution-based Iterative Estimate (Algorithm~\ref{alg:DbIE})), Rate-based Iterative Estimates (Algorithm~\ref{alg:RbIE}), Lower Bound (Thm.~\ref{FB-thm1}), and Upper Bound (Thm.~\ref{FB-Ubthm}) to DbIE, RbIE, LB, and UB, respectively.

Figure~\ref{FB-LineL} presents the variation of the capacity
\begin{figure}[ht!]
\centering
\psfrag{Xaxis}{\small{ Number of hops $h$}}
\psfrag{Yaxis}{Capacity (packets/epoch)}
\psfrag{dddddddddddddddata1}{\scriptsize{DbIE ($\ve=0.25$)}}
\psfrag{data2}{\scriptsize{RbIE. ($\ve=0.25$)}}
\psfrag{data3}{\scriptsize{Capacity ($\ve=0.25$)}}
\psfrag{data5}{\scriptsize{UB ($\ve=0.25$)}}
\psfrag{data4}{\scriptsize{LB ($\ve=0.25$)}}
\psfrag{data6}{\scriptsize{DbIE ($\ve=0.50$)}}
\psfrag{data7}{\scriptsize{RbIE ($\ve=0.50$)}}
\psfrag{data8}{\scriptsize{Capacity ($\ve=0.50$)}}
\psfrag{data10}{\scriptsize{UB ($\ve=0.50$)}}
\psfrag{data9}{\scriptsize{LB ($\ve=0.50$)}}
\includegraphics[height=2.25in,width=3.4in,angle=0]{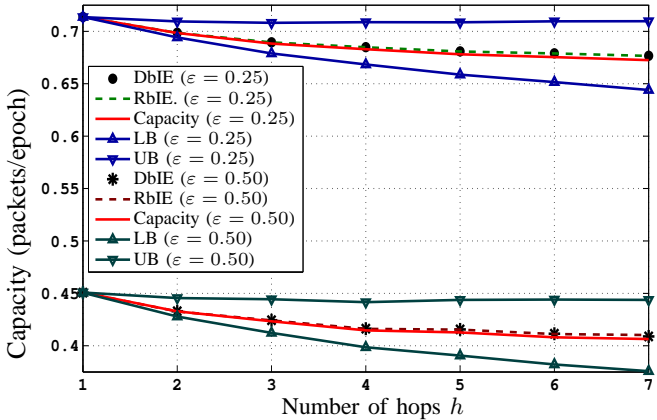}
\caption{Capacity of line networks with $m=5$ and varying hop-length $h$.}\label{FB-LineL}
\end{figure}
with the number of hops of line networks when each intermediate node possesses a buffer size of five packets. The figure presents simulations for networks when the probability of erasure on each link is set to either $0.25$ or $0.5$. First, it is noticed that the bounds and iterative estimates agree with the actual capacity for two-hop networks. Second, it is noticed that the bounds and estimates capture the variation of the actual capacity of the network. However, the estimates are more accurate. For both choices of channel parameters, both estimates predict throughput capacity within an error of 1\%. Further, it is also noticed from the figures that the independence assumptions of the estimates generally over-estimate the actual capacity of the network.

In order to study the effect of buffer size on capacity, we simulated a five-hop line network with each link having erasure probabilities just as in the previous setting. Figure~\ref{FB-LineM} presents the variation of our results and the 
\begin{figure}[ht!]
\centering
\psfrag{Xaxis}{\footnotesize{Buffer size $m$}}
\psfrag{Yaxis}{\hspace{-0mm}\footnotesize{Capacity (packets/epoch)}}
\psfrag{dddddddddddddddddata1}{\scriptsize{DbIE ($\ve=0.25$)}}
\psfrag{data2}{\scriptsize{RbIE ($\ve=0.25$)}}
\psfrag{data3}{\scriptsize{Capacity ($\ve=0.25$)}}
\psfrag{data4}{\scriptsize{LB ($\ve=0.25$)}}
\psfrag{data5}{\scriptsize{UB ($\ve=0.25$)}}
\psfrag{data6}{\scriptsize{DbIE ($\ve=0.50$)}}
\psfrag{data7}{\scriptsize{RbIE ($\ve=0.50$)}}
\psfrag{data8}{\scriptsize{Capacity ($\ve=0.50$)}}
\psfrag{data9}{\scriptsize{LB ($\ve=0.50$)}}
\psfrag{data10}{\scriptsize{UB ($\ve=0.50$)}}
\includegraphics[width=3.4in,angle=0]{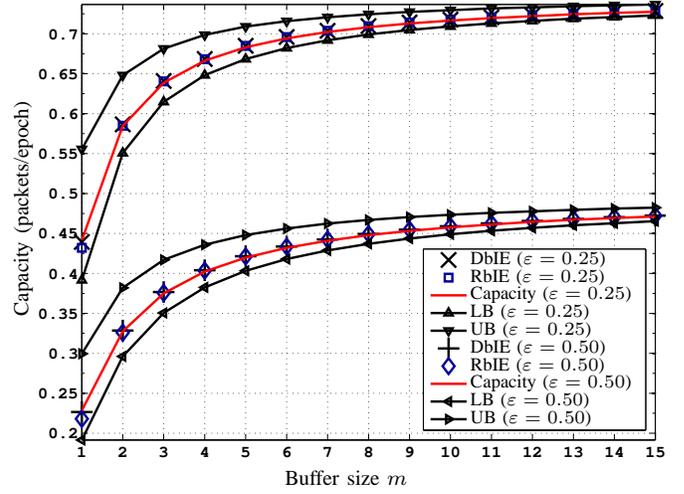}
\caption{Capacity of line networks with $h=5$ with varying buffer size $m$.}\label{FB-LineM}
\end{figure}
actual capacity as the buffer size of the intermediate node is varied. It can be seen that as the buffer size is increased, all curves approach the ideal min-cut capacity of $1-\ve$. Also, as is expected, the accuracy of the bounds improve with the buffer size.

Finally, the effect of the channel conditions on the capacity of a five-hop line network with intermediate buffer sizes of five packets each is presented in Figure~\ref{FB-LineE}. It is noticed that as the
\begin{figure}[ht!]
\centering
\psfrag{Xaxis}{\footnotesize{Probability of erasure $\ve$}}
\psfrag{Yaxis}{\footnotesize{Capacity (packets/epoch)}}
\psfrag{dddddddddddddddata1}{\footnotesize{UB}}
\psfrag{Data2}{\footnotesize{LB}}
\psfrag{Data3}{\footnotesize{Capacity}}
\psfrag{Data4}{\footnotesize{RbIE}}
\psfrag{Data5}{\footnotesize{DbIE}}
\psfrag{Data6}{\footnotesize{Min-cut Capacity}}
\includegraphics[width=3.4in,angle=0]{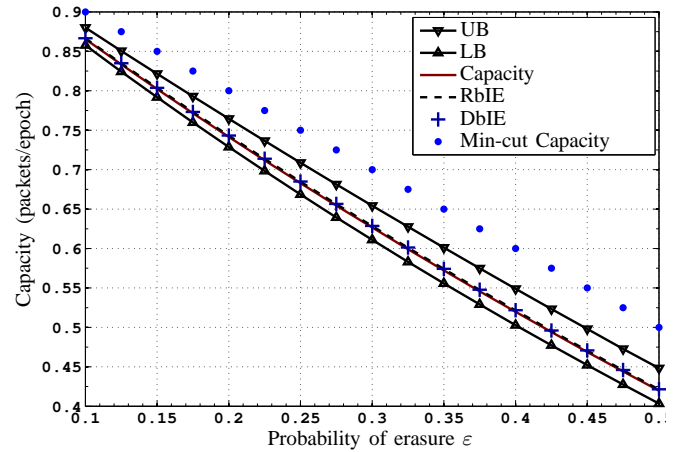}
\caption{Capacity of line networks with $h=5$, $m=5$ with varying erasure probability $\ve$.}\label{FB-LineE}
\end{figure}
probability of erasure increases, the loss in capacity due to finite buffer becomes more pronounced. For example, for the simulation setting of Fig.~\ref{FB-LineE}, the loss in capacity varies from $3.85\%$ at $\ve=0.1$ to $16.1\%$ at $\ve=0.5$ in a near-linear fashion. From these figures, we infer that it is paramount that the effect of blocking be considered as realistically as possible. Modeling the effect of blocking as packet loss (as is done to derive our bounds) only allows us to loosely bound the capacity of such networks.

\begin{figure}[ht!]
\centering
\psfrag{Xaxis}{\footnotesize{Delay (epochs)}}
\psfrag{Yaxis1}{\footnotesize{Probability mass}}
\psfrag{Yaxis}{\footnotesize{Difference in Estimates}}
\psfrag{ddddddddddddddddddddata}{\footnotesize{Sim. $m=5$}}
\psfrag{data2}{\footnotesize{Sim. $m=10$}}
\psfrag{data3}{\footnotesize{Sim. $m=15$}}
\psfrag{data4}{\footnotesize{RbIE $m=5$}}
\psfrag{data5}{\footnotesize{RbIE $m=10$}}
\psfrag{data6}{\footnotesize{RbIE $m=15$}}
\psfrag{data7}{\footnotesize{DbIE $m=5$}}
\psfrag{data8}{\footnotesize{DbIE $m=10$}}
\psfrag{data9}{\footnotesize{DbIE $m=15$}}
\psfrag{dddddddddddata1}{\footnotesize{$m=15$}}
\psfrag{data10}{\footnotesize{$m=10$}}
\psfrag{data11}{\footnotesize{$m=5$}}
\psfrag{(a)}{\small{(a)}}
\psfrag{(b)}{\small{(b)}}
\psfrag{m=5}{\tiny{$m=5$}}
\psfrag{m=10}{\tiny{$m=10$}}
\psfrag{m=15}{\vspace{2mm}\tiny{$m=15$}}
\includegraphics[height=2.5in,width=3.4in,angle=0]{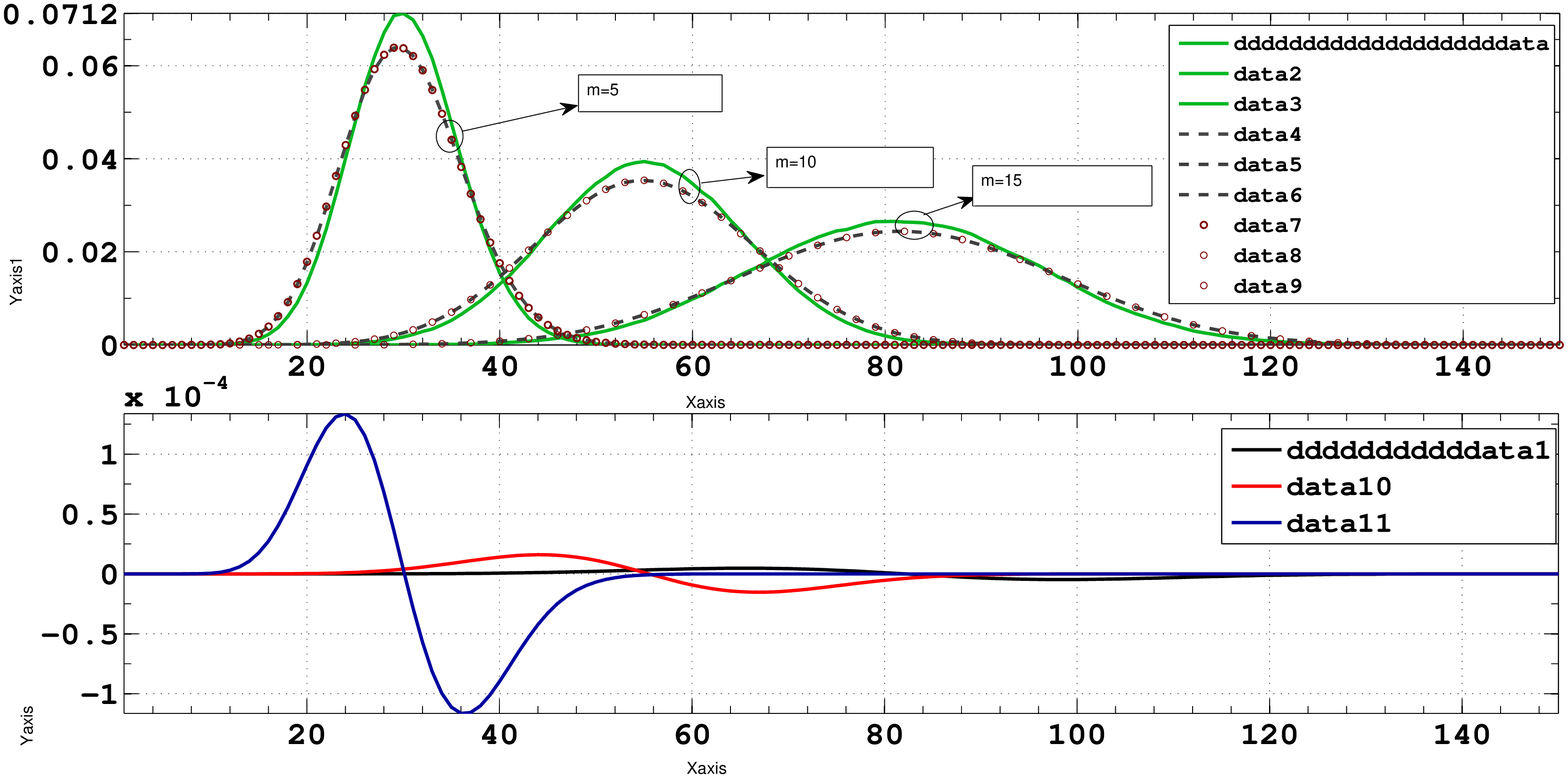}
\caption{FCFS delay profiles in 8-hop line networks with different buffers.}\label{Delaydist-Line}
\end{figure}

Figure~\ref{Delaydist-Line} presents the variation of delay profile for the optimal strategy in an eight-hop line network with the erasure probability on every link set to $0.25$. The delay profiles were simulated for three different buffer sizes. As in Section~\ref{FB-Delay}, the estimate and simulations were performed for the first-come first-serve strategy. From the first sub-plot, it is noticed that both mean and variance of the delay distribution increase as buffer sizes increase. While the mean delay obtained via simulations for the three memory settings are $30.22$, $55.18$, and $81.29$ epochs, whereas the analytical result for the same using the Dist.-based Iterative Estimate are $30.09$, $55.22$, and $81.68$ epochs, respectively. Note that the analytical estimates for the mean delay $\mu^*(\mce,\mcm)$ can be obtained without computing the delay profile by the use of Little's theorem~\cite{LittleRef} as follows.
\begin{equation}
\mu^*(\mce,\mcm)=\sum\limits_{i=1}^{h-1}\frac{ \langle \phi_{v_i}^* \rangle}{\mathcal{C}(\mce,\mcm)}=\sum\limits_{i=1}^{h-1}\frac{ \langle \phi_{v_i}^* \rangle}{\ol\ve_{h}(1-{\phi_{v_{h-1}}^*(0)})},
\end{equation}
where, as before, $\phi_{v_i}^*(\cdot)$ denotes the distribution of occupancy of $v_i$ at steady state given by the Rate-based Estimate. Note that each term in the above sum can be viewed as the contribution of the corresponding node to overall delay. It is noted that the analytic prediction of the delay profile is more conservative than the actual delay profile in the sense that the estimate of the variance is higher than the actual variance of packet delay. The second sub-plot of the figure illustrates the difference in the cumulative distribution of delay predicted by the two estimates. It is noticed from all the above simulations that there is only a minor difference between the two estimation schemes if the parameters of interest are either the throughput capacity or the delay profile.

Figure~\ref{FB-PDF} highlights the difference between the two estimates when continuous-time models are emulated using discrete-time epochs. Consider a three-hop line network where intermediate nodes have a buffer of three packets and their packet service distributions are exponential with $(\lambda_2,\lambda_3)=(3,2.99)\, s^{-1}$. Suppose that the arrival process at the first node is renewal with inter-arrival distribution being exponential with $\lambda_1=10\, s^{-1}$. The following figure presents the distribution of inter-departure duration from the second node. It is observed
\begin{figure}[ht!]
\centering
\psfrag{Xaxis}{\footnotesize{Time (s)}}
\psfrag{Yaxis}{\footnotesize{Probability density}}
\psfrag{ddddddddddddddddddata1}{\footnotesize{DbIE ($\Delta=0.001$)} }
\psfrag{data3}{\footnotesize{Continuous-time Simulation}}
\psfrag{data2}{\footnotesize{RbIE ($\Delta=0.001$)}}
\psfrag{1.5}{}
\includegraphics[height=2.35in,width=3.4in,angle=0]{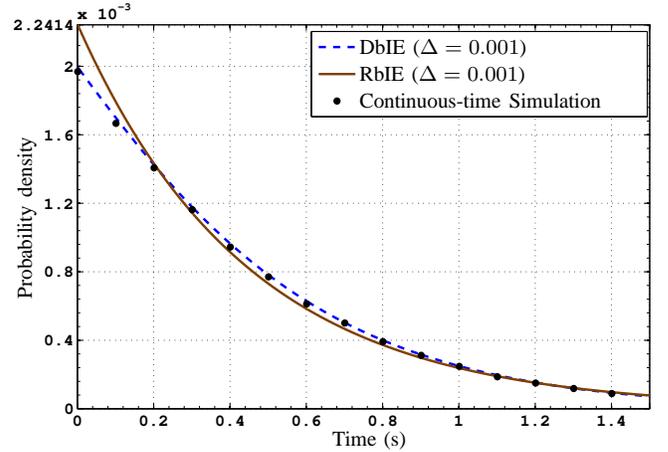}
\caption{The probability density of packet inter-arrival duration at the destination in a three-hop continuous-time line network.}\label{FB-PDF}
\end{figure}
that by lumping $\Delta=0.001$ seconds into each epoch, the Distribution-based Iterative Estimate provides a near-accurate distribution of the inter-departure durations. On the other hand, the Rate-based Iterative Estimate approximates the distribution as an exponential, which yields a less accurate estimate. Note that for this setting $\mathcal{C}_{\Lambda}(\mcm)=2.2467$ packets/sec, and the Distribution-based and Rate-based Estimates are 2.2447 and 2.2413 packets/sec, respectively.

\subsection{Buffer Allocation in Line Networks}\label{FB-Discussion}

In this section, we present a brief discussion on two questions pertaining to efficient usage of buffers in intermediate nodes. \emph{Is the use of more buffer slots, the merrier?} and \emph{How to allocate buffers to different nodes so that operation ensures near-min-cut throughput and acceptable delay?}

To address the first question, consider the eight-hop network of Fig.~\ref{Delaydist-Line}. As the buffer size is varied from 10 to 15 packets, the Rate-based Estimate for capacity changes from 0.7135 to 0.7254 packets/epoch -- a change of less than 1.5\% (of the min-cut bound). However, the mean latency changes from 55.18 to 81.29 epochs -- a 47\% change. Therefore, for each $\mce$, it is likely that there is a critical buffer size for each node beyond which the throughput capacity improvement is marginal; however, with increase in buffer sizes, the average time packets spend in the network continues to grows significantly. One must therefore identify the correct size of buffers to be used so that both latency and throughput capacity are acceptable.

To discuss the second issue, we illustrate with the following example. Consider a four-hop network with $\mce=[0.3\,\, 0.5\,\, 0.5\,\, 0.2]$ for which a good choice of buffer allocation needs to be identified under the constraint that the total number of buffers in the network must be no more than 30 packets. To this end, we use the Rate-based Estimate to study
\begin{figure}[htbp!]
\centering
\psfrag{data1}{\small{$v_1$}}
\psfrag{data2}{\small{$v_2$}}
\psfrag{data3}{\small{$v_3$}}
\psfrag{Xaxis}{\small{Buffer Size}}
\psfrag{Yaxis1}{\small{Capacity Estimate}}
\psfrag{Yaxis}{\small{Delay at a Node}}
\includegraphics[height=2.5in,width=3.4in, angle=0]{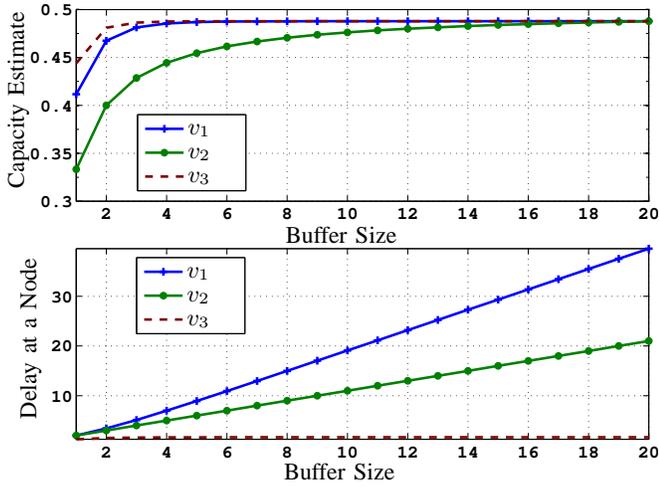}
\caption{Throughput, and average delay contribution by each intermediate node with varying buffer size.}
\label{Delay-Throughput}
\end{figure}
the effect of individual buffer sizes on throughput and delay. Fig.~\ref{Delay-Throughput} shows the variation of the throughput and delay contributed by each node when its memory is varied from 1 to 20 packets, while the buffer sizes of other intermediate nodes are kept at 20 packets. In this example, it is noticed that maximum throughput estimate for all choices of memory estimates is $0.4871$ packets/epoch when $\mcm_a=(5, 21, 4)$. This setting offers a mean packet delay of $32.24$ epochs. However, minimum delay configuration amongst those that offer a throughput more than $0.485$ packets/epoch is $\mcm_b=(4, 20, 6)$, which offers a throughput of $0.4851$ packets/epoch and a mean packet latency of 28.46 epochs. The actual capacity and delay for these configurations were found to be $\mcc(\mce,\mcm_a)=0.4871$ packets/epoch, $\mu(\mce,\mcm_a)=32.17$ epochs and $\mcc(\mce,\mcm_b)=0.4858$ packets/epoch, $\mu(\mce,\mcm_b)= 28.33$ epochs, respectively.

To understand further these patterns, we present in Fig.~\ref{Buffer-Occupancy} the steady-state occupancy of the three intermediate nodes when buffer sizes are set to $\mcm_a=(5,21,4)$ packets, $\mcm_b=(4,20,6)$ packets and $\mcm_c=(15,15,15)$ packets, respectively. In all settings, it is noted that the node $v_1$ is congested because the sub-network from $v_1$ to $v_4$ has a min-cut
\begin{figure}[htbp!]
\centering
\psfrag{dat1}{\footnotesize{$\mcm_b$}}
\psfrag{dat2}{\footnotesize{$\mcm_a$}}
\psfrag{dat3}{\footnotesize{$\mcm_c$}}
\psfrag{Yaxis}{\footnotesize{Probability}}
\psfrag{Xaxis1}{\footnotesize{Occupancy for $v_1$}}
\psfrag{Xaxis2}{\footnotesize{Occupancy for $v_2$}}
\psfrag{Xaxis3}{\footnotesize{Occupancy for $v_3$}}
\includegraphics[width=3.4in,angle=0]{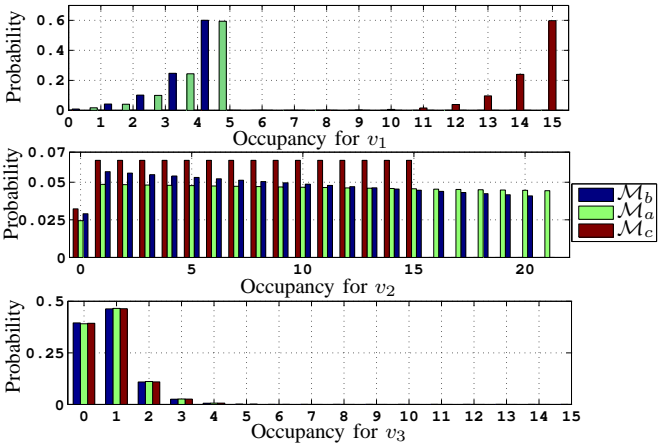}
\caption{Estimated Buffer occupancy distribution in intermediate nodes.}
\label{Buffer-Occupancy}
\end{figure}
capacity of $0.5$, whereas it receives packets at the rate of $0.7$. Therefore, the steady-state occupancy of the node $v_1$ for $\mcm_a$ and $\mcm_c$ are translates of that of $\mcm_b$. Due to congestion, an arriving packet at such a node usually sees very high occupancy. Hence, in a first-come first-serve mode of operation, the arriving packet has to wait long before getting serviced. Therefore, it is critical that the buffer size of congested nodes (such as $v_1$) be kept to absolute minimum to minimize average packet delay. Similarly, $v_3$ can at most receive packets at a rate of $0.5$, however the outgoing link can communicate packets at a much higher rate. Therefore, the buffer of $v_3$ is never full as long as the buffer size is greater than five. Nodes such as $v_3$ that are never congested contribute little to the delay experienced by packets. Hence, limiting buffer sizes of such nodes is not critical for delay as long as the sizes are bigger than their threshold sizes (beyond which throughput increase is marginal).

Occupancy in nodes like $v_2$ that are neither congested nor starved undergo non-trivial changes with changes in buffer sizes. These nodes contribute significantly to both the throughput and average packet delay in the network. For example, in the example network $v_2$ has a near-uniform distribution for both $\mcm_a$ and $\mcm_b$. Just like congested nodes, such nodes have to be allocated buffer sizes so that the they neither block packets nor contribute to delay significantly. Though the classification of nodes as congested, starved or neither can usually be done by focusing on $\mce$, good memory allocation requires knowledge of trends of latency and throughput with buffer sizes, which in turn require the help of more sophisticated estimates such as those proposed in this work.

As a second example, consider another four-hop network with $\mce_c=[0.51\,\,0.50\,\,0.49\,\,0.48]$. In the infinite buffer setting, the queueing system corresponding to this buffer configuration is stable. Hence, no node can be classified \emph{a priori} as congested. Suppose that a throughput-optimal allocation of buffer sizes for intermediate nodes is to be designed with the constraint that the total number of packets in the network be limited to 60. Clearly, a na\"{\i}ve first guess is to assign $\mcm_d=[20\,\,20\,\,20]$. However, notice that no matter how large the buffer sizes are, the probability of blocking at any node is always non-zero. Hence, the rate of arrival that $v_2$ and $v_3$ see is smaller than that noticed by $v_1$. Therefore, it is meaningful to assign $v_1$ a larger buffer size to minimize blocking at $v_1$ and maximize throughput. Although this intuition is correct, it is unclear as to how to allocate buffers. The strength of the iterative technique is in resolving exactly this issue by assigning estimates to each buffer allocation configuration. By searching around the neighborhood of $\mcm_d$, the maximum throughput configuration is found to be $\mcm_e=[27\,\,20\,\,13]$.

As is illustrated by these examples, the proposed iterative estimation techniques presents a framework to identify nodes in line networks that are either: (a) starved and therefore play an insignificant role in capacity and packet delay (such as $v_3$ of Fig.~\ref{Buffer-Occupancy}), or (b) congested and contribute significantly to packet delay (such as $v_1$ of Fig.~\ref{Buffer-Occupancy}), or (c) contribute significantly to both capacity and packet delay (such as $v_2$ of Fig.~\ref{Buffer-Occupancy}). On identifying these nodes, it is possible to identify configurations that make efficient use of the buffers without severely compromising on either throughput capacity or average packet delay.

\section{Conclusions}\label{Conc}

This work focused on the effect of finite buffers on the throughput capacity and packet delay profile in line networks with packet erasure links. First, an exact Markovian framework for modeling line networks with perfect feedback was presented. The framework was simplified using independence assumptions to derive iterative estimation techniques that yield approximations of all marginal buffer statistics and also allow to identify the packet delay profile in such networks. Further, it was shown that the absence of feedback has no effect on the throughput capacity of line networks provided packet-level coding is permitted. Finally, via simulations, the proposed iterative techniques were noticed to be computationally-efficient and near-accurate models to analyze and study the behavior of line networks.

\appendices
\section{Discrete and Continuous models}\label{App.0-DiscvsCts}
In this section, we argue that the discrete model assumed in the paper can be used to study the capacity of tandem queue model with type II blocking (see~\cite{HJPBook}) and independent exponential service times at each node. Consider a tandem queue of $h$ links and $h-1$ intermediate nodes. Suppose $\Lambda=(\lambda_1,\ldots,\lambda_h)$ denotes the parameters for the exponential service times at $v_0,\ldots,v_{h-1}$, respectively. Assuming that each intermediate node has buffers given by $\mcm=(m_1,\ldots,m_{h-1})$ and that $N_t$ denotes the number of packets collected by $v_h$ in the period $[0,t)$, the throughput capacity $\mcc_\Lambda(\mcm)$\footnote{Note that the definition of throughput using (\ref{eqn40}) hinges on the ergodicity of the continuous-time system.} of the system is defined by
\begin{equation}
\mcc_\Lambda(\mcm) \triangleq \lim_{t\rightarrow \infty} \frac{N_t}{t} \label{eqn40}
\end{equation}
can be computed from a discrete model assumed in this paper. For example,
\begin{figure}[htbp!]
\centering
\psfrag{ddddddddata1}{\small{$\mcc_{\Lambda}$}}
\psfrag{data2}{\small{$\tau=1/4$}}
\psfrag{data3}{\small{$\tau=1/8$}}
\psfrag{data4}{\small{$\tau=1/16$}}
\psfrag{data5}{\small{$\tau=1/64$}}
\psfrag{Memory}{\small{Buffer sizes}}
\psfrag{Throughput}{\small{Throughput (in packets/sec)}}
\includegraphics[width=3.4in, angle=0]{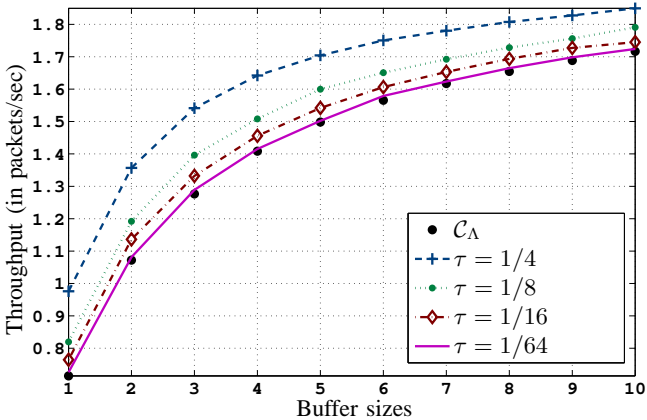}
\caption{Throughput of continuous and discrete systems for varying buffer sizes}
\label{CtsvsDisc-Throughput}
\end{figure}
Fig.~\ref{CtsvsDisc-Throughput} considers a four-hop system with each node having exponential processing times with parameter $\lambda=2$ s${}^{-1}$ and compares it with four discretized models. Note that the approximations become finer as smaller values of $\tau$ are chosen. This fact can be formalized as follows.

\begin{thm}
\begin{equation}
\mcc_\Lambda(\mcm) = \lim_{\tau\rightarrow 0} \tau^{-1}\mcc^{PF}(1-\Lambda\tau,\mcm),
\end{equation}
where the right-hand side uses the discrete-time model of (\ref{FB-eqn5.1}).
\end{thm}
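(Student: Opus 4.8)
The plan is to realize the discrete-time chain with erasure vector $1-\Lambda\tau\triangleq(1-\lambda_1\tau,\ldots,1-\lambda_h\tau)$ as a time-discretization of the continuous-time tandem-queue Markov chain and then push convergence through the (continuous) map that sends an infinitesimal generator to its stationary distribution. Throughout, I would fix $\tau\in(0,1/\max_i\lambda_i)$ so that $1-\Lambda\tau$ is a valid erasure vector, and abbreviate $P_\tau\triangleq P(1-\Lambda\tau,\mcm)$. First I would record the two probabilistic objects. On the continuous side, the exponential tandem queue with rates $\Lambda$ and intermediate buffers $\mcm$ is a finite, irreducible continuous-time Markov chain $\{Z(t)\}$ on the occupancy states (augmented, if one insists on the type-II discipline, by the blocked/unblocked status of each server); let $Q=Q(\Lambda,\mcm)$ be its generator and $\mu$ its unique stationary distribution. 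By the ergodic theorem applied to (\ref{eqn40}), $N_t/t$ converges a.s.\ to the $\mu$-average of the instantaneous delivery rate into $v_h$, which is $\lambda_h$ on the set where $v_{h-1}$ is nonempty; hence $\mcc_\Lambda(\mcm)=\lambda_h\,\mu(\{n_{h-1}>0\})$.

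On the discrete side, Lemma~\ref{FB-lem1} and the surrounding discussion give that $P_\tau$ is finite, irreducible and aperiodic (aperiodicity is immediate, since the empty state has self-loop probability $1-\lambda_1\tau>0$), with a unique stationary distribution $\pi_\tau$; and (\ref{FB-eqn5.1}) gives $\mcc^{PF}(1-\Lambda\tau,\mcm)=\ol\ve_h\,\pi_\tau(\{s_{h-1}>0\})=\lambda_h\tau\,\pi_\tau(\{s_{h-1}>0\})$, so that
\[
\tau^{-1}\mcc^{PF}(1-\Lambda\tau,\mcm)=\lambda_h\,\pi_\tau(\{s_{h-1}>0\}).
\]
Thus the theorem reduces to proving $\pi_\tau\to\mu$ as $\tau\to0$.

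The core step is the generator convergence $A_\tau\triangleq\tau^{-1}(P_\tau-I)\to Q$. Here I would compute $P_\tau$ directly from (\ref{FB-eqn2})--(\ref{FB-eqn3}) with per-link success probabilities $\ol\ve_i=\lambda_i\tau$: the probability that two or more links succeed in a single epoch is $O(\tau^2)$; the probability that exactly link $i$ succeeds and actually moves a packet (which requires $n_{i-1}>0$ and $n_i<m_i$, with the usual conventions at $i=1$ and $i=h$) is $\lambda_i\tau+O(\tau^2)$; and the holding (diagonal) probability is $1-\tau\!\sum_{i\ \mathrm{active}}\lambda_i+O(\tau^2)$. Crucially, the transmit-first, downstream-to-upstream ordering within an epoch can change the outcome only when at least two links succeed in the same epoch, an event of probability $O(\tau^2)$, so that ordering is irrelevant in the limit. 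Assembling these estimates yields $A_\tau\to Q$ entrywise, where $Q$ is the generator of the exponential tandem queue with the blocking discipline that the discrete rules induce. (If one wants $Q$ to be the \emph{type-II} generator named in the appendix's preamble rather than the blocking-before-service generator that drops out of the discrete rules most directly, one inserts here the classical observation that, for exponential servers, the two disciplines yield the same long-run throughput after the customary unit adjustment of buffer sizes; alternatively one simply takes (\ref{eqn40}) to \emph{define} $\mcc_\Lambda(\mcm)$ through the limiting chain.)

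Finally I would invoke continuity of the stationary map. Since $P_\tau$ is irreducible, $A_\tau$ has a one-dimensional left null space spanned by $\pi_\tau$; since $Q$ is an irreducible generator on the same finite state space, it has a one-dimensional left null space spanned by $\mu$. By the matrix-tree theorem — equivalently, by Cramer's rule applied to any $(|\Sigma|-1)$ of the balance equations together with $\sum_\sigma=1$ — the stationary vector of an irreducible generator is a ratio of polynomials in the off-diagonal entries whose denominator is nonzero at $Q$; hence the map (generator)$\mapsto$(normalized stationary vector) is continuous at $Q$, and $A_\tau\to Q$ forces $\pi_\tau\to\mu$. Combining with the displayed identity gives $\tau^{-1}\mcc^{PF}(1-\Lambda\tau,\mcm)=\lambda_h\,\pi_\tau(\{s_{h-1}>0\})\to\lambda_h\,\mu(\{n_{h-1}>0\})=\mcc_\Lambda(\mcm)$, as claimed. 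The step I expect to be the main obstacle is the generator identification: controlling the $O(\tau^2)$ remainders and, above all, pinning down exactly which continuous-time blocking discipline the transmit-first discrete rules converge to and reconciling it with the type-II model; everything after that — irreducibility/aperiodicity of $P_\tau$, the throughput formula, and continuity of the stationary map — is routine finite-state Markov-chain theory.
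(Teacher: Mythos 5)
Your argument is correct and leads to the same conclusion as the paper, but by a genuinely different decomposition, so a comparison is worthwhile. The paper never writes down or compares generators. Instead, for both the continuous system and the discrete approximant it forms the \emph{embedded} chain observed just before each service-completion \emph{attempt} at $v_{h-1}$ (exponential-$\lambda_h$ attempts in continuous time, Bernoulli-$\lambda_h\tau$ attempts in discrete time), giving $\Pi=M\int P_t\,dF_{\lambda_h}(t)$ and $\Pi^\Delta_\tau=\sum_{i\geq 0}(\lambda_h\tau)(\overline{\lambda_h\tau})^i M(P^\Delta_\tau)^i$. It then shows $\Pi^\Delta_\tau\to\Pi$ entrywise by matching the geometric weights $(\lambda_h\tau)(\overline{\lambda_h\tau})^i$ against the exponential increments $e^{-i\lambda_h\tau}-e^{-(i+1)\lambda_h\tau}$, after a Fubini step and the same $P_\tau=P^\Delta_\tau+o(\tau^2)$ estimate you use. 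Both proofs then finish identically: continuity of the stationary vector as a function of an irreducible finite transition matrix, plus the common throughput formula $\lambda_h\cdot\Pr[s_{h-1}>0]$ (PASTA/BASTA makes the embedded and time-stationary distributions coincide here, since the attempt process is state-independent on both sides). What the paper's route buys is that, by constructing $\Pi$ and $\Pi^\Delta_\tau$ from the \emph{same} one-epoch kernel $P^\Delta_\tau$ and only then integrating out the inter-attempt time, it never has to name the limiting continuous-time generator at all, and in particular never has to argue about which blocking discipline the discrete rules implement. Your route, via $A_\tau=\tau^{-1}(P_\tau-I)\to Q$ and Cramer/matrix-tree continuity, is cleaner in the operator-theoretic sense but does require the identification step you flagged. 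For the record, that step goes through without your proposed hedging: under the paper's rules a packet that is transmitted but not acknowledged simply stays in $v_i$'s buffer (which already counts it against $m_i$) and is retransmitted next epoch, so the discrete dynamics are blocking-after-service on a node of total capacity $m_i$, and the memoryless limit is exactly the type-II generator on the same state space $\{0,\dots,m_i\}$; no buffer-size shift is needed, and the ``blocking-before-service'' alternative you mention is not what drops out. With that observation substituted for your parenthetical caveat, your proof stands as a valid and somewhat more standard alternative to the paper's embedded-chain argument.
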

\begin{proof}
We begin by constructing the probability transition matrices for continuous and discrete chains that track the state of the system just before a departure from the last intermediate node. Note that both chains use the same state space $\mcs=\{(s_1,\ldots,s_{h-1}): 0\leq s_i \leq m_i\}$, however their transition probabilities are different.

Let $\Pi$ denote the probability transition matrix for the continuous model. Let $M$ denote the transition matrix that effects the change in states when a departure from the last intermediate node occurs and let $P_t$ denote the transition matrix corresponding to changes in state over a duration of $t$ seconds given that no departure occurs in that duration. Then,
\begin{equation}
\Pi=M\int_\mathbb{R} P_t dF_{\lambda_{h}}(t),
\end{equation}
where $F_{\lambda_{h}}$ denotes the cumulative density function of the exponential RV with parameter $\lambda_{h}$. Notice that for any $\tau>0$, $P_t=(P_\tau)^{\lfloor \frac{t}{\tau} \rfloor} P_{t-\tau\lfloor \frac{t}{\tau}\rfloor}$. Therefore,
\begin{align}
P_t=\lim_{\tau\rightarrow 0} (P_\tau)^{\lfloor \frac{t}{\tau} \rfloor}  P_{t-\tau\lfloor \frac{t}{\tau}\rfloor} \stackrel{(a)}{=} \lim_{\tau\rightarrow 0} (P_\tau)^{\lfloor \frac{t}{\tau} \rfloor},
\end{align}
where (a) follows since $P_{t-\tau\lfloor \frac{t}{\tau}\rfloor} \rightarrow I$. Let $P^\Delta_\tau$ denote the state transition matrix for one time epoch of the discretized model with $\mce=1-\Lambda \tau$. Then, we have
\begin{align}
P_\tau=P^\Delta_\tau +o(\tau^2) \Rightarrow P_t &= \lim_{\tau\rightarrow 0} \Big(P^\Delta_\tau+o(\tau^2)\Big)^{\lfloor \frac{t}{\tau}\rfloor}\nonumber\\
&=\lim_{\tau\rightarrow 0} \Big(P^\Delta_\tau\Big)^{\lfloor \frac{t}{\tau}\rfloor}.
\end{align}
Therefore, we have
\begin{align}
\Pi&=M\int_\mathbb{R} P_t dF_{\lambda_{h}}(t) = M\int_\mathbb{R} \lim_{\tau\rightarrow 0} \Big(P^\Delta_\tau\Big)^{\lfloor \frac{t}{\tau}\rfloor} dF_{\lambda_{h}}(t)\nonumber\\
&\stackrel{(a)}{=} \lim_{\tau\rightarrow 0} M\int_\mathbb{R} \Big(P_\tau^\Delta\Big)^{\lfloor \frac{t}{\tau}\rfloor} dF_{\lambda_{h}}(t)\nonumber\\
&= \lim_{\tau\rightarrow 0} \sum_{i=0}^\infty M{P^\Delta_\tau}^i \int_{i\tau}^{i\tau+\tau} dF_{\lambda_{h}}(t)\nonumber\\
&= \lim_{\tau\rightarrow 0} \Bigg(\sum_{i=0}^\infty \bigg[\Big(e^{-i\lambda_{h}\tau}-e^{-(i+1)\lambda_{h}\tau}\Big) M{P^\Delta_\tau}^i \bigg]\Bigg),
\end{align}
where (a) follows from Fubini-Tonelli Theorem~\cite{FollandBook}. The discrete equivalent $\Pi^\Delta_\tau$ of the above transition matrix that tracks the state between departures for the corresponding time-discretized system is given by
\begin{align}
\Pi_\tau^\Delta&=\lambda_{h}\tau M+(\ol{\lambda_{h}\tau})\lambda_{h}\tau MP^\Delta_\tau+(\ol{\lambda_{h}\tau})^2\lambda_{h}\tau M{P^\Delta_\tau}^2+\cdots\nonumber\\
&= \sum_{i=0}^\infty \bigg[\Big((\lambda_{h}\tau)(\ol{\lambda_{h}\tau})^{i}\Big)M{P^\Delta_\tau}^i\bigg]
\end{align}
Let $\xi(\mathbf{s},\mathbf{s}')\triangleq\lim_{\tau\rightarrow 0} \Big(\Pi(\mathbf{s},\mathbf{s}')-\Pi^\Delta_\tau(\mathbf{s},\mathbf{s}')\Big)$ for any pair of states $\mathbf{s}$, $\mathbf{s}'$. Then,
\begin{align}
\xi(\mathbf{s},\mathbf{s}')&\stackrel{(b)}{\leq} \lim_{\tau\rightarrow 0} \bigg[\sum_{i=0}^\infty \Big|e^{-i\lambda_{h}\tau}(\ol{e^{-\lambda_{h}\tau}}) -(\lambda_{h}\tau)(\ol{\lambda_{h}\tau})^{i}\Big|\bigg]\nonumber\\
&\stackrel{(c)}{\leq}\lim_{\tau\rightarrow 0} \bigg[\sum_{i=0}^\infty (\lambda_{h}\tau)\Big|e^{-i\lambda_{h}\tau}-(\ol{\lambda_{h}\tau})^{i}\Big| +\frac{e(\lambda_{h}\tau)^2}{\ol{e^{-\lambda_{h}\tau}}}\bigg]\nonumber\\
&=\lim_{\tau\rightarrow 0} \bigg[\sum_{i=0}^\infty (\lambda_{h}\tau)\Big|e^{-i\lambda_{h}\tau}-(1-\lambda_{h}\tau)^{i}\Big|\bigg]\nonumber\\
&\stackrel{(d)}{=}\lim_{\tau\rightarrow 0} \bigg[(\lambda_{h}\tau)\sum_{i=0}^\infty \Big(e^{-i\lambda_{h}\tau}-(1-\lambda_{h}\tau)^{i}\Big)\bigg]\nonumber\\
&=\lim_{\tau\rightarrow 0} \Big(\frac{\lambda_{h}\tau}{1-e^{-\lambda_{h}\tau}}-1\Big)=0.
\end{align}
Note that (b) follows since $M{P^\Delta_\tau}^i$ is a probability matrix and hence each component is bounded above by unity, and (c) follows since for $\lambda_{h}\tau<1$, it is true that
\begin{equation}
|1-e^{\lambda_{h}\tau}|-{\lambda_{h}\tau}=\sum_{i\geq 2}\frac{(\lambda_{h}\tau)^i}{i!}\leq e(\lambda_{h}\tau)^2,\nonumber
\end{equation}
and (d) follows from $e^{-x} >1-x$ for $x>0$. Thus $\Pi^\Delta_\tau \rightarrow \Pi$ as $\tau\rightarrow 0$. Let $\nu^\Delta_\tau$ and $\nu$ be the eigenvectors of $\Pi_\tau^\Delta$ and $\Pi$, respectively. Then, since the steady-state distribution of a chain is a continuous function of the transition matrix, it follows that $\nu^\Delta_\tau \rightarrow \nu$, as $\tau\rightarrow 0$. However, the capacity computed using continuous and discrete models are given by
\begin{align}
\mcc_\Lambda(\mcm)&={\lambda_{h}}\sum_{\mathbf{s}:s_{h-1}>0}\nu(\mathbf{s})\nonumber\\
\mcc^{PF}(1-\Lambda\tau,\mcm)&={\lambda_{h}\tau}\sum_{\mathbf{s}:s_{h-1}>0}\nu_\tau^\Delta(\mathbf{s})\nonumber.
\end{align}
Therefore, $\tau^{-1}\mcc^{PF}(1-\Lambda\tau,\mcm)\rightarrow \mcc_\Lambda(\mcm)$.
\end{proof}

\section{Proof of Lemma~\ref{FB-lem1}}\label{App.1-1}
 (a) Suppose that the state of the system is $\mathbf{n}=(n_1(l),\ldots,n_{h-1}(l))$ with $0< n_{h-1}(l) < m_{h-1}$, then from (\ref{FB-eqn2}), we notice that $Y_h=X_h$ and $Y_{h-1}=\sigma[n_{h-2}(l)]X_{h-1}$. Hence, given the event $m_{h-1}>n_{h-1}(l)>0$, $\mathbf{Y}(l)=(Y_1(l),\ldots,Y_{h}(l))$ depends only on $(n_1(l),\ldots,n_{h-2}(l))$ and $(X_1(l),\ldots,X_{h}(l))$ and not on $n_{h-1}(l)$. This guarantees that $(\Gamma_i^-, \Gamma_i^+, \Omega_i)= (\Gamma_j^-, \Gamma_j^+, \Omega_j)$ for $0< i,j < m_{h-1}$.

 (b) First suppose $h>2$. Consider $\Gamma_i^-$ for some $i>0$ and the state of the system at some time $l\in\mathbbm{N}$. $\Gamma_i^-$ represents transitions from states that have the form   $(n_1(l),n_2(l),\ldots,n_{h-1}(l)=i)$ to states of the form   $(n_1(l+1),n_2(l+1),\ldots,n_{h-1}(l+1)=i-1)$. Since $n_{h-1}(l+1)=n_{h-1}(l)-1$, it must be that $Y_{h-1}(l)=0$ and that the channel must have erased the packet transmitted by $v_{h-2}$. Denote $L_i=\prod_{1\leq k<i}(m_i+1)$ for $i>1$ and $L_1=1$. Then, it is seen that for any realization of   $\{X_i(l)\}_{i=0}^{h-3}$, it is true that the state transition must obey
 \begin{equation}
 1+n_1(l)+\sum_{i=2}^{h-2} n_i(l)L_i\leq 1+n_1(l+1)+\sum_{i=2}^{h-2} n_i(l+1)L_i.\nonumber
 \end{equation}
 However, $\Bigl(1+n_1(l)+\sum_{i=2}^{h-2} n_i(l)\prod_{j=1}^{i-1} (m_j+1)\Bigr)$ is the index of the row corresponding to the state $\bfn(l)$ within $\Gamma_i^-$ and $\Bigl(1+n_i(l+1)+\sum_{i=2}^{h-2} n_1(l+1)\prod_{j=1}^{i-1} (m_j+1)\Bigr)$ is the index of the column corresponding to $\bfn(l+1)$ within $\Gamma_i^-$. Therefore, all possible transitions in $\Gamma_i^-$ correspond to transitions from states to other state that involve a non-positive change in the row-index. Therefore, $\Gamma_i^-$ is upper triangular. Finally, since each diagonal
 term of $\Gamma_i^-$ is bounded below by $\overline{\ve}_{h}\prod_{k=0}^{h-2} \varepsilon_{k+1}$, we conclude that
 \begin{equation}
\det(\Gamma_i^-)\geq \Bigl(\overline{\ve}_{h}\prod_{k=0}^{h-2} \varepsilon_{k+1}\Bigr)^{L_{h-1}}>0.
 \end{equation}
Finally, if $h=2$, it is easy to see that $\Gamma^-_i=[\ol\ve_2\ve_1]$.

(c) Consider a transition under $\Gamma_i^+$ for $i<m_{h-1}$ from a state that has the form
$(n_1(l),n_2(l),\ldots,n_{h-1}(l)=i)$ to another that has the form $(n_1(l+1),n_2(l+1),\ldots,n_{h-1}(l+1)=i+1)$ after an epoch. Since $n_{h-1}(l+1)=n_{h-1}(l)+1$, it must be that the packet transmitted during this epoch on the link $(v_{h-2},v_{h-1})$ must have reached successfully, {i.e.}, $Y_{h-2}(l)=1$. By an argument similar to the above one, we
can show that $\Gamma_i^+$ is lower triangular. However, certain diagonal terms are zero. In specific, consider the transition from state $(n_1(l)=0,\ldots,n_{h-2}(l)=0, n_{h-1}(l)=i)$ to the state $(n_1(l)=0,\ldots,n_{h-2}(l)=0, n_{h-1}(l)=i+1)$ which corresponds to the $(\Gamma_i^+)_{11}$. However, this transition is impossible when $h>2$, since the node $v_{h-2}$ has no packets to send during this epoch. Thus, $\det(\Gamma_i^+)=0$ if $h\geq 3$.

(d) The non-singularity of $I-\Omega_i$ follows from the fact that $(I-\Omega_i)$ is diagonal dominant~\cite{Horn}, since $(I-\Omega_i)_{kk}\geq\sum_{k'\neq k} |(I-\Omega_i)_{kk'}|$. On the other hand, since $\Gamma_i^+,\Gamma_i^-\neq \mathbf{0}$, there exists at least one $k$ for which the inequality is strict, which guarantees the non-singularity of these matrices.

\section{Proof of Theorem~\ref{FB-thm1}}\label{App.1-2}
We proceed by mathematical induction on the time index $l$. Clearly, the condition holds for $l=0$. Suppose that the claim is true for all nodes and for times $l=0,\ldots,k$ for some $k\geq 0$. Consider the states of the node $v_i$ for some $i=2,\ldots,h-1$ in both chains at time instant $k$. One of the two following cases must apply.
\begin{itemize}
\item[1.] $\underline{n_i(k)=\nt_i(k)}$: In this case, we note that
\begin{equation}
n_i(k+1)-\nt_i(k+1)=Y_i(k)-\tilde{Y}_i(k)-Y_{i+1}(k)+\tilde{Y}_{i+1}(k). \nonumber
\end{equation}

If $n_i(k)=\nt_i(k)=0$, then $Y_{i+1}(k)=\tilde{Y}_{i+1}(k)=0$ and
\begin{align}
Y_i(k)-\tilde{Y}_i(k)=X_i(k)[\sigma[n_{i-1}(k)]-\sigma[\nt_{i-1}(k)]\geq 0.\nonumber
\end{align}
Thus $n_i(k+1)-\nt_i(k+1)\geq 0$.

Now, if $n_i(k)=\nt_i(k)=m_i$, it is seen from (\ref{FB-eqn2}) and (\ref{FB-eqn6}) that $\tilde{Y}_{i+1}(k)-Y_{i+1}(k)\geq 0$. Further, if $Y_{i+1}(l)=0$, then clearly, $n_i(k+1)=m_i$ and $\nt_i(k+1)\leq m_i = n_i(k+1)$.

If $Y_{i+1}(k)=1$, then $\tilde{Y}_{i+1}(k)=1$ and $\nt_i(k+1)\leq n_i(k+1)$ follows since
\begin{align}
Y_i(k)-\tilde{Y}_i(k)=X_i(k)\big(\sigma[n_{i-1}(k)]-\sigma[\nt_{i-1}(k)]\big)\geq 0.\nonumber
\end{align}

Now, if $0<n_i(k)=\nt_i(k)<m_i$, then (\ref{FB-eqn2}) and (\ref{FB-eqn6}) again imply $\tilde{Y}_{i+1}(k)-Y_{i+1}(k)\geq 0$ and $Y_i(k)-\tilde{Y}_i(k)=X_i(k)\big(\sigma[n_{i-1}(k)]-\sigma[\nt_{i-1}(k)]\big)\geq 0$, and hence $n_i(k)\geq \nt_i(k)$ follows.

\item[2.] $\underline{n_i(k)\geq\nt_i(k)+1}$: Assume let $\nt_i(k)>0$. Then,
\begin{align}
\nt_i(k+1)&\stackrel{(\ref{AMCeqn})}{\leq}\nt_i(k)+1-\tilde{Y}_{i+1}(k)\nonumber\\
&=\nt_i(k)+ 1 - \sigma[\nt_i(k)]X_i(k)\nonumber\\
&\leq\nt_i(k)+ 1 - Y_{i+1}(k) \leq n_i(k)-Y_{i+1}(k)\nonumber\\
&\leq n_i(k)+Y_i(k)-Y_{i+1}(k) = n_i(k+1).\nonumber
\end{align}
Lastly, if $\nt_i(k)=0$, then the claim can be violated only if $\nt_i(k+1)=1$ and $n_i(k+1)=0$, which can happen only if $X_i(k)=X_{i+1}(k)=1$. However, under this channel instance, $n_i(k+1)\geq n_i(k)\geq 1$. Thus, $n_i(k_1)\geq \nt(k+1)$.
\end{itemize}
Thus, we have the following.
\begin{equation}
n_i(k)\geq \nt_i(k), \quad \quad i=2,\ldots h-1.
\end{equation}
The proof is then complete by following the above argument for $v_1$ and interpreting $\sigma[n_{0}(k)]=\sigma[\nt_{0}(k)]=1$, since the source always possesses innovative packets.

\section{Proof of Theorem~\ref{FB-Ubthm}}\label{App.1-4}

If $h=2$ comparing (\ref{FB-eqn2}) and (\ref{FB-eqn6}), we see that the AMC and the EMC are identical. Hence, we may assume $h>2$. The proof in this case is based on mathematical induction on the time index $l$. At each time, we compare the state of the EMC with that of the modified AMC. Let the extended state of the EMC at an instant $l\in\mathbb{Z}_{\geq 0}$ be denoted by $\bfn^e(l)=(n_1(l),\ldots,n_h(l))$, where the notation is identical to that of Sec.~\ref{FB-sec2} with the addition that $n_h(l)$ denotes the number of packets that the destination has received by the $l^{\textrm{th}}$ epoch. Similarly define
the extended state of the AMC with modified buffer sizes at an instant $l\in\mathbb{Z}_{\geq 0}$ by $\bfq^e(l)$. Define a partial ordering of vectors of $\mathbb{Z}_{\geq 0}^h$ in the following manner. For two vectors $\mathbf{v},\mathbf{v'}\in \mathbb{Z}_{\geq 0}^h$, $\mathbf{v}\succeq\mathbf{v'}$ if $\sum_{k=i}^h v_k \geq \sum_{k=i}^h v'_k$ for each $i=1,\ldots,h$. We track the system starting from initial rest (all buffers being empty) using an instance of channel realizations. Clearly
$\bfq^e(0)\succeq\bfn^e(0)$.

Suppose that $\bfq^e(l)\succeq\bfn^e(l)$ for $l=0,\ldots k-1$. Consider $l=k$. One of the following two situations may arise\footnote{For convenience, we set $q_0(k)=n_0(k)\triangleq\infty$, $k\geq 0$ in this proof.}.
\begin{itemize}
\item [1.] $\ul{\{i<h:q_i^e(k-1)=\sum_{j=1}^i m_i\}=\emptyset}$: In this case, no node is saturated in the AMC and hence every node can potentially accept packets provided both the node preceding it has packets to send and the channel allows it. Consider the number of packets that are in the buffers of nodes $v_j,\ldots,v_h$ for some $0<j\leq h$ in both chains.

    (i) If $n_{j-1}(k-1)=0$ or if both $n_{j-1}(k-1)>0$ and $X_j(k-1)=0$ are true, then
\begin{equation}
\sum_{s=j}^hq_s^e(k)= \sum_{s=j}^hq_s^e(k-1)\geq \sum_{s=j}^h n_s^e(k-1)= \sum_{s=j}^h n_s^e(k).\nonumber
\end{equation}

 (ii) If $n_{j-1}(k-1)>0$, $X_j(k-1)=1$ and $q_{j-1}(k-1)=0$ then
\begin{align}
\sum_{s=j-1}^h q_s^e(k-1)&\geq \sum_{s=j-1}^h n_s^e(k-1)\nonumber\\ \Rightarrow  \sum_{s=j}^h q_s^e(k-1)&\geq \sum_{s=j}^h n_s^e(k-1) +n_{j-1}(k-1)\nonumber\\
&\geq \sum_{s=j}^h n_s^e(k-1)+1. \nonumber
\end{align}
Therefore,
\begin{equation}
\sum_{s=j}^h q_s^e(k)=\sum_{s=j}^h q_s^e(k-1) \geq \sum_{s=j}^h n_s^e(k-1)+1 \geq \sum_{s=j}^h n_s^e(k).\nonumber
\end{equation}

(iii) Finally, if $n_{j-1}(k-1)>0$, $X_j(k-1)=1$ and $q_{j-1}(k-1)>0$ then
\begin{align}
\sum_{s=j}^h q_s^e(k)&=\sum_{s=j}^h q_s^e(k-1)+1 \nonumber\\
&\geq \sum_{s=j}^h n_s^e(k-1)+1 \geq \sum_{s=j}^h n_s^e(k).\nonumber
\end{align}
 Since $j$ was arbitrary, it follows that $\bfq^e(k)\succeq \bfn^e(k)$.

\item [2.] $\ul{\{i<h:q_i^e(k-1)=\sum_{j=1}^i m_i\}\neq\emptyset}$: Then, let $I=\max\{i<h:q_i^e(k-1)=\sum_{j=1}^i m_i\}$. In this case, nodes $v_{I+1},\ldots,v_h$ are not saturated and can accept packets. The argument for $\sum_{s=j}^h q^e_s(k)\geq \sum_{s=j}^h n^e_s(k)$ follows for $j=I+1,\ldots,h$ is similar to the previous case. Notice that since the occupancy of nodes $v_{i}$, $i>I$ are not full,
 \begin{align}
\sum_{s\geq I} q_s^e(k)\geq \sum_{s> I} q_s^e(k-1)+\sum_{1\leq \iota\leq I} m_\iota. \label{App1-Comment}
\end{align}
Now, for $j=I$, two cases may occur.

(i) If $j=I>1$, then by (\ref{App1-Comment}),
\begin{align}
\sum_{s\geq I} q^e_s(k)&\geq \sum_{s> I} q_s^e(k-1)+\sum_{1\leq \iota\leq I} m_\iota\nonumber\\
&\geq \sum_{s>I} q_s^e(k-1)+m_I + 1
\nonumber\\
 &\geq \sum_{s\geq I} n_s^e(k-1) + 1 \geq \sum_{s\geq I} n_s^e(k).
\end{align}

(i) If $j=I=1$, and $X_1(k-1)=0$ then
\begin{align}
\sum_{s\geq 1} q^e_s(k)=\sum_{s\geq 1} q^e_s(k-1)\geq\sum_{s\geq 1} n^e_s(k-1)=\sum_{s\geq 1} \nonumber q^e_s(k).
\end{align}
However, if $j=I=1$ and $X_1(k-1)=1$, then
\begin{align}
\hspace{-1.5mm}\sum_{s\geq 1} q^e_s(k)&\geq \sum_{s> 1} q_s^e(k-1)+m_1+X_2(k)\nonumber\\
&\geq  \sum_{s> 1} n_s^e(k-1)+n_1^e(k)+X_2(k-1)\nonumber\\
&\geq \sum_{s> 1} n_s^e(k)\hspace{-0.45mm}-\hspace{-0.45mm}Y_2(k\hspace{-0.45mm}-\hspace{-0.45mm}1)+n_j^e(k)+X_2(k\hspace{-0.45mm}-\hspace{-0.45mm}1)\nonumber\\
&\geq \sum_{s\geq 1} n_s^e(k).
\end{align}
Thus, the claim holds for $j=I,I+1,\ldots,h$. The claim is then complete if $I=1$. Therefore, in what follows, we may assume $I>1$.

Finally, for $1<j<I$, one of the following cases must hold.

(i) If $X_1(k-1)=0$, then
\begin{align}
\sum_{s=j}^h q^e_s(k)&\geq \sum_{s=I}^h q^e_s(k) \stackrel{\ref{App1-Comment}}{\geq} \sum_{s=I+1}^h q^e_s(k-1)+ \sum_{\iota=1}^I m_\iota\nonumber\\
 &\geq \sum_{s=I+1}^h n^e_s(k-1)+\sum_{s=1}^I n^e_s(k-1)\nonumber\\
&= \sum_{s=1}^h n^e_s(k-1)=\sum_{s=1}^h n^e_s(k)\geq \sum_{s=j}^h n^e_s(k)\nonumber
\end{align}

(ii) If $X_{1}(k-1)=1$ then $q^e_1(k)\geq1$ and
\begin{align}
\sum_{s\geq j} q^e_s(k) &= \sum_{s\geq I} q^e_s(k) + \sigma[2-j]q^e_1(k)\nonumber\\
&\geq \sum_{s> I} q^e_s(k-1)+\sum_{\iota=1}^I m_\iota+\sigma[2-j]\nonumber\\
&\geq\sum_{s>I} n^e_s(k-1)+\sum_{s=j}^I n^e_s(k-1)+1\nonumber\\
&=\sum_{s\geq j} n^e_s(k-1)+1\geq \sum_{s\geq j} n^e_s(k).
\end{align}
Thus, the claim is true for all indices $j=1,\ldots,h$ and $\bfq^e(k)\succeq \bfn^e(k)$. Here, it must be noted that if the buffer sizes for the nodes of AMC are not modified as in the hypothesis, (\ref{App1-Comment}) will not hold.
\end{itemize}
Finally, the upper bound follows since
\begin{align}
\ol\mcc^{PF}(\mce,\mcm)&=\lim_{l\rightarrow\infty} \frac{q^e_h(l)}{l} \geq \lim_{l\rightarrow\infty} \frac{n^e_h(l)}{l}\nonumber= \mcc^{PF}(\mce,\mcm).\nonumber
\end{align}

\section{Proof of Theorem~\ref{FB-uniqthm}}\label{App.1-3}

Consider two rate-approximate solutions $(\mcr^a,\mcp^a)$ and $(\mcr^b,\mcp^b)$ such that $r_{h}^a=r_{h}^b=\delta$ with $0<\delta<1$. Notice that $\vp(0|r_{h-1},\ve_{h},0)$ is a strictly decreasing function of $r_{h-1}$ when $\ve_h$ is kept fixed. This follows from the fact that
\begin{align}
\frac{1}{\ol\ve_{h}}\frac{\partial r_h}{\partial r_{h-1}}&=\frac{\partial \vp(0|r_{h-1},\ve_{h},0)}{\partial r_{h-1}}\nonumber\\&\propto \frac{-\big( \frac{1}{1-r_{h-1}}\big)^2}{\Big[1+\frac{\alpha_0}{\beta}\big(\sum_{l=0}^{m_{h-1}-1}\frac{\alpha^l}{\beta^l}\big)\Big]^2} <0.
 \end{align}
An easy way to understand this behavior is to notice that $\alpha,\alpha_0$ increase with $r_{h-1}$, while $\beta$ decreases with $r_{h-1}$. Therefore, from (\ref{FB-R}), it follows that
\begin{align}
r_{h}^a=r_{h}^b&\Rightarrow\,\,\vp(0| r_{h-1}^a,\ve_{i+1},0)=\vp(0| r_{h-1}^b,\ve_{i+1},0) \nonumber\\
 &\Rightarrow\,\, r_{h-1}^a=r_{h-1}^b. \label{Eqn-requiv}
\end{align}
Now, from (\ref{Eqn-requiv}) and (\ref{FB-PB}) guarantee ${p_b}_{h-1}^a={p_b}_{h-1}^b$. We then use the monotonicity of $\vp(0|r_{h-2},\ve_{h-1},{p_b}_{h-1})$ in conjunction with already shown results to show that $r_{h-2}^a=r_{h-2}^b$ and ${p_b}_{h-2}^a={p_b}_{h-2}^b$. Extending this inductively, we have $\mcr^a=\mcr^b$ and $\mcp^a=\mcp^b$. Therefore, for each $\delta>0$, there is at most one solution satisfying $r_{h}=\delta$.

Now, consider two rate-approximate solutions $(\mcr^a,\mcp^a)$ and $(\mcr^b,\mcp^b)$ such that
$0<\delta_a=r_{h}^a<r_{h}^b=\delta_b<1$. By monotonicity of $\vp(0|r_{h-1},\ve_{h},0)$, we have $r_{h-1}^a<r_{h-1}^b$. From (\ref{FB-PB}), we notice that ${p_b}_i$ is also a strictly increasing function in both its variables $r_i$ and ${p_b }_{i+1}$. Therefore, ${p_b}_{h-1}^a<{p_b}_{h-1}^b$. Again, proceeding inductively from the last node to the first each time noticing the monotonic growth of $(\ref{FB-PB})$ and $(\ref{FB-R})$, we conclude that
\begin{equation}
\begin{array}{lll}
r_i^a &<& r_i^b \\
{p_b}_i^a &<& {p_b}_i^b \\
\end{array}, \quad\quad i=1,\ldots,h. \label{Ap1-Contra}
\end{equation}
However, since $(\mcr^a,\mcp^a)$ and $(\mcr^b,\mcp^b)$ are both rate-approximate solutions, we have $r_1^a=r_1^b=\ol\ve_1$, which contradicts (\ref{Ap1-Contra}). Therefore, there is at most one solution to the system of equations.

To identify the unique solution, we construct a sequence of tuples $\{(\mcr[l],\mcp[l])\}_{l\in\mathbbm{N}}$ as described in Algorithm~\ref{alg:RbIE}. Note that Step 2 of the algorithm can be replaced by a convergence-type step that halts if $\|\mathbf{r}[l]-\mathbf{r}[l-1]\|_1$ is smaller than a chosen threshold.

\begin{algorithm}[h!]
\small{\caption{\emph{Rate-based Iterative Estimate}}\label{alg:RbIE}
\begin{algorithmic}[1]
\STATE $\texttt{Count}=1$ and ${p_b}_i[\texttt{Count}]=0$, $i=1,\ldots,h-1$.
\WHILE {\texttt{Count}$\leq$\texttt{Max\_Iter}}
\STATE ${p_b}_h[\texttt{Count}]=0$, $r_1[\texttt{Count}]={1-\ve_1}$, and $j=1$.
\WHILE {$j<h$}
\STATE Compute $r_{j+1}[\texttt{Count}]$, ${p_b}_j[\texttt{Count}+1]$ employing (\ref{FB-R}) and (\ref{FB-PB}) (that use $r_{j}[\texttt{Count}]$, ${p_b}_{j+1}[\texttt{Count}]$)
 \STATE $j\leftarrow j+1$.
\ENDWHILE
\STATE $\texttt{Count}\leftarrow\texttt{Count}+1$.
\ENDWHILE
\end{algorithmic}}
\end{algorithm}

By the monotonic property of the non-linear system of equations, the following results can be established.
\begin{equation}
\begin{array}{lll}
r_i[l] &<& r_i[l+1] \\
{p_b}_i[l] &<& {p_b}_i[l+1] \\
\end{array}, \quad\quad l\in\mathbbm{N}. \label{Ap1-Mont}
\end{equation}
However, each component of $\mcr$ and $\mcp$ is individually bounded by unity. Therefore, the sequence of numbers for each component of these vectors must converge. Denote the component-wise limit as $\mathbf{W}^*=(\mcr^*,\mcp^*)$. Denote $\Xi:[0,1]^{h}\times[0,1]^{h}\longrightarrow [0,1]^{h}\times[0,1]^{h}$ to be the following map. For each $\mcr,\mcp\in[0,1]^h$, denote $\Xi(\mcr,\mcp)$ to be the pair, whose first component is the vector of rates computed from
(\ref{FB-R}) and the second component is the vector of blocking probabilities computed from (\ref{FB-PB}). Then, $\Xi$ is a continuous map and $\Xi((\mcr[l],\mcp[l]))=(\mcr[l+1],\mcp[l+1])$ for each $l\in\mathbbm{N}$. Also, for this sequence of rates and blocking probabilities, we note that
\begin{align}
\hspace{-2mm}\|\Xi(\mathbf{W}^*)-\mathbf{W}^*\|_\infty&\leq \|(\mcr[l],\mcp[l])-\mathbf{W}^*\|_\infty \nonumber\\
 &+ \|\Xi((\mcr[l],\mcp[l]))-(\mcr[l],\mcp[l])\|_\infty \label{Ap1-conv}\\
& + \|\Xi(\mathbf{W}^*)-\Xi((\mcr[l],\mcp[l]))\|_\infty. \nonumber
\end{align}
However, the right-hand side of (\ref{Ap1-conv}) is true for any $l\in\mathbb{N}$. By allowing $l\rightarrow \infty$, the three limits vanish and hence we see that $\mathbf{W}^*=(\mcr^*,\mcp^*)$ is a fixed point of the map and hence the unique solution to the system of non-linear equations.

Finally, to see the conservation of flow, notice that the Rate-based Iterative Estimate models the system using a discrete-time M/M/1/$k$ system by the introduction of additional assumptions and parameters. In the model, the number of innovative packets that are successfully stored by $v_i$ as the system progresses from $l=0$ to $l=N$ is given by
\begin{align}
&Nr_i^*\big(1-\Pr[n_i=m_i]+\ol\ve_{i+1}\ol{p_b}_{i+1}^*\Pr[n_i=m_i]\big)+o(N)\nonumber\\
&= Nr_i^*\big(1-\varphi(m_i|r_i^*,\ve_{i+1},{p_b}_{i+1}^*)\big(1-\ol\ve_{i+1}^*\ol{p_b}_{i+1}^*\big)\big)+o(N)\nonumber\\
&\hspace{-1.5mm}\stackrel{(\ref{FB-PB})}{=} Nr_i^*\ol{p_b}_{i}^*+o(N)\label{FB-Cons1}.
\end{align}
Similarly, the number of packets successfully output by $v_i$ is given by
\begin{align}
&N\ol\ve_{i+1}^*\big(1-\Pr[n_i=0]\big)\ol{p_b}^*_{i+1}+o(N)\nonumber\\
&=N\ol\ve_{i+1}^*\big(1-\varphi(0|r_i^*,\ve_{i+1},{p_b}_{i+1}^*)\big)\ol{p_b}^*_{i+1}+o(N)\nonumber\\ &\hspace{-1.5mm}\stackrel{(\ref{FB-R})}{=}Nr_{i+1}^*\ol{p_b}^*_{i+1}+o(N)\label{FB-Cons2}
\end{align}
Since the M/M/1/$k$ system is lossless, all stored packets eventually leave the system. Thus, the average rate of packet storage at a node must match the average rate of packets output from that node. Comparing (\ref{FB-Cons1}) with (\ref{FB-Cons2}), the conservation of packet flow for the rate-approximate solution follows.

\section{Proof of Theorem~\ref{FB-LBthm}}\label{App.2}

The proof elaborates the behavior of a tandem system via a formal setup for the discrete-time equivalent of the $G/M/1/k$ queue~\cite{SKBBook}. To illustrate the complications in the setup, Fig.~\ref{App2-Fig1} presents a section of an inter-arrival period for the first node. The number of customers in the queue of the node just before an arrival or a departure is presented on the axis. The arrival and departure of customers is marked by incoming and outgoing arrows, respectively. In Scenario A, we see that the queue is never starved and as a result all the inter-departure times are instances of the service process.

\begin{figure}[h]
\psfrag{a}{4} \psfrag{b}{5} \psfrag{c}{4} \psfrag{d}{3} \psfrag{e}{2} \psfrag{f}{1} \psfrag{g}{0}
\psfrag{X}{$X$} \psfrag{A}{\hspace{-8mm}Scenario A} \psfrag{B}{\hspace{-8mm}Scenario B} \psfrag{i}{1}
\centering
\includegraphics[width=3.4in, angle=0]{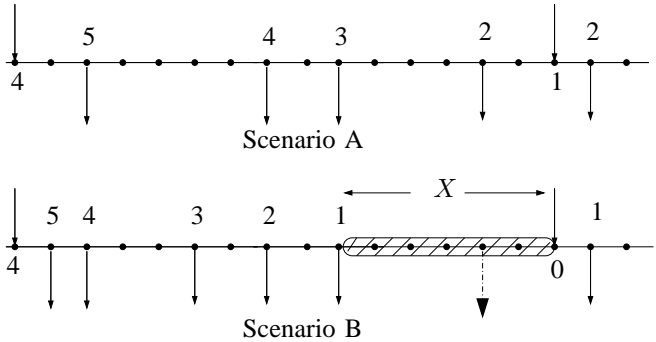}
\caption{A section of inter-arrival periods at the first server (assuming it possesses five customer slots).}
 \label{App2-Fig1}
\end{figure}

However, in Scenario B, we notice that all the five customers that are in the queue after the arrival are serviced much ahead of the next arrival and hence there is a period of time during which the queue is starved. If the queue were not starved, it could have possibly serviced a customer at the instance marked by the outgoing dotted arrow. Hence, this duration of time denoted by $X$ in the figure, adds a delay to the inter-departure time. Thus, if we are able to extract the distribution
$\{f^X(i)\}_{i\in\mathbbm{N}}$ of this duration, we can identify the inter-arrival distribution $g^\textrm{out}$ as seen by the second node to be a weighted sum of $f_X\otimes\mathbbm{G}(\tee_N)$ and $\mathbbm{G}(\tee_N)$.

In order to identify the distribution $f_X$, we need to identify the probability distribution $\pi$ of the number of customers in the first node's buffer just after an arrival. The first step in identifying $\pi$ from the imbedded Markov chain for the occupancy of the first node is to construct the distribution $\{D_j\}_{j\in\mathbbm{Z}_{\geq 0}}$ of the number of packets that could be potentially transmitted
during an inter-arrival duration $T_A$ provided the queue were infinite. This distribution can be computed from the arrival and departure processes in the following manner.
\begin{align}
D_j&=\sum_{k=1}^\infty \Pr[T_A=k]\binom{k}{j}\tilde{\tee}_{N}^{k-j}\ol{\tilde\tee}_{N}^j \nonumber\\&=\sum_{k=1}^\infty \bigl(\sum_{l=1}^{N-1} p_l \ol{\tee}_l\tee_l^{k-1}\bigr)\binom{k}{j}\tilde\tee_{N}^{k-j}\ol{\tilde\tee}_{N}^{j}\nonumber\\
&=\sum_{l=1}^{N-1} p_l\frac{\ol{\tee}_l}{\tee_l}\Big[\frac{\ol{\tilde\tee_N}}{\tilde\tee_N}\Big]^j\Bigl(\sum_{k=1}^{\infty} \binom{k}{j}(\tee_l\tilde\tee_N)^k\Bigr)\nonumber\\
&\stackrel{(a)}{=}\Big[\frac{\ol{\tilde\tee_N}}{\tilde\tee_N}\Big]^j\sum_{l=1}^{N-1} \frac{p_l\ol{\tee}_l}{\tee_l}\Big(\frac{(\tee_l\tilde\tee_N)^j}{(1-\tee_l\tilde\tee_N)^{j+1}}-\sigma[1-j]\Big),
\end{align}
where in the above, we use $\tilde\tee_N\triangleq\tee_N+\ol\tee_N(1-q)$ to incorporate the actual parameter of the memoryless service time, and in (a), we use $\frac{1}{(1-x)^{n+1}}=\sum_{r\geq 0}\binom{r}{n}x^{r-n}$, $0<|x|<1$. For each $i,j\in\{1,\ldots, m\}$, the $(i,j)^{\textrm{th}}$ entry of the probability transition matrix $P_\pi$ for the imbedded Markov chain that tracks the number of customers just after an arrival can be computed by
\begin{align}
(P_\pi)_{i,j}=\sigma[2-j]\Bigl(\sum_{k=i}^\infty D_k\Bigr)&+\sigma[j-1]D_{i+1-j}\nonumber\\&+\sigma[j-m+1]D_{i-j}.\label{App2-eqn1}
\end{align}
Note that in (\ref{App2-eqn1}), we set $D_{k}=0$ when $k<0$. The distribution $\pi$ can then be solved from the eigenvector relation $ \pi(I-P_\pi)=\mathbf{0}$. Note that a packet arriving at the first node will not be accepted if the node is in full buffer and no packet had left in the preceding inter-arrival duration. The probability of this blocking event at the first node is given by
\begin{equation}
\mathcal{P}(g^{\textrm{in}},m_i,\tee_N,q)\triangleq\pi_mD_0. \label{FB-BlockingEqn}
\end{equation}

Finally, we can identify the distribution of $X$ by conditioning on the number of customers $M$ just after a customer arrival. It is seen that for $i,k>0$,
\begin{align}
\Pr[X=i | M=k ] &= \sum_{j=1}^\infty\hspace{-0.5mm} \bigg[\hspace{-1.5mm}\begin{array}{ll}\Pr[\textrm{the queue is emptied at time $j$}]\\ \times \Pr[T_A=i+j]\end{array}\hspace{-1.5mm}\bigg] \nonumber\\
&= \sum_{j=1}^\infty\binom{j-1}{k-1}\ol{\tilde\tee}_N^k{\tilde\tee_N}^{j-k}\Bigl[\sum_{l=1}^{N-1} p_i \ol{\tee}_l\tee_l^{i+j-1}\Bigr]\nonumber\\
&= \sum_{l=1}^{N-1}p_l\ol\tee_l\tee_l^{i-1}\Bigl[\frac{\ol{\tilde\tee}_N^k}{{\tilde\tee_N}^k}\sum_{j=1}^\infty \binom{j-1}{k-1}(\tee_l{\tilde\tee_N})^j\Bigr]\nonumber\\
&= \sum_{l=1}^{N-1}\Bigl(p_l\frac{(\tee_l\ol{\tilde\tee}_N)^k}{(1-\tee_l{\tilde\tee_N})^k}\Bigr)(\ol\tee_l\tee_l^{i-1}). \label{App2-eqn3}
\end{align}
From (\ref{App2-eqn3}), we notice that the distribution of $X$ conditioned on $M=k$ is a weighted sum of geometric distributions. The distribution of $X$ can then be computed as follows.
\begin{align}
f^X_i&=\frac{\sum_{k=0}^m \pi_k \Pr[X=i | M=k ]}{\sum_{k=0}^m \pi_k \Pr[X\geq 1|M=k]} =\sum_{l=1}^l \beta_l \ol\tee_l\tee_l^{i-1},\nonumber\\
\beta_l&= p_l \bigg[{\mathop{\sum_{k\in\{0,\ldots,m\}}}_{l=\{1,\ldots,N-1\}}} \hspace{-1mm} p_l\frac{\pi_k(\tee_l\ol{\tilde\tee_N})^k}{(1-\tee_l{\tilde\tee_N})^k}\bigg]^{-1}{{\sum_{0\leq k \leq m}}} \frac{\pi_k(\tee_l\ol{\tilde\tee_N})^k}{(1-\tee_l{\tilde\tee_N})^k}\nonumber
\end{align}
Also, we notice that the distribution of inter-arrival times $g^{\textrm{out}}$ as seen by the second node is either an instance of $f^X\otimes\mathbbm{G}(\tee_N)$ or that of $\mathbbm{G}(\tee_N)$, and hence can be written as
\begin{equation}
 g^{\textrm{out}}\triangleq\underbrace{\big(\alpha f^X+(1-\alpha)\mathbbm{I}\big)}_{\triangleq \Upsilon(g^{\textrm{in}},m,\tee_N,q)}\otimes\mathbbm{G}(\tee_N)
\end{equation}
for some $\alpha\in[0,1]$. The last step in constructing the inter-departure distribution is to identify $\alpha$. This is
done by noticing the mean duration between departures. Over a large duration $N$, the number of packets that are accepted at the first node is given by $\frac{N}{\langle g^{\textrm{in}}\rangle}(1-\mathcal{P}(g^{\textrm{in}},m_i,\tee_N,q)) +o(N)$. The number of packets that are accepted by the second node is given by $\frac{N}{\alpha\langle f^X\rangle+\frac{\ol\alpha}{1-\tee_N}}(1-q)+o(N)$. Since the system has finite buffer size and no loss, the rates  must match. Therefore, one can identify $\alpha$ using the following.
\begin{equation}
\frac{1}{\alpha\langle f^X\rangle+\frac{\ol\alpha}{1-\tee_N}}(1-q)=\frac{1}{\langle g^{\textrm{in}}\rangle}(1-\mathcal{P}(g^{\textrm{in}},m_i,\tee_N,q)).
\end{equation}
Finally, notice that if $\mu\neq\lambda$, we have
\begin{equation}
\mathbbm{G}(\lambda)\otimes\mathbbm{G}(\mu)=\frac{1-\lambda}{\mu-\lambda}\mathbbm{G}(\mu)+\frac{1-\mu}{\lambda-\mu}\mathbbm{G}(\lambda).
\end{equation}
Using the above we can see that
\begin{eqnarray}
g^{\textrm{out}}=\sum_{l=1}^{N-1}\frac{\alpha \beta_l\ol\tee_N}{\tee_l-\tee_N}\mathbbm{G}(\tee_l)+\Big(\ol\alpha+\sum_{l=1}^{N-1}\frac{\alpha \beta_l\ol\tee_l}{\tee_N-\tee_l}\Big)\mathbbm{G}(\tee_N),
\end{eqnarray}
which is also a weighted sum of geometric distributions.

\section{Proof of Theorem~\ref{Thm-NFOpt}}\label{App.3}
We present below a fundamental result that will be used in various stages of the proof.
\begin{lem}\label{RLCinnolem}
Let $X$ be a vector space over a finite field $\f_q$ and let $\lss(A)\triangleq \lsp(A)$ for any $A\subseteq X$. Let $U=\{u_1,\ldots,u_k\}$ and $V=\{v_1,\ldots,v_{k'}\}$ be two subsets such that $\lss(V)\subsetneq\lss(U\cup V)$. Then, let $\mathbf{a}$ be selected uniformly at random from $\f^{k}_q$ and set $V'=V\cup\{\sum_{j=1}^k a_i u_i\}$. Then,
\begin{equation}
\hspace{-2.5mm}\Pr\Big[\dim\big(\lss(V)\big) \hspace{-0.75mm}\nless \hspace{-0.75mm} \dim\big(\lss(V')\big)\Big]\hspace{-0.75mm}<\hspace{-0.75mm}\frac{q^{\dim(\lss(U)\cap\lss(V))}}{q^{\dim(\lss(U))}}.
\end{equation}
\end{lem}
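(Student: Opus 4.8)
The plan is to observe that the event in question is purely a linear-algebra event and then to count. Write $z=z(\mathbf{a})\triangleq\sum_{j=1}^k a_j u_j$, so that $V'=V\cup\{z\}$ and $\lss(V')=\lss(V)+\fq z$. Hence $\dim(\lss(V'))=\dim(\lss(V))$ exactly when $z\in\lss(V)$, and $\dim(\lss(V'))=\dim(\lss(V))+1$ otherwise. Therefore the event whose probability we must bound, $\dim(\lss(V))\nless\dim(\lss(V'))$, is precisely $\{z(\mathbf{a})\in\lss(V)\}$. Since $z(\mathbf{a})$ always lies in $\lss(U)$, this coincides with $\{z(\mathbf{a})\in \lss(U)\cap\lss(V)\}$.

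Next I would introduce the linear map $\phi\colon\fq^{k}\to\lss(U)$ given by $\phi(\mathbf{a})=\sum_{j=1}^k a_j u_j$. By the very definition of span, $\phi$ is onto $\lss(U)$, so $\dim(\ker\phi)=k-\dim(\lss(U))$ by rank--nullity. The event is then $\{\mathbf{a}\in\phi^{-1}(W)\}$ where $W\triangleq\lss(U)\cap\lss(V)$, and crucially $W\subseteq\lss(U)=\mathrm{image}(\phi)$. Restricting $\phi$ to $\phi^{-1}(W)$ gives a surjection onto $W$ with kernel $\ker\phi$, so $\phi^{-1}(W)$ is a subspace of $\fq^{k}$ of dimension $\dim(\ker\phi)+\dim(W)$, hence of cardinality $q^{\,k-\dim(\lss(U))+\dim(W)}$. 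Dividing by $|\fq^{k}|=q^{k}$, the probability of the event equals $q^{\,\dim(W)-\dim(\lss(U))}=\frac{q^{\dim(\lss(U)\cap\lss(V))}}{q^{\dim(\lss(U))}}$, which already yields the asserted bound (with equality).

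Finally, I would use the hypothesis $\lss(V)\subsetneq\lss(U\cup V)=\lss(U)+\lss(V)$: it forces $\lss(U)\not\subseteq\lss(V)$, so $W=\lss(U)\cap\lss(V)$ is a proper subspace of $\lss(U)$ and $\dim(W)<\dim(\lss(U))$; in particular the bound is $<1$, which is the content worth recording. (If, as in the application, $\mathbf{a}$ is drawn uniformly from the nonzero vectors of $\fq^{k}$, the same count gives probability $\frac{q^{\,k-\dim(\lss(U))+\dim(W)}-1}{q^{k}-1}$, which is \emph{strictly} less than $q^{\dim(W)-\dim(\lss(U))}$ precisely because $\dim(W)<\dim(\lss(U))$, recovering the strict inequality as stated.) There is essentially no obstacle here; the only points needing care are translating "the dimension does not strictly increase" into membership of the random combination in $\lss(U)\cap\lss(V)$, and checking that $W$ sits inside the image of $\phi$ so that the rank--nullity count of $\phi^{-1}(W)$ applies verbatim.
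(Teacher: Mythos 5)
Your proof is correct and follows essentially the same path as the paper's: translate the ``dimension does not increase'' event into membership of $\sum_j a_j u_j$ in $\lss(U)\cap\lss(V)$, then count. Where the paper phrases the count via cosets of the kernel subgroup $G_0$ and the observation that the map $\mathbf{a}\mapsto\sum_j a_j u_j$ pushes the uniform distribution on $\fq^k$ to the uniform distribution on $\lss(U)$, you invoke rank--nullity on the surjection $\phi$ and its restriction over $W$; these are the same argument in two costumes.

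One thing worth flagging, and your proof handles it more honestly than the paper does: the computation (in both proofs) gives \emph{equality}
$\Pr[\cdot]=q^{\dim(\lss(U)\cap\lss(V))-\dim(\lss(U))}$,
not the strict inequality printed in the lemma. The paper's own proof also derives equality and then silently writes ``$<$'' in the statement. Your final paragraph is the right resolution: the hypothesis $\lss(V)\subsetneq\lss(U\cup V)$ only guarantees the bound is $<1$, not that it is an \emph{over}estimate; strictness would follow if $\mathbf{a}$ were drawn uniformly from $\fq^k\setminus\{\mathbf{0}\}$, as you observe, but that is not what the lemma (or the coding scheme in the body of the paper) assumes. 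As stated, ``$<$'' should really be ``$\leq$'' or ``$=$''; this does not affect any downstream use, since only the $O(1/q)$ magnitude of the bound is exploited.
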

\begin{proof}
Let $G_0$ be the set of all vectors $\mathbf{b}\in \f^k$ such that $\sum_{j=1}^k b_i u_i=0$. Then $G_0$ forms a commutative group under componentwise addition. Similarly, let for each $u\in\lss(U)$, let $G_u$ be the set of vectors $\mathbf{b}\in \f^k$ such that $\sum_{j=1}^k b_i u_i=u$. It is follows that $\{G_u:u\in\lss(U)\}\cong \f^k/G_0$, {i.e.}, they are the coset translates of the subgroup $G_0$. Therefore, uniform selection of the coefficients to perform a linear combination results in the selection of a vector in $\lss(U)$ uniformly at random. Notice that $\dim(\lss(V'))=\dim(\lss(V))$ if and only if $\sum_{j=1}^k a_i u_i\in \lss(U)\cap\lss(V)$. Note that the occurrence of this event is improbable for large fields, since
\begin{equation}
\Pr\big[\dim(\lss(V'))=\dim(\lss(V))+1\big]=1-\frac{q^{\dim(\lss(U)\cap\lss(V))}}{q^{\dim(\lss(U))}}.\nonumber
\end{equation}
\end{proof}
\begin{cor}\label{LinnolemCor}
Let $A$ be a $k\times n$ matrix with entries from $\f_q$ such that $\rk(A)=r$. Let $\mathbf{b}\in \f_q^n$ be selected uniformly at random. Then,
\begin{equation}
\Pr[A\mathbf{b}^T=\mathbf{0}]=q^{-r}.
\end{equation}
\end{cor}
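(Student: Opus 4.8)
The plan is to recast the event $\{A\mathbf{b}^T=\mathbf{0}\}$ in the language of Lemma~\ref{RLCinnolem} and then read off the probability from the identity established inside its proof. Writing $c_1,\ldots,c_n\in\f_q^k$ for the columns of $A$, we have $A\mathbf{b}^T=\sum_{j=1}^n b_j c_j$, so $A\mathbf{b}^T=\mathbf{0}$ holds precisely when the random linear combination $\sum_j b_j c_j$ of the columns vanishes. Since column rank equals rank, $\dim\big(\lss(\{c_1,\ldots,c_n\})\big)=\rk(A)=r$.

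First I would dispose of the degenerate case $r=0$: then $A=\mathbf{0}$, every $\mathbf{b}$ satisfies $A\mathbf{b}^T=\mathbf{0}$, and the claimed probability $q^{0}=1$ holds. For $r\geq 1$, I apply Lemma~\ref{RLCinnolem} with $U=\{c_1,\ldots,c_n\}$ and $V=\{\mathbf{0}\}$, so that $\lss(V)=\{\mathbf{0}\}\subsetneq\lss(U)=\lss(U\cup V)$. Drawing $\mathbf{b}$ uniformly from $\f_q^n$ and putting $V'=V\cup\{\sum_j b_j c_j\}$, the event $\dim(\lss(V'))=\dim(\lss(V))$ is exactly the event $\sum_j b_j c_j\in\lss(U)\cap\lss(V)=\{\mathbf{0}\}$, i.e.\ the event $A\mathbf{b}^T=\mathbf{0}$. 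The identity at the end of the proof of Lemma~\ref{RLCinnolem} gives that this event has probability $q^{\dim(\lss(U)\cap\lss(V))}/q^{\dim(\lss(U))}=q^{0}/q^{r}=q^{-r}$, which is the claim.

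There is no real obstacle here, since the corollary is essentially a specialization of the lemma; the only point requiring care — and the one I would cite rather than reprove — is that a uniformly random coefficient vector $\mathbf{b}\in\f_q^n$ induces a uniformly random element of $\lss(U)$, which is the coset argument $\f_q^n/G_0\cong\{G_u:u\in\lss(U)\}$ already carried out in the proof of the lemma and is what makes the counting exact. (Equivalently, one could give a self-contained one-line proof: the map $\mathbf{b}\mapsto A\mathbf{b}^T$ is $\f_q$-linear with image of dimension $r$, so by the rank-nullity theorem its kernel has size $q^{n-r}$ and the probability is $q^{n-r}/q^{n}=q^{-r}$; I would nevertheless keep the lemma-based derivation to make clear that the corollary is a genuine consequence of Lemma~\ref{RLCinnolem}.)
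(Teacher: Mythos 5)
Your derivation is correct and is the natural reading of the paper's (unwritten) proof: the corollary is tagged as a consequence of Lemma~\ref{RLCinnolem}, and specializing with $U$ the column set of $A$ and $V=\{\mathbf{0}\}$, so that $\lss(U)\cap\lss(V)=\{\mathbf{0}\}$ and $\dim\lss(U)=\rk(A)=r$, reads off $\Pr[A\mathbf{b}^T=\mathbf{0}]=q^{0}/q^{r}$ from the equality established at the end of the lemma's proof. You are also right to treat $r=0$ separately, since the lemma's hypothesis $\lss(V)\subsetneq\lss(U\cup V)$ fails there; the paper glosses over this. Your aside via rank-nullity (kernel of $\mathbf{b}\mapsto A\mathbf{b}^T$ has size $q^{n-r}$) is in fact the cleaner and more self-contained argument, and is exactly the coset/uniformity observation in the lemma's proof restated in matrix language — either route is fine.
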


The basic idea of the proof is to construct a chain for the setting without feedback that is similar to the EMC. Once the chain is identified, the proof will be completed by showing that the transition probabilities of each transition approaches that of the EMC as the field size is made large. To this end, allow $M_i(l)$ to be the packet received by the node $v_i$ at the $l^\textrm{th}$ epoch. Set $M_i(l)=0$ if the $(v_{i-1},v_i)$ channel erases the transmitted packet at the $l^\textrm{th}$ epoch. For the sake of proof, each epoch is divided into $h$ sub-epochs. Since the network is assumed to work in a transmit-first mode, the network updates the buffers in a reverse-hop fashion, {i.e.}, at the $j^{\textrm{th}}$ sub-epoch, the message generated by $v_{h-j}$ is used to update that of $v_{h-j+1}$. Define $B_i(l,j)=\{P_{i,k}(l,j):k=1,\ldots, m_i\}$ to be the set of packets in the buffer of $v_i$ after the $j^{\textrm{th}}$ sub-epoch of the $l^{\textrm{th}}$ epoch. For notational ease, let $\mcW_i(l,j)\triangleq\lsp\Big(\bigcup\limits_{h\geq i'\geq i} B_{i'}(l,j)\Big)$ for $i\in\{1,\ldots, h\}$ and $j\in\{1,\ldots, h\}$. Note that the system is uniquely described by the dynamics of the nested vector spaces $\big\{\mc{W}_i(l,h)\big\}_{i=1}^{h}$. Define occupancy for this coded setting as
\begin{equation}
\hspace{-1.5mm}\eta_i(l,j)=\dim(\mcW_i(l,j))-\dim(\mcW_{i+1}(l,j)), \hspace{-0.5mm}\begin{array}{l}1\leq i<h\\  1\leq j\leq h\end{array}\hspace{-1.5mm}.
\end{equation}
Notice that this notion of occupancy denotes the number of additional \emph{innovative} packets that is housed by $v_i$ at the $l^\textrm{th}$ epoch that has not been conveyed to downstream nodes. Also note that at the $j^{\textrm{th}}$ sub-epoch of the $l^{\textrm{th}}$ epoch, the only buffer that changes is that of $v_{h-j+1}$ due to the receipt of $M_{h-j}(l)$. Thus,
\begin{align}
\begin{array}{l}
\mcW_{i}(l,j)=\mcW_{i}(l,j-1),\\
\eta_i(l,j)=\eta_i(l,j-1),
\end{array} \hspace{-2mm}
\begin{array}{l}
  1\leq i\leq h,\, i\neq h-j+1\\
  1\leq i< h,\, i\neq h-j, h-j+1
\end{array} \nonumber
\end{align}
To investigate the change of occupancy\footnote{In accordance with the notation of (\ref{FB-eqn3}) and (\ref{AMCeqn}), occupancies $n_i(l+1)$ and $\nt_i(l+1)$ correspond to $\eta_i(l,h)$ in this setup.} after the $j^{\textrm{th}}$ sub-epoch of the $l^{\textrm{th}}$ epoch for nodes $v_{h-j}$ and $v_{h-j+1}$, we have to consider the following cases.
\begin{itemize}
\item[1.] \underline{$j=1$:} If $\eta_{h-1}(l-1,h)>0$, then by Lemma~\ref{RLCinnolem}, we see that
    \begin{align}
    \dim(\mcW_{h}(l,1))&=\dim(\lsp(\mcW_{h}(l-1,h)\cup\{M_{h}(l)\}))\nonumber\\&=\dim(\mcW_{h}(l-1,h))+1\nonumber
    \end{align}
     with probability at least $1-\frac{1}{q}$. Therefore, it follows that $\eta_{h-1}(l,1)=\eta_{h-1}(l-1,h)-1$ with high probability\footnote{Throughout this section, by `with high probability' we mean that we can guarantee any probability close to unity by choosing a large field size $q$.}. If on the other hand $\eta_{h-1}(l-1,h)=0$, then $\eta_{h-1}(l,1)=0$ and $\dim(\mcW_{h}(l,1))=\dim(\mcW_{h}(l-1,h))$.
\item[2.] \underline{$j>1$, $X_{h-j+1}(l)=0$:} In this case, $M_{h-j+1}(l-1)=0$ and there is no update at the buffers of the node $v_{h-j+1}$. Therefore,
    \begin{align}
    \begin{array}{lcl}
    \mcW_{h-j+1}(l,j)&=&\mcW_{h-j+1}(l,j-1)\\
    \eta_{h-j}(l,j)&=&\eta_{h-j}(l,j-1)\\
    \eta_{h-j+1}(l,j)&=&\eta_{h-j+1}(l,j-1)\\
    \end{array}.
    \end{align}
\item[3.] \underline{$j>1$, $X_{h-j+1}(l)=1$, $\eta_{h-j+1}(l,j-1)< m_{h-j+1}$ and }\\\underline{$\eta_{h-j}(l,j-1)\geq 0$:} Notice that in this case, since the occupancy of $v_{h-j+1}$ before the update is not full, there exists $\mathbf{a} \in \f_q^{m_{h-j+1}}\setminus \{\mathbf{0}\}$ such that
 \begin{equation}
\sum_{k=1}^{m_{h-j+1}} a_k P_{h-j+1,k}(l,j-1)\in \underbrace{\mcW_{h-j+2}(l,j-1)}_{(=\mcW_{h-j+2}(l,j))}.
\end{equation}
Suppose that $\mathbf{b}\in\f_q^{m_{h-j+1}}$ is used to update $B_{h-j+1}(l,j-1)$ with the message $M_{h-j+1}(l)$. Then,
\begin{align}
&\sum_{k=1}^{m_{h-j+1}} a_k{P_{h-j+1,k}(l,j)} \in \underbrace{\mcW_{h-j+1}(l,j)}_{(\supseteq \mcW_{h-j+2}(l,j))}\nonumber \\
&\Leftrightarrow {\begin{array}{l}\Big[{\Sigma}_{k} a_kb_k) M_{h-j+1}(l)\\\,\,+\underbrace{\Sigma_k a_kP_{h-j+1,k}(l,j-1)}_{\in \mcW_{h-j+2}(l,j) \subseteq \mcW_{h-j+1}(l,j)}\Big]\end{array}\in \mcW_{h-j+1}(l,j)}\nonumber\\
&\Leftrightarrow (\Sigma_k a_kb_k) M_{h-j+1}(l) \in \mcW_{h-j+1}(l,j)\nonumber
\end{align}

Now, note that the vector $\mathbf{a}$ is not unique and that the set of all vectors that relate the contents of $v_{h-j+1}$ form a vector space over $\f_q$ of dimension $m_{h-j+1}-\eta_{h-j+1}(l,j-1)$. Let $A$ be the matrix generated by enlisting all the vectors in this space as rows. Then $rank(A)=m_{h-j+1}-\eta_{h-j+1}(l,j-1)$ and by Cor.~\ref{LinnolemCor}, $\Pr[A\mathbf{b}^T = \mathbf{0}]=\frac{1}{q^{m_{h-j+1}-\eta_{h-j+1}(l,j-1)}}$. Therefore, by choosing a large field size, the probability of the event $M_{h-j+1}(l)\in\mcW_{h-j+1}(l,j)$ can be made arbitrarily close to unity. Finally from Lemma~\ref{RLCinnolem}, we see that if $\eta_{h-j}(l,j-1)>0$, then $M_{h-j+1}(l)$ is innovative w.h.p. It is straightforward to see that in this setting, if $\eta_{h-j}(l,j-1)>0$, an innovative packet is conveyed w.h.p. from $v_{h-j}$ to $v_{h-j+1}$ and
\begin{align}
\begin{array}{rcl}
\eta_{h-j+1}(l,j)&=&\eta_{h-j+1}(l,j-1)+1 \\
\eta_{h-j}(l,j)&=&\eta_{h-j}(l,j-1)-1
\end{array}
\end{align}
Finally, if $\eta_{h-j}(l,j-1)=0$, we notice that both occupancies remain unaltered w.h.p.
\item[4.] \underline{$j>1$, $X_{h-j+1}(l)=1$, $\eta_{h-j+1}(l,j-1)= m_{h-j+1}$ and}\\ \underline{$\eta_{h-j}(l,j-1)> 0$:} Suppose that by updating the buffers of $v_{h-j+1}$ with $M_{h-j+1}(l)$ (using a randomly selected $\mathbf{b}$), we introduce a linear dependency in the newly formed buffer entries. That is, $\exists \,\mathbf{a}\in\f_q^{m_{h-j+1}}\setminus\{\mathbf{0}\}$ such that
\begin{align}
&\Sigma_{k} a_kP_{h-j+1,k}(l,j)\in \underbrace{\mcW_{h-j+2}(l,j)}_{(=\mcW_{h-j+2}(l,j-1))} \nonumber\\
&\Leftrightarrow {\begin{array}{l}\Big[(\Sigma_k a_kb_k) M_{h-j+1}(l) \\\,\,+\underbrace{\Sigma_k a_kP_{h-j+1,k}(l,j-1)}_{\in \mcW_{h-j+1}(l,j-1)}\Big]\end{array}}\in \underbrace{\mcW_{h-j+2}(l,j-1)}_{\subseteq \mcW_{h-j+1}(l,j-1)}\nonumber\\
&\Leftrightarrow (\Sigma_k a_kb_k) M_{h-j+1}(l) \in \mcW_{h-j+1}(l,j-1)\nonumber\\
&\Leftrightarrow(\Sigma_k a_kb_k=0)\vee\big(M_{h-j+1}(l) \in \mcW_{h-j+1}(l,j-1)\big)\nonumber
\end{align}
However, from Lemma~\ref{RLCinnolem} and Cor.~\ref{LinnolemCor}, $\Pr[M_{h-j+1}(l) \in \mcW_{h-j+1}(l,j-1)]<O(\frac{1}{q})$ and $\Pr[\sum_{k}a_kb_k=0]=\frac{1}{q}$. Therefore, w.h.p. there is no linear dependency introduced after update and the occupancy is unaltered in this case.
\item[5.] \underline{$j>1$, $X_{h-j+1}(l)=1$, $\eta_{h-j+1}(l,j-1)= m_{h-j+1}$ and}\\\underline{$\eta_{h-j}(l,j-1)= 0$:} Just like before, the aim here is to show that there will be no change in occupancy. Since in this case, $v_{h-j}$ has no innovative packets, the message it generates will be a linear combination of packets in $v_{k}$, $k>h-j$. Therefore, we can write $M_{h-j+1}(l)=\sum_{k}e_k P_{h-j+1,k}(l,j-1)+W$, where $W\in\mcW_{h-j+2}(l,j-1)$ and $\mathbf{e}\in\f_q^{m_{h-j+1}}$. Let $\mathbf{b}\in\f_q^{m_{h-j+1}}$ be used to update the buffer of $v_{h-j+1}$ and let $\mathbf{a}\in\f_q^{m_{h-j+1}}$, then,
\begin{align}
&\Sigma_k a_kP_{h-j+1,k}(l,j)\in W_{h-j+2}(l,j)\nonumber\\
&\Leftrightarrow \begin{array}{l}\Big[\Sigma_k a_kP_{h-j+1,k}(l,j-1)\\\,\, + (\Sigma_k a_k b_k)M_{h-j+1}(l)\Big]\end{array}{\in W_{h-j+2}(l,j)}\nonumber\\
&\Leftrightarrow {\Sigma_{k,l} \big[a_k + e_ka_l b_l\big]P_{h-j+1,k}(l,j-1) \in W_{h-j+2}(l,j)}\nonumber
\end{align}
Note that the above is true if only if $a_k + e_k(\sum_l a_l b_l)=0$ for $1\leq k\leq m_{h-j+1}$, since $\eta_{h-j+1}(l,j-1)=m_{h-j+1}$. Therefore, a linear dependency of stored packets arises if and only if there is a non-trivial solution for $(1+\mathbf{e}^T\mathbf{b})\mathbf{x}=0$, where $\mathbf{e}=[e_1,\ldots,e_{m_{h-j+1}}]$, and $\mathbf{b}=[b_1,\ldots,b_{m_{h-j+1}}]$. However, this occurs if and only if $\det\big(I+\mathbf{e}^T\mathbf{b}\big)\neq 0$. Finally, note that this determinant is zero if and only if $\mathbf{b}\mathbf{e}^T=-1$. However, this event occurs with probability $O(\frac{1}{q})$, since the vector $\mathbf{b}$ is chosen uniformly at random from $\f_q^{m_{h-j+1}}$. Therefore, w.h.p. there is no linear dependency induced in the contents of $v_{h-j+1}$ and the occupancy of $v_{h-j+1}$ remains $m_{h-j+1}$.
\end{itemize}
To summarize,
\begin{itemize}
\item[a.] Although the dynamics of the system are driven by the spaces $\{\mcW_i(l,j)\}_{i,l,j}$, the transitions and their probabilities depend only on $\{\eta_i(l,j)\}_{l\geq 0}$, $i=1,\ldots,h-1$, $j=1,\ldots,h$, and not the spaces as such. Therefore, the system can be equivalently modeled using just these occupancy vectors as states.
\item[b.] The transition probabilities for the chain given by occupancies $\{\eta_i(l,h)\}_{l\geq 0}$, $i=1,\ldots,h-1$ approach that of the EMC as the field size is made large.
\end{itemize}
 Finally, since the steady-state probability is a continuous function of the probability transition matrix, the steady-state probabilities of the chain for networks without feedback approaches that of the EMC, thereby guaranteeing that the throughput achieved by the random coding scheme over a line network without feedback is asymptotically the same as that of a line network with identical parameters and perfect feedback.

\end{document}